\newtheorem{definition}{Definition}
\newtheorem{lemma}{Lemma}
\newtheorem{corollary}{Corollary}
\newtheorem{theorem}{Theorem}
\newcommand\numberthis{\addtocounter{equation}{1}\tag{\theequation}}
\newcommand{\powerset}[1]{2^{#1}}
\newcommand{\kl}[2]{D_\mathrm{KL}\left(#1\mathrel\Vert#2\right)}
\newcommand{\prob}{\mathbb P}
\newcommand{\expect}{\mathbb E}
\newcommand{\meas}{\mathcal M_+}
\newcommand{\pmeas}{\mathcal M_1}
\begin{document}
	
	\title{Information dynamics and the arrow of time}
	
	\author{Aram Ebtekar}
	\email{aramebtech@gmail.com}
	\affiliation{Vancouver, BC, Canada
	}
	
	
	\begin{abstract}
		Why does time appear to pass irreversibly? To investigate, we introduce a class of partitioned cellular automata (PCAs) whose cellwise evolution is based on the chaotic baker's map. After imposing a suitable initial condition and restricting to a macroscopic view, we are left with a stochastic PCA (SPCA). When the underlying PCA's dynamics are reversible, the corresponding SPCA serves as a model of emergent time-reversal asymmetry. Specifically, we prove that its transition probabilities are homogeneous in space and time, as well as Markov relative to a Pearlean causal graph with timelike future-directed edges. Consequently, SPCAs satisfy generalizations of the second law of thermodynamics, which we term the Resource and Memory Laws. By subjecting information-processing agents (e.g., human experimenters) to these laws, we clarify issues regarding the Past Hypothesis, Landauer's principle, Boltzmann brains, scientific induction, and the so-called psychological arrow of time. Finally, by describing a theoretical agent powered by data compression, we argue that the algorithmic entropy takes conceptual precedence over both the Shannon-Gibbs and Boltzmann entropies.
	\end{abstract}
	
	\maketitle
	
	\section{Introduction}
	
	The complete trajectory of a dynamical system can be specified in two parts: (1) an \textbf{initial condition} that specifies its configuration at the initial time, and (2) \textbf{dynamics} that specify how the configuration evolves over time. It's widely believed that the Universe exhibits \textbf{CPT symmetry}: under a simultaneous reversal in charge (C) and parity (P), its dynamics are symmetric in time (T). In other words, every video recording remains physically valid when played in rewind \footnote{With minor caveats: the fact that it's CPT instead of just T means that every particle in the rewinded video behaves like the mirror image of its respective antiparticle.}. CPT is a rigorous theorem in axiomatic quantum field theory \citep{jost1957bemerkung}, standing in stark contrast to our everyday experience with irreversible phenomena. Indeed, fragile items drop and shatter, but don't unshatter and undrop; we \emph{remember} yesterday's events, but \emph{plan} tomorrow's agenda; causes precede effects. If the dynamics are truly symmetric, then, by a process of elimination, only a special choice of initial condition can break the symmetry.
	
	In particular, the initial condition must be set in such a way as to imply the \textbf{second law of thermodynamics}: a general principle of physics that forbids the \textbf{entropy} of any closed system from decreasing. Much work has gone into justifying various formal definitions of entropy, along with conditions that would imply the second law. Even if the second law is taken for granted, relating it to causal and decision-theoretic concepts remains a challenge.
	
	From Maxwell's demon to Poincar\'e recurrence to Landauer's principle, the subject of irreversibility has a history fraught with debates and misunderstandings \citep{bennett1982thermodynamics}. To increase confidence in our main results, we take a mathematically rigorous approach in which all modeling assumptions are laid bare. Modeling choices are made in an effort to balance conceptual clarity, generality, and applicability to real physics. We abstract away what we can from physical field theories, substituting a generic class of cellular automata in their place.
	
	Like our Universe, these automata have dynamics that are reversible microscopically, but not macroscopically; as such, they model the emergence of asymmetries with respect to time reversal, i.e., \textbf{arrows of time}. Despite deterministic dynamics, we discover precisely how Markovian probabilities emerge as a macroscopic behavior. By applying ideas from information theory and Pearlean causality to these automata, we collect insights on how the arrow of time operates. In particular, we learn why memories only inform us about the past, and why causes precede effects.
	
	Following the traditional sequence of physics pedagogy, this article only models classical information, leaving extensions to the quantum setting for future work. Our classical approach stands as a rebuttal against suggestions that the psychological arrow of time is a distinctly quantum phenomenon \citep{maccone2009quantum,zurek2018quantum}. The remainder of this introduction develops the background, motivation, and general structure of our approach.
	
	\subsection{Discrete-State Entropy}
	\label{sec:hdiscrete}
	
	Our models are inspired by classical mechanics, which operates on a continuous phase space. Before addressing the latter, we start with the discrete setting, where entropy has a clearer meaning. Even there, entropy can be understood in a number of ways. Intuitively, it's often thought of as a measure of combinatorial complexity, or ``disorder''. Conversely, a \emph{lack} of entropy is often thought of as a resource, akin to free energy; or as knowledge, narrowing down the set of possible states. Loosely speaking, if the object $s$ is a ``typical'' element of some finite set $\mathcal S$, then we'd like to say its entropy is $\log |\mathcal S|$.
	
	To make this precise, we can define entropy as a function of a probability measure $\mu$ on some countable set $\mathcal S$. The \textbf{Shannon entropy} of $\mu$ (also known as its \textbf{Gibbs entropy}) is given by
	
	\begin{align}
		\label{eq:hshannon}
		H(\mu) := \sum_{s\in\mathcal S} \mu(s)\log\frac 1{\mu(s)}.
	\end{align}
	
	The base of the logarithm is not important: since it only changes the entropy by a constant factor, a change of base is effectively just a change of units. For example, when the base is $2$, the entropy is given in \textbf{bits}. Note that when $\mu$ is uniformly distributed over some subset $A\subset\mathcal S$, we indeed have $H(\mu) = \sum_{s\in A}\frac{1}{|A|}\log|A|=\log|A|$.
	
	For non-uniform $\mu$, the Shannon entropy is motivated by supposing that $\mu$ is the joint distribution over a long sequence $s=(s_1,\ldots,s_n)$ from a stationary ergodic process. As a special case, the sequence can be i.i.d. (independent and identically distributed), representing, for instance, the state of $n$ particles in an ideal gas. As $n$ gets large, there is a sense in which $\mu$ approximates the uniform distribution on a ``typical set'' $\mathcal S_\mathrm{typ}$ of about $2^{H(\mu)}$ elements (where $H$ is given in bits). In particular, $H(\mu) \approx \log |\mathcal S_\mathrm{typ}|\approx nh$, where $h$ is the \textbf{entropy rate} of the process \citep[\S16.8]{thomas2006elements}.
	
	Note that $H$ is a function of $\mu$, not of $s$. In other words, the Shannon entropy appears to depend on some subjective measure of our knowledge about $s$, but not on its actual realization. Since we do not yet have the tools to reason about subjective knowledge (see \Cref{sec:appinduction}), we need a notion that's a direct function of $s$.
	
	In statistical mechanics, the most common approach is to \emph{coarse-grain} the state space into a union of \textbf{Boltzmann macrostates}: $\mathcal S = \cup_i\mathcal S_i$. The macrostates $\mathcal S_i$ are typically chosen to be disjoint; if not, identify each \textbf{Boltzmann microstate} $s\in\mathcal S$ with the smallest macrostate containing it. The \textbf{Boltzmann entropy} of $s$ is given by
	
	\begin{align}
		\label{eq:hboltzmann}
		B(s) := \min\{\log\mathcal |\mathcal S_i|: \mathcal S_i\ni s \}.
	\end{align}
	
	Typically, the partition index $i$ corresponds to a collection of macroscopic parameters, such as temperature, pressure, and chemical composition, specified to a reasonable degree of precision. There is debate over the validity of the coarse-graining procedure, since the Boltzmann entropy is very sensitive to the choice of partition.
	
	The theory of computation offers a third definition, which unifies and yields additional insights into the preceding approaches. To apply it, we must fix a universal Turing machine $U$, as well as a digital encoding $\langle s\rangle$ for each element $s\in\mathcal S$. The \textbf{algorithmic $U$-entropy} of $s\in\mathcal S$ (also known as its \textbf{Kolmogorov complexity}) is then defined to be the length of the shortest program $p$ \footnote{In the literature, $p$ is usually restricted to a prefix-free language. \citet[\S2-3]{li2019introduction} discuss in depth the relationship between plain and prefix-free versions of the algorithmic entropy. For the purposes of this article, the two notions are close enough that we can ignore the distinction.}, whose output is $\langle s\rangle$:
	
	\begin{align}
		\label{eq:halgorithmic}
		K(s) := \min\{|p|: U(p)=\langle s\rangle\}.
	\end{align}
	
	If $p$ is written in binary, then $K$ is given in units of bits. Once again, there is some room to fudge the definition, by changing either the universal machine or the encoding. Fortunately, for ``reasonable'' choices, the changes would be very small. For instance, let's see if some other machine $T$ can yield substantially smaller complexities than $U$. The universality of $U$ means that it encodes $T$ as some string $\langle T\rangle$, satisfying $U(\langle T\rangle,p)=T(p)$ for all programs $p$. Thus, the $U$-entropy of a system cannot exceed its $T$-entropy by more than the constant $|\langle T\rangle|$.
	
	For a Universe with an effective minimum length scale (i.e., the Planck length), it's plausible that there would only exist finitely many functionally distinct CPU designs (or Turing machine heads) with diameter smaller than, say, a millimeter. Thus, $U$ can be chosen in such a way that every machine $T$ with a small physical implementation also has a short encoding $\langle T\rangle$ \footnote{If a small physical implementation is discovered for such a $U$, then all candidates for $U$ would have short encodings for one another, so $K$ would be determined to within a small additive term.}. For large complex systems described by very large $|p|$, the contribution $|\langle T\rangle|$ of any small machine becomes negligible in comparison. As for smaller systems, in practice they are always embedded within a larger context: \emph{conditional} forms of the algorithmic entropy or universal prior can be used to quantify a subsystem's contribution to a larger system's entropy \citep[\S3.8 and \S5.2]{li2019introduction}.
	
	Thus, we argue that $K$ is the most objective and fundamental definition of entropy. This perspective finds additional support in \Cref{sec:appcompression}, where compressible strings are shown to serve as a resource. In practice, $H$ and $B$ are often easier to estimate; we now review conditions under which they closely approximate $K$. Obviously, as long as such conditions hold, $H$ and $B$ are approximately invariant to specific details of the chosen distribution or coarse-graining partition.
	
	Let's first look at the Boltzmann entropy $B$. Any coarse-graining yields a \textbf{two-part code} $p$ for each $s\in\mathcal S$, consisting of (1) the macrostate $\mathcal S_i$'s description, and (2) an index of length $B(s)=\log |\mathcal S_i|$, identifying $s$ within $\mathcal S_i$. Suppose that (1) macrostates have short descriptions
	\footnote{The number of macrostates with short descriptions is necessarily limited. This suggests an interesting coarse-graining: we can admit as macrostates \emph{all} subsets of $\mathcal S$ whose descriptions are shorter than some fixed bound. If we do, the Boltzmann entropy becomes the \textbf{Kolmogorov structure function} \citep[\S5.5]{li2019introduction}.
		In practice, macrostates are usually specified in terms of a handful of parameters such as temperature, pressure, and chemical composition. Describing these to a satisfactory degree of precision takes on the order of $10^3$ bits. This is negligible in comparison to the entropy of a macroscopic system, which typically exceeds Avogadro's number ($>10^{23}$ bits). Thus, the code's second part indeed dominates.}
	, and (2) $s$ is \textbf{algorithmically typical} (i.e., Martin-L\"of random \citep[\S2.4]{li2019introduction}) with respect to the uniform distribution on $\mathcal S_i$. Then, the two-part code is approximately optimal, and its length is dominated by the index part \citep[\S5.5]{li2019introduction}. Therefore, $K(s) \approx B(s)$.
	
	Now, let's look at the Shannon entropy $H$. As we saw previously, when $\mu$ describes a sufficiently long sequence from a stationary ergodic process, $H(\mu)\approx\log|\mathcal S_\mathrm{typ}|$. This looks like a Boltzmann entropy with macrostate $\mathcal S_\mathrm{typ}$, this set being determined by the measure $\mu$, which in turn is a function of the underlying process and sample size. To satisfy condition (1) from the previous paragraph, the process should have a short description. With high probability, samples $s$ drawn from $\mu$ satisfy condition (2). When both conditions hold, it follows that $K(s)\approx H(\mu)$. Intuitively, this works because $H$ measures the length of an encoding that's designed to be shortest \emph{on expectation} and, by the law of large numbers, empirical averages converge to expectations. For analyses of the rate of convergence between $H$ and $K$, see \citet{austern2019gaussianity}.
	
	Depending on which entropy is considered, the second law of thermodynamics may be formulated to say that a decrease in entropy is strictly forbidden, or merely unlikely. The Poincar\'e recurrence theorem says that ergodic dynamical systems must visit every state $s\in\mathcal S$ infinitely often. Thus, $B$ and $K$ will eventually be high, and then low again. In contrast, for many of the same systems, $H$ can never decrease \citep[\S4.4]{thomas2006elements}. To reconcile this difference, recall that  Shannon entropy concerns a probability distribution $\mu$, and increases as $\mu$ becomes more uniform. Typical draws from $\mu$ have Boltzmann and algorithmic entropies close to $H(\mu)$. However, on rare occasions, a distribution of high (Shannon) entropy produces a lucky draw of low (Boltzmann or algorithmic) entropy.
	
	Aside from these overwhelmingly unlikely draws, we've seen that in realistic physical settings, the entropies approximate one another. Therefore, we're free to choose whichever we find convenient, with the assurance that our results will transfer. In this article, we choose the Shannon entropy (and the KL divergence that generalizes it; see \Cref{sec:klinfo}), because it offers the conveniences of probability theory. Nonetheless, whenever we formally discuss collections of random variables, the reader is encouraged to informally think of algorithmically typical draws from their distributions. In this manner, we can make sense of such modeling choices as random initial conditions, despite the observable Universe only having one initial state.
	
	To mimic a symmetry of classical mechanics, our model Universe should apply a deterministic and reversible state transition at each time step. This amounts to a fixed permutation on $\mathcal S$. Clearly, the Shannon entropy is invariant to such permutations. Similarly, the algorithmic entropy changes extremely slowly under repeated permutations, since any future state can be computed as a function of the initial state, the permutation, and the number of repetitions. \emph{Substantial} changes in entropy only arise by restricting attention to an incomplete view of $\mathcal S$, allowing entropy to pop in or out of our view. For example, views may be limited in any of the following:
	
	\begin{enumerate}
		\item Numerical precision. The entropy would then be insensitive to microscopic variations, until they are expanded by \emph{chaotic} dynamics.
		\item Spatial extent. Interactions between systems may induce correlations, so that each individual system's entropy increases even if the aggregate entropy does not.
		\item Computational resources. If repetitions of the dynamics are slow to compute, they may increase a system's \emph{time-bounded} algorithmic entropy, defined by restricting \Cref{eq:halgorithmic} to programs $p$ that terminate within a specified time limit.
	\end{enumerate}
	
	We mention the third mechanism only for completeness, leaving its rigorous study to future work. The second mechanism is related to locality, and will be quantified in \Cref{eq:spcadecomp}. However, we focus our modeling efforts on the first mechanism, in which chaos draws out the entropy hidden within microscopic variations, gradually pushing it into the macroscopic world. To ensure a limitless supply of hidden entropy, we now extend the state space to a continuum.
	
	\subsection{Continuous-State Entropy}
	\label{sec:hcontinuous}
	
	In classical Hamiltonian mechanics, the discrete state space $\mathcal S$ is replaced by a continuous \textbf{phase space}. For example, a system of $n$ free particles has its state described by a point in $\mathbb R^{6n}$, representing their 3D positions and momenta. Given a probability measure $\mu$ over phase space, the Shannon (or Gibbs) \textbf{differential entropy} is defined by replacing the sum in \Cref{eq:hshannon} by an integral. A defining characteristic of Hamiltonian evolution is \textbf{Liouville's theorem}: it states that phase volumes, and hence the differential entropy, do not change with time \citep{carcassi2020hamiltonian}.
	
	On the other hand, the continuous setting is conducive to views with limited numerical precision. For concreteness, fix a resolution $\varepsilon>0$. To each point in $\mathbb R^{6n}$, apply the coordinate-wise discretization $x\mapsto\lfloor \frac x\varepsilon \rfloor\varepsilon$. It partitions the phase space into axis-aligned hypercubes of width $\varepsilon$, identified with elements of a discretized state space $\mathcal S:=(\varepsilon\mathbb Z)^{6n}$. Since $\mathcal S$ is countable, all of the discrete-state entropies $H,B,K$ apply. Despite Liouville's theorem, the discretized entropies can be altered by changing the shape of a phase volume, so that it occupies different numbers of $\varepsilon$-cubes.
	
	Note that the computation of $B$ now involves two levels of partitioning: (1) the \emph{fine partition} of continuous phase space into its discretized version, and (2) the \emph{coarse partition} of discretized phase space into Boltzmann macrostates. For the remainder of this article, the word \textbf{microscopic} refers to the full continuous description, whereas \textbf{macroscopic} refers to the discretization into fine macrostates. Although the latter are considered \emph{microstates} in the Boltzmann picture, there will be no ambiguity as we've chosen to focus on the Shannon entropy. In light of the correspondence between $B$ and $H$ in \Cref{sec:hdiscrete}, Boltzmann's coarse macrostates are replaced with probability distributions over fine macrostates.
	
	For certain volume-preserving dynamical systems, we'll find initial conditions that ensure the discretized Shannon entropy never decreases. The idea is to employ a particularly convenient chaotic map.
	
	\subsection{The Baker's Map and the Shift Map}
	\label{sec:baker}
	
	We consider a discrete-time map that shares many properties with Hamiltonian mechanics, but is more tractable. It substitutes the unit square $\mathcal S_\mathrm{baker} := [0,1)^2$ for phase space. Fix $m\in\mathbb N$. The $m$-fold \textbf{baker's map} $\beta:\mathcal S_\mathrm{baker}\rightarrow\mathcal S_\mathrm{baker}$ acts on the square by stretching its $x$-axis while squeezing its $y$-axis, each by a factor of $m$, then cutting and re-arranging to get back the unit square:
	\[\beta(x,y) = \left(mx - \lfloor mx\rfloor,\;
	\frac{y + \lfloor mx\rfloor}m\right).\]
	
	The combination of stretching and squeezing makes $\beta$ volume-preserving and chaotic. It also satisfies a CPT-like symmetry, since
	\begin{equation}
		\label{eq:bakercpt}
		\beta^{-1} = \rho\circ\beta\circ\rho,
	\end{equation}
	
	where $\rho(x,y):=(y,x)$ is a reflection about the line $y=x$. Since $\beta$ is invertible, any (possibly random) initial state $C_0=(X_0,Y_0)\in\mathcal S_\mathrm{baker}$ determines the state at all times $t\in\mathbb Z$ by iterating
	\[C_{t+1} = \beta(C_t),\]
	
	forward and backward. Given the distribution of $C_0$ and the resolution $\varepsilon$, we are interested in the Shannon entropy of $C_t$'s discretization at various $t$. In light of the symmetry in \Cref{eq:bakercpt}, we cannot preferentially guarantee higher entropy at higher $t$.
	
	Instead, provided that the initial distribution is absolutely continuous, the entropy tends toward a maximum in a symmetric fashion, as $t\rightarrow\pm\infty$. Some intuition is offered by the special case in which $C_0$ is uniformly distributed on a union of axis-aligned squares, whose vertex coordinates are multiples of $\delta$, with each of $\varepsilon,\delta$ being integer powers of $\frac 1m$. The reader may verify that the entropy is non-decreasing for $|t| \ge \log_m \frac 1\varepsilon - \log_m \frac 1\delta$, and is maximal at $|t| \ge \log_m \frac 1\varepsilon + \log_m \frac 1\delta$. In a much more general setting, \Cref{thm:markovapprox} will imply an asymptotic version of the second law of thermodynamics as $t\rightarrow\pm\infty$.
	
	$\beta$ is a prototypical example of a chaotic map: it's physically realizable \citep{pitowsky1996laplace} and guides our geometric intuition. It also has a convenient symbolic representation. To that end, let $\mathbb Z_m := \{0,1,\ldots,m-1\}$. The $m$-symbol \textbf{shift map} $\sigma:\mathcal S_\mathrm{shift}\rightarrow\mathcal S_\mathrm{shift}$ is defined on the set of bi-infinite sequences $\mathcal S_\mathrm{shift} := (\mathbb Z_m)^\mathbb Z$, by simply shifting the entire sequence $r\in\mathcal S_\mathrm{shift}$ over by one position:
	\begin{equation}
		\label{eq:shift}
		\sigma(r)_i := r_{i+1}.
	\end{equation}
	
	Despite their different appearances, the maps $\beta$ and $\sigma$ are essentially equivalent, in the sense that the correspondence $\Psi:\mathcal S_\mathrm{shift}\rightarrow\mathcal S_\mathrm{baker}$, defined by
	\begin{equation}
		\label{eq:bakershift}
		\Psi(r) := \left(\sum_{i=1}^\infty r_{i-1} m^{-i} \!\!\!\!\mod 1,\;
		\sum_{i=1}^\infty r_{-i} m^{-i}  \!\!\!\!\mod 1 \right),
	\end{equation}
	
	is \emph{almost} bijective. $\Psi$ interprets the sequence's two halves $(r_{\ge 0},\,r_{<0})$ as representations, in the base-$m$ positional number system, of the $(x,y)$ coordinates of a point in $\mathcal S_\mathrm{baker}$. The only trouble is that points of $\mathcal S_\mathrm{baker}$ with $m$-adic rational coordinates have multiple preimages, due to non-unique representations such as $0.5 = 0.4\overline 9$. Fortunately, these points collectively have measure zero, so we would not expect to encounter them on a random draw. On their complement, one can check that the dynamics are equivalent:
	\[(\beta\circ\Psi)(r) = (\Psi\circ\sigma)(r).\]
	
	In summary, the baker's map $\beta$ serves as a convenient substitute for classical Hamiltonian dynamics, sharing key features such as determinism, reversibility, volume-preservation, chaos, and the second law of thermodynamics. Motivated by the correspondence $\Psi$, we will base our models on the shift map $\sigma$. If the initial condition is suitably randomized, each term in the sequence $r\in\mathcal S_\mathrm{shift}$ acts as a random seed, which can be used to emulate stochastic transitions. We will design macroscopic transitions that depend on only one term, say $r_0$. $\sigma$ shifts these terms like a conveyor belt, providing a fresh $r_0$ at every time step.
	
	As a result, our models' stochastic macroscopic descriptions are \emph{exactly} observationally equivalent to their deterministic microscopic descriptions. We prove it formally in \Cref{thm:equivalence2}. This appears to be not quite true of the real world, where microscopic theories have additional predictive power. Nonetheless, \citet{werndl2009deterministic} argues that such equivalences do hold, either exactly or approximately, in many settings of practical interest.
	
	\subsection{The Psychological Arrow of Time}
	\label{sec:psychologicalarrow}
	
	Had we only wanted a deterministic reversible model with a second law of thermodynamics, then either of the maps $\beta$ or $\sigma$, suitably initialized, would have sufficed. However, the second law is insufficient for explaining the psychological arrow of time. To begin with, it cannot explain the time-reversal asymmetry of memories.
	
	To see why, we must understand what memories are. Intuitively, they are \emph{records} of past events. They contain information about the event in question, but that's not all: while we can also collect information that's predictive of \emph{future} events, we would not call them records. The distinction is clearest with events that depend on highly chaotic interactions, making them hard to predict. The weather is a good example: our records span much of human history (and prehistory, if we include geological evidence of climate variations). In contrast to how much we know about past weather, we cannot reliably forecast even one week into the future. To give another example, all manner of unpredictable scenes can be recorded in a photograph, but only after they occur. In this sense, memories need not even be associated with living beings; rather, their defining feature is the ability to stably capture the information content of an event, no matter how ephemeral.
	
	Just as memories pertain to the past, counterfactuals pertain to the future. Whereas the existence of records suggests that the past is set in stone, the future appears malleable: we imagine various counterfactual possibilities, and plan our lives accordingly. Outside the purview of fundamental physics, nearly all academic disciplines take the arrow of time for granted. This is reflected in their mathematical tools: in particular, Pearl's \textbf{structural causal models} \citep{pearl2009causality} can model both memories and counterfactuals.
	
	Recall that a \textbf{directed graph} is a set of vertices, along with a set of edges (arrows) that point from one vertex to another. A \textbf{causal graph} is a directed graph with no cycles, whose vertices correspond to random variables. Its edges specify causal relationships, pointing from cause to effect. More precisely, they encode two kinds of information \citep{pearl2013structural}:
	\begin{enumerate}
		\item A set of \textbf{structural independence} constraints that the random variables must satisfy.
		\item Rules for modifying the variables' joint distribution under \textbf{structural interventions}.
	\end{enumerate}
	
	A graph's structural independence constraints are formally specified by Pearl's \textbf{$d$-separation criterion} \citep[\S1.2]{pearl2009causality}. A joint probability distribution over a causal graph's random variables is said to be \textbf{Markov}, relative to the given graph, if it satisfies these constraints.
	
	A structural causal model is a causal graph equipped with conditional probability distributions, one for each variable as a function of its parents. These distributions, together with the graph's structural independence constraints, fully determine the variables' joint probability distribution \citep[\S1.2]{pearl2009causality}. In addition to this joint distribution, a causal model also identifies alternative distributions that would arise, if external influences were to intervene on some of its variables.
	
	For an illustrative example, imagine a collection of physical systems that are usually isolated from one another, but occasionally experience pairwise interactions. Let's zoom in on two such systems $A$ and $B$, whose sole interaction within the time interval $[0,3]$ occurs at $t=1.5$. Their causal graph is as follows:
	\begin{align*}
		\cdots\rightarrow A_0\rightarrow A_1&\rightarrow A_2\rightarrow A_3\rightarrow\cdots
		\\&\searrow
		\\\cdots\rightarrow B_0\rightarrow B_1&\rightarrow B_2\rightarrow B_3\rightarrow\cdots
	\end{align*}
	
	$A_t$ is a random variable representing the state of system $A$ at time $t$, and likewise for $B_t$. At non-interaction times, the two systems are modeled as independent Markov chains. During the interaction, $A_1$ is said to \emph{causally influence} $B_2$. Two interpretations of this graph offer different insights: either (1) we focus on $A$, a system whose data is \emph{recorded} by the memory $B$; or (2) we focus on $B$, a system whose dynamics are \emph{intervened} upon by the causal actor $A$. Let's discuss each interpretation in turn.
	
	\subsubsection{Memories}
	\label{sec:intromemory}
	
	As a consequence of the $d$-separation criterion, any pair of random variables without a common ancestor (i.e., a common cause) must be independent. Thus, dependent random variables must carry information about some common cause in the past. In our example, $A_t$ and $B_u$ share the common ancestor $A_1$, for $t\ge 1$ and $u>1$. Earlier states, such $A_0$ and $B_0$, can only be dependent if they share a common ancestor in the more distant past (implied by the ellipsis).
	
	Motivated by our observation that memories can capture ephemeral events, let's imagine that $A$ is a rapidly mixing system, so that the sequence $(A_t)_{t\in\mathbb Z}$ is i.i.d. All directed edges (arrows) in the first row are effectively erased, leaving $A_1$ as the only possible common ancestor. By the $d$-separation criterion, $A_t$ and $B_u$ can only be dependent if $t=1<u$. Thus, memories are indeed of the past.
	
	Now let's drop the i.i.d. assumption. The dependence between $A_t$ and $B_u$ can be quantified by their \textbf{mutual information} $I(A_t;B_u)$ (see \Cref{def:mutinf}). The data processing inequality \citep[\S2.8]{thomas2006elements} implies $I(A_{t'};B_{u'}) \ge I(A_t;B_u)$ for $t'\le t$ and $u'\le u$, provided that no interactions occur between the times involved. In other words, information loss is unrestricted, but information \emph{gain} requires interaction. Thus, all dependences can be traced backward through time to some past interaction (or to dependent initial conditions).
	
	The $d$-separation criterion and the data processing inequality, much like the second law of thermodynamics, impose asymmetries with respect to time reversal. To see how they work in concert together, let's decompose the systems' joint entropy (dropping the subscripts for convenience):
	
	\begin{equation}
		\label{eq:mutinfsimple}
		H(A,\, B) = H(A) + H(B) - I(A;\, B).
	\end{equation}
	
	The composite system $(A,B)$ is closed throughout the time interval $[0,3]$, so the second law implies that $H(A,B)$ is non-decreasing. In general, when $A$ and $B$ interact, their joint entropy may arbitrarily redistribute among the three right-hand terms. However, when $A$ and $B$ are closed off from each other, the second law applies to each system individually, so that $H(A)$ and $H(B)$ are also non-decreasing.
	
	Now we see why the second law on its own is insufficient: it permits an increase in $I(A; B)$, provided that either $H(A)$ or $H(B)$ simultaneously undergo an equal or greater increase. In other words, the second law does not rule out spontaneous information gain between isolated systems. Instead, the required time-reversal asymmetry is given by the data processing inequality. Together with the second law, it ensures that all three right-hand contributions in \Cref{eq:mutinfsimple} are non-decreasing.
	
	An interesting consequence is that when $A$ and $B$ are closed off from each other, an increase in any one of the right-hand terms cannot be compensated for by a decrease in another; therefore, it necessarily induces an equal or greater increase in $H(A,B)$. This is an extension of Landauer's principle, detailing the irrevocable thermodynamic cost of erasing information; see \Cref{sec:applandauer}.
	
	\subsubsection{Counterfactuals}
	\label{sec:introcounterfactual}
	
	Let's examine the relationship between cause and effect. We say that rain causes wet grass, whereas wet grass does not cause rain. The inelastic collisions between rain drops and grass increase entropy, rendering them thermodynamically irreversible. Note that while a \emph{typical} configuration of wet grass does not evolve into rain, our wet grass is decidedly \emph{not} typical, because evolving it backward reveals a past in which it did rain. The forward evolution is distinguished in that it can be predicted by treating the configuration \emph{as if} it were typical, using homogeneous transition probabilities. In contrast, the backward evolution seemingly conspires to extract the grass's moisture and blast it up into the sky. The emergence of homogeneous forward dynamics will be a key feature of our models.
	
	However, our causal intuitions are not confined to processes that are intrinsically irreversible. For example, consider a frictionless pendulum with a period of two seconds. Since its trajectory is periodic, any point on the trajectory causes all others; nonetheless, we prefer to think of earlier states as causing later states. This makes sense in the context of exogenous influences: if we place the pendulum to the left of its pivot, then we would find it on the right a second later, whereas the placement has no retroactive effect a second prior.
	
	More generally, we imagine being able to perform an \textbf{intervention} on any random variable of a causal model, assigning it some value and detaching it from its parents in the graph. Effects then propagate to the variable's descendants, yielding a \textbf{counterfactual} outcome. This counterfactual presents an alternative trajectory for the Universe, in apparent violation of determinism. Moreover, since the edges are directed forward in time, the intervention changes its future while holding its past fixed, in apparent violation of CPT symmetry. Despite violating fundamental physics principles, counterfactual reasoning is ubiquitous in real-life applications \citep{pearl2009causality}.
	
	At this point, we remark that our investigation inevitably touches upon philosophical themes; nonetheless, all of our questions are scientifically motivated. For instance, while we do not concern ourselves with the question of whether free will is ``real'', we must reckon with the empirical fact that intelligent life displays the behaviors associated with free will. These include reasoning in terms of counterfactuals, in order to plan for the future (but not for the past). How can physics predict this phenomenon?
	
	Consider that practical models are necessarily incomplete. Returning to our casual model above, rather than draw it in full, we might only model the system $B$:
	
	\begin{align*}
		\cdots\rightarrow B_0\rightarrow B_1&\rightarrow B_2\rightarrow B_3\rightarrow\cdots
	\end{align*}
	
	Viewed in isolation, $B$ is a Markov chain, following its own dynamics in the absence of external influence. For example, $B$ might be our frictionless pendulum. The interaction, then, appears as an exogenous intervention on $B_2$, modifying its $B_1$-conditioned probability distribution. The unmodeled variable $A_1$ encodes the nature of the intervention. As we saw when discussing memory, information about $A_1$ only appears in $B$'s future, explaining why counterfactuals do not alter the past. If some intelligent agent has partial or total control of $A_1$'s value, then it makes sense for the agent to consider each counterfactual possibility, exercising its ``will'' to choose $A_1$ according to the desired effect on $B$.
	
	In general, the behavior of agents is determined by Darwinian evolution. In \Cref{sec:appparadox}, we propose an explanation for evolution's arrow of time; thus, agents seek to optimize their future. If an agent possesses sufficient cognitive resources, it may employ a model-based planning algorithm \cite{doll2012ubiquity}. From the algorithm's point of view, choices are made ``freely'', in the following sense: the algorithm can simulate the counterfactual outcomes of a variety of different choices, and realize whichever one it prefers. This is true despite the algorithm having a mathematically determined (or possibly randomized) output.
	
	This discussion leaves us in an awkward position: causal models beautifully describe the cognitive world of cause, effect, memory, and planning, i.e., the psychological arrow of time; however, they seem quite far removed from fundamental physics, with its determinism and CPT symmetry. In this article, we extend the shift map, in two stages, toward the ultimate aim of deriving causality as an emergent structure. The result is a type of cellular automaton that, even more so than the baker's and shift maps, exhibits key features of classical mechanics.
	
	By studying these automata, we find that the fundamental time-reversal asymmetry takes one of two forms, depending on whether we take a microscopic or macroscopic view. At the microscopic level, symmetry is broken by a Past Hypothesis, consisting of an absolutely continuous initial condition. At the macroscopic level, this corresponds to dynamics that are homogeneous in space and time, and Markov relative to a causal graph who edges are timelike and future-directed. As a consequence, we can prove generalizations of the second law of thermodynamics. These lead to insights on a variety of topics related to the arrow of time.
	
	\subsection{Article Outline}
	\label{sec:outline}
	
	\Cref{sec:related} reviews some literature that provides context for our contributions. Then, \Cref{sec:prelim} sets notational conventions. It also introduces useful lemmas from information theory \citep{thomas2006elements}, reformulated to suit the analysis of general Markovian dynamics. Our core mathematical contributions are split into two stages, developed in \Cref{sec:markov,sec:spca} respectively.
	
	\Cref{sec:markov} presents the first stage, extending the shift map in such a way as to emulate arbitrary time-homogeneous Markov chains. We discuss the extent to which Markov chains exhibit or violate time-reversal symmetry, and prove a precise correspondence between three presentations of the dynamics. Two of these are stochastic, corresponding to macroscopic views of the third, deterministic presentation. We apply this theory to generalize the Past Hypothesis, and then ask what happens at negative times, before the initial condition.
	
	\Cref{sec:spca} presents the second stage, enriching the Markov chains with spatial structure that turns them into \textbf{stochastic partitioned cellular automata} (SPCAs). The local dynamics of an SPCA makes it possible to discuss multiple systems, each having their own entropy, and correlations between them. We find that SPCAs are Markov relative to a timelike future-directed causal graph. Moreover, they satisfy the Resource and Memory Laws, which correspond to the second law of thermodynamics and the data processing inequality, respectively.
	
	\Cref{sec:discussion} explores applications of our work to a variety of topics, in and out of physics. In some cases, we obtain novel extensions of existing theory; in others, we simply demonstrate the increased clarity and precision offered by SPCAs and their information-theoretic dynamics.
	
	Finally, \Cref{sec:conclusions} concludes with some suggestions for future work.
	
	\section{Related Works}
	\label{sec:related}
	
	The arrow of time is too broad a topic to comprehensively review in a few pages. We make our best effort to emphasize the most pertinent literature that provides context for this article.
	
	\subsection{Literature on Dynamical Systems}
	\label{sec:relateddyn}
	
	For simplicity's sake, we focus on the discrete-time (but continuous-state) setting, to which \citet{devaney2021introduction} provides an accessible introduction. In practice, we cannot observe a system's state precisely, so let's imagine we instead observe a discrete-valued function of the state; i.e., the set of possible observations is countable. Then, for all practical purposes, a state may be identified with its trajectory's bi-infinite sequence of past, present, and future observations. The shift map describes the system's dynamics in this sequence representation, the study of which is called symbolic dynamics \citep{lind2021introduction}.
	
	Despite the system's determinism, if its initial state is randomly sampled, the corresponding sequence of observations becomes a discrete-state stochastic process. \citet{werndl2009deterministic} examines deterministic and stochastic systems that are \emph{observationally equivalent}, in the sense that the former's observations are distributed like the latter. Under reasonable hypotheses, she proves that Bernoulli dynamical systems are approximately observationally equivalent, at every resolution, to irreducible and aperiodic Markov chains. In \Cref{thm:equivalence2}, we prove exact observational equivalence, between general Markov chains and an extension of the shift map. Under more general initial conditions, \Cref{thm:markovapprox} turns this into an approximate equivalence. We also relate properties between the two systems, showing how measure-preservation and time-reversal (on the deterministic side) map to stationarity and duality (on the stochastic side).
	
	The literature includes several dynamical systems modeling the second law of thermodynamics. \citet{schack1996chaos} interleave a chaotic system's dynamics with random measure-preserving perturbations. Unfortunately, their perturbations sacrifice determinism and CPT symmetry. For some simple dynamical systems, \citet{nicolis1988master} identify suitable initial conditions and coarse-grainings that ensure the Markov property. \citet{pitowsky1996laplace} also studies generalized shift maps, but for the purposes of modeling deterministic Turing machines. More recently, researchers such as \citet{parrondo2015thermodynamics} are looking more closely at the thermodynamics of correlated systems.
	
	Perhaps the closest existing model to ours is the multibaker chain of \citet{gaspard1992diffusion}. It employs the baker's map to gradually extract a source of randomness that's set at initialization. By doing so, it deterministically and reversibly emulates a random walk on $\mathbb Z$, i.e., a discrete diffusion process. Our first-stage model in \Cref{sec:markov} generalizes the multibaker chain, emulating not only random walks, but arbitrary Markov chains.
	
	Continuous-state analogues of the entropy rate \citep[\S4.2]{thomas2006elements} are given by the Kolmogorov-Sinai entropy \citep{entropybook} and the topological entropy \citep{adler1965topological}. These quantities measure the rate at which a system's trajectories acquire new information, as a result of expanding once-hidden microscopic variations. For the $m$-symbol shift map, they equal $\log m$. \citet{latora1999kolmogorov} describe some circumstances in which the Kolmogorov-Sinai entropy approximates the rate of entropy growth in dynamical systems.
	
	Some concepts from the computer science literature are helpful. \citet{bennett1982thermodynamics} reviews connections between logical and physical descriptions of entropy and reversibility. We draw upon the design of reversible and partitioned cellular automata (PCAs) \citep{morita1989computation,kari2018reversible}, essentially fusing them with causal models to create SPCAs. In the context of reversible computing, \citet{frank2018physical} anticipates thermodynamic consequences for the data processing inequality.
	
	Subtle issues come up when comparing Gibbs and Boltzmann entropies, largely revolving around ambiguous coarse-grainings and subjective probabilities \citep{frigg2019statistical}. Algorithmic entropy provides a firm theoretical grounding that minimizes these ambiguities. Its foundations date back to Occam's razor, and it appears in some ambitious accounts of inductive inference and artificial intelligence \citep{solomonoff1964formal,hutter2004universal,rathmanner2011philosophical}. \citet{zurek1989algorithmic} introduces the algorithmic entropy to physics. \citet{wallace2005statistical} describes two-part codes in depth, and adds some speculations on the arrow of time. Last, but not least, \citet{li2019introduction} provide the standard reference on algorithmic information theory, surveying applications to physics in their final chapter.
	
	\subsection{Literature on the Psychological Arrow of Time}
	\label{sec:relatedtime}
	
	The psychological arrow of time is intimately tied to the sense that causes precede effects. Historically, before we had a mathematical language in which to discuss them, causal concepts were subject to misunderstanding, controversy, and even outright dismissal in the scientific community. That changed with the introduction of Pearl's structural causal models, a powerful methodology for causal inference, applicable to a wide range of disciplines \citep{pearl2009causality}.
	
	In this article, we understand causality exactly as Pearl does. In particular, we discuss counterfactuals using his structural semantics; though, see \citet[\S7.4.1]{pearl2009causality} for a comparison against Lewis' closest-world semantics, and \citet{loewer2020mentaculus} for semantics based on probabilistic conditioning. One advantage of the structural account is that, because it need not alter the past, it deals more cleanly with implausible (yet still useful) counterfactuals.
	
	Ties between Pearlean causality and dynamical systems are sparser in the literature. The numerical study by \citet{paluvs2018causality} presents insights on some causality detection methods.
	
	\citet{janzing2016algorithmic} link causality to the algorithmic independence of initial conditions and dynamics. In the deterministic setting, this approach predicts the algorithmic entropy to grow unrealistically slowly: the length of a later state's description, in terms of only the starting state, dynamics, and time elapsed, grows at most logarithmically in the time elapsed. On the other hand, their result makes a lot more sense in the context of random dynamics, where the description length can grow linearly in the time elapsed. Using \Cref{def:markovr}, we offer the same conclusions in probabilistic, rather than algorithmic, terms.
	
	In a deeply thought-provoking treatise, \citet{albert2001time} elucidates the philosophical issues in reconciling statistical physics with the psychological arrow of time. He argues, alongside \citet{loewer2020mentaculus}, that the arrow of time emerges from the \emph{Mentaculus}. This is a probability distribution over Universes, defined by extrapolating a \emph{Past Hypothesis} in which the Big Bang is sampled uniformly from a low-entropy macrostate. Our article supports this conclusion with rigorous theorems. We also extend the Past Hypothesis to non-uniform continuous distributions, yielding a generalized Mentaculus in \Cref{eq:markovapprox1,eq:markovapprox2}.
	
	In the search for rigorous explanations, a common line of attack takes the second law of thermodynamics as given, and focuses on what appears to be a basic component of the psychological arrow: memory. After proposing a definition for memory, some authors proceed to argue that its operation must align with the second law's arrow of time. In light of our observation following \Cref{eq:mutinfsimple}, that the second law alone is insufficient, this approach is unlikely to succeed. Even so, it's helpful to examine some of these attempts with care.
	
	\citet{wolpert1992memory} distinguishes between two types of memory systems: \emph{computer-type} and \emph{photograph-type}. Both types aim to possess information at some time $t_r$, about an event occurring at some other time $t_e$. A computer-type memory has access to so much state information at $t_r$, that it can deduce the event at $t_e$ by directly simulating the dynamical evolution of the Universe. Such ``memories'' have no arrow of time, with both $t_e<t_r$ and $t_e>t_r$ being admissible, but they may require unrealistic levels of precision. As such, they are more like weather predictions rather than reliable records.
	
	Our notion of a memory aligns better with Wolpert's photograph-type which, while less demanding in terms of precision and computation, requires initialization. Wolpert argues that real-world initialization procedures result in a net increase in entropy, forcing the memory to align with the thermodynamic arrow. Unfortunately, this is begging the question: the thermodynamic arrow makes it so nearly \emph{all} real-world processes increase entropy. If anything, the many-to-one mapping that defines an initialization seems predisposed to \emph{lose} entropy, which can only be done in reverse; in order to initialize forward, entropy must be released into the environment just to stay in accordance with the second law. A perfectly efficient initialization procedure would be reversible, with the entropy gains and losses exactly canceling \citep{bennett1982thermodynamics}.
	
	\citet{mlodinow2014relation} argue that even a reversibly implemented memory can only function in the direction of the thermodynamic arrow. Their thought experiment consists of a pair of connected chambers containing elastic particles, and a counter that tracks the net flow of particles from one chamber to the other. Instead of treating initialization explicitly, the counter is assumed to have a known value at $t_e$. Thus, reading the counter's value at $t_r$ is enough to infer the net flow of particles during the time interval between $t_e$ and $t_r$, acting as a memory of this interval. In order for this memory to be useful, it should be capable of recording more than one possible answer, which the authors ensure by applying a small random perturbation to the particles at $t_e$. This directs the thermodynamic arrow of time away from $t_e$, presumably toward $t_r$, thus aligning the two arrows.
	
	Their argument has a few loose ends. Firstly, it doesn't apply if the particles are at thermodynamic equilibrium, as no thermodynamic arrow would be discernible from their evolution. Secondly, it's unclear how one arrives at knowledge of the counter's value at $t_e$. And finally, a random perturbation at $t_e$ cannot be a realistic model of uncertainty: it would direct the thermodynamic arrow away from $t_e$ in \emph{both} directions, violating the Universe's low-entropy initial condition. While the argument captures a valid intuition, that memories should be robust to counterfactual alternatives, such a principle demands careful justification.
	
	\citet{rovelli2020memory} presents a physically plausible model consisting of two near-equilibrium systems with long thermalization times: a cooler memory and a hotter environment. In a follow-up paper, \citet{rovelli2020agency} views the environment as an agent, and its random interactions with the memory as choices. Thus, he concludes that the exercise of free will increases entropy, and must therefore align with the thermodynamic arrow. His account of free will in terms of randomness disagrees with our account in \Cref{sec:introcounterfactual}, where we equated free will with counterfactual-based planning. Our account is agnostic to whether the output is deterministic or randomized; thus, it also stands in contrast to accounts of free will that require quantum or Knightian forms of uncertainty, including the freebit picture of \citet{Aaronson2016TheGI}.
	
	On the other hand, since causal models are free to invoke Markovian randomness (e.g., sampling independently of the past), we also question \emph{superdeterminism} of the sort suggested by \citet{t2016cellular} in his cellular automaton interpretation of quantum mechanics. Superdeterminism proposes that our actions must somehow conspire to meet certain constraints on future outcomes. The trouble with conspiracies is that one must make them sufficiently powerful to support their proponent theories, while retaining the appearance of ``free'' Markovian dynamics \citep{wharton2020colloquium}.
	
	In unfinished work independent of our own, \citet{carroll2022causality} shares our goal of deriving Pearlean causality from a suitable Past Hypothesis. While it's too early for a technical comparison, we disagree with one of his premises. Carroll writes: \emph{``Increasing entropy is the sole reason for the asymmetry between past and future states in the macroscopic world.''} In \Cref{sec:intromemory}, we saw that increasing entropy does \emph{not} imply the data processing inequality. On the other hand, we will find that all of these asymmetries follow from a suitable causal model, which in turn emerges from a suitable pairing of dynamics with a Past Hypothesis.
	
	While entropy's monotonicity makes it convenient to study, \citet{bennett1988logical} (see also \citet[\S7.7]{li2019introduction}) argues that since both low-entropy (e.g., all zeros) and high-entropy (e.g., uniformly random) strings can be uninteresting, another quantity is needed to capture the ``interestingness'' of life's structures, such as its DNA. For Bennett, the interesting strings are those whose most plausible explanations involve long and difficult computations, such as Darwinian evolution on geological scales. Thus, he defines an object's \emph{logical depth} to be the time it takes to execute its shortest description. Stated differently, a logically deep object is one with low algorithmic entropy, but high time-bounded algorithmic entropy.
	
	\citet{aaronson2014quantifying} observe a variety of systems in which logical depth rises during early stages of dynamical evolution, and then falls during the approach to thermodynamic equilibrium. While their evidence is largely empirical, we can understand this phenomenon in terms of Bennett's slow growth law \citep{bennett1988logical}. Under Markovian dynamics, the law states that logical depth cannot (with substantial probability) rise quickly. If the system begins in a logically shallow state, then high logical depth can only arise from a long, slow computation \footnote{This argument parallels \Cref{sec:intromemory} where, instead of the slow-growth law for logical depth, the data processing inequality acted as a ``no-growth law'' for mutual information. There, we concluded that all correlations can be traced back in time to a common cause.}. Thus, logically deep objects cannot exist too early in the system's lifetime. They also cannot exist too late, when the system settles in a very shallow (e.g., uniform) equilibrium distribution. It follows that logically deep objects, if they appear at all, must form gradually at intermediate times.
	
	Quantum entanglement is stronger than classical correlation, in that it can produce a composite system with less entropy than either of its parts. For example, it's possible to produce local entropy by measuring a quantum system, while keeping the global entropy at zero; a global operation can then erase all traces of this measurement and its entropy. \citet{maccone2009quantum} argues that whenever entropy decreases for some local observer (let's name her Alice), any trace or memory must be erased so that she cannot recall the decrease. This argument is imprecise: while it's true that Alice can only retain as much information as the final entropy of her system, a reduction in her entropy need not necessarily eliminate evidence that the entropy used to be higher. Indeed, rather than preserve a high-entropy measurement in its entirety, Alice can simply summarize its entropy, perhaps rounded to a few significant figures. A summary has less entropy than the full measurement, so it can survive a reduction in entropy. If she later measures a lower entropy, she would notice the decrease.
	
	Finally, \citet{heinrich2016entropy} produce simulations as evidence that natural selection favors agents with knowledge of the past, over those with knowledge of the future. However, their arguments do not consider non-living memories (e.g., geological records), nor alternative scenarios in which knowledge of the future may be more advantageous. They also present no plausible mechanism for natural selection between forward-thinking and backward-thinking organisms.
	
	We should emphasize that, despite calling it the psychological arrow, the subject of this article is fundamentally physical (or more abstractly, computational), applying equally well to non-living records. Our physics-first approach has the advantage of applying not only to certain species of animals, but to \emph{all possible brains}, wherever the laws of physics hold. That being said, we do not treat the animal psychology or neuroscience of time perception. Being ill-positioned to do these literatures justice, we only point to the review of \citet{grondin2010timing} on the perception of short time intervals, and to the study by \citet{gauthier2019building} on perception of the more distant past and future.
	
	\section{Technical Preliminaries}
	\label{sec:prelim}
	
	\subsection{Notational Conventions}
	
	Throughout this article, $\mathbb R^+$ denotes the non-negative real numbers, $\mathbb Z^+,\mathbb Z^-$ the non-negative and non-positive integers, $\mathbb N:=\mathbb Z^+\setminus\{0\}$ the natural numbers, and $\mathbb Z_m:=\{0,1,\ldots,m-1\}$ denotes the first $m$ elements of $\mathbb Z^+$. Let $\oplus$ denote the bitwise XOR on $\mathbb Z^+$, e.g., $5\oplus 3 = 6$.
	
	Uppercase Latin letters $A,B,C$ denote generic sets or random variables, while lowercase letters denote elements of the corresponding sets $a\in A,\,b\in B$, or realizations of random variables. $\powerset A$ denotes $A$'s power set, consisting of all subsets of $A$. Script letters are used for $\sigma$-algebras $\mathcal F,\mathcal G$, as well as for three important sets that are treated as fixed in most contexts: the state space $\mathcal S$ and the discrete geometry $(\mathcal X,\mathcal T)$. The Greek letters $\mu,\nu,\pi,\Gamma$ denote measures.
	
	$B^A$ denotes the set of functions from a domain $A$ to a codomain $B$. Functions are also called indexed collections or, when $A$ is $\mathbb N$ or $\mathbb Z$, sequences. The notation $f:A\rightarrow B$ is synonymous with $f\in B^A$, and means that each $a\in A$ maps to $f(a)=f_a\in B$. The image of a set $A'\subset A$ under $f$ is $f(A') := \{f(a):a\in A'\}$. $\mathrm{Bij}(A)\subset A^A$ denotes the set of bijections from $A$ onto itself; these functions are synonymously called bijective, invertible, or reversible. The notation $x\mapsto f(x)$ is synonymous with $f$ itself, and is useful for quickly referring to ``anonymous'' functions without naming them. Given either $f:A\rightarrow B$ or $f:A\setminus\{a\}\rightarrow B$, the substitution $f_{a\leftarrow b}:A\rightarrow B$ equals $f$ everywhere except at $a$, where it equals $b$ instead. That is,
	
	\begin{align*}
		f_{a\leftarrow b}(a') &:= \begin{cases}
			b &\text{if }a'=a,
			\\f(a') &\text{if }a'\ne a.
		\end{cases}
	\end{align*}
	
	Some intuitive shorthands are used: for instance, $f_{A'}$ denotes the restriction of $f$ on the set $A'\subset A$. If $A$ is ordered, $f_{\le a}$ is shorthand for $f_{\{a'\in A:\; a'\le a\}}$. If $f$ takes multiple arguments, we allow partial application from left to right, so that, e.g., $((f_a)_b)_c := (f_{a,b})_c := f_{a,b,c}$.
	
	We want to make correct use of measure theory without overburdering ourselves with notation and unimportant details. We can usually avoid mentioning $\sigma$-algebras by assuming some defaults: for a countable set $A$, take it to be $\powerset A$; for subsets of $\mathbb R^d$, take the Borel $\sigma$-algebra; for a product of sets that are each equipped with $\sigma$-algebras, take it to be generated by the corresponding cylinder sets. On the implied $\sigma$-algebra of $A$, $\pmeas(A)$ denotes the set of probability measures, and $\meas(A)$ denotes the set of measures that are finite on singletons and not everywhere zero. Thus, $\pmeas(A) = \{\mu\in\meas(A): \mu(A)=1\}$. In a slight abuse of notation, we identify singleton sets with their element, so that $\mu(a) := \mu(\{a\})$. In this manner, when $A$ is countable, it's convenient to identify measures with real-valued functions of $A$:
	\begin{align*}
		\pmeas(A)&:=\{\mu\in(\mathbb R^+)^A: \sum_{a\in A}\mu(a) = 1\}
		\\\subset\meas(A)&:=\{\mu\in(\mathbb R^+)^A: \sum_{a\in A}\mu(a) > 0\}.
	\end{align*}
	
	A few measures are special enough to merit their own symbols. $\lambda^d\in\meas(\mathbb R^d)$ denotes the Lebesgue (i.e., volume) measure. For $a\in A$, $\delta_a\in\pmeas(A)$ denotes the point mass that puts probability one on $a$. $\sharp:=\sum_{a\in A}\delta_a$ denotes the counting measure, i.e., for $A'\subset A$, $\sharp(A') := |A'|$.
	
	At times, we describe a collection of conditional distributions consistent with a causal graph, thus uniquely determining the variables' joint distribution. In such cases, it's implied that all the variables live in a common probability space $(\Omega, \mathcal F, \prob)$. That is, the random variables are assumed to be defined on some sample space $\Omega$, equipped with the $\sigma$-algebra $\mathcal F\subset 2^\Omega$ generated by the variables, and a probability measure $\prob\in\pmeas(\Omega)$. When referring to a random variable $C:\Omega\rightarrow A$, we abuse notation and write $C\in A$. Depending on context, the expression $F(C)$ may refer to either: (1) $F(\prob_C)$, a function of $C$'s distribution; or (2) $\omega\mapsto F(\omega)(C(\omega))$, a random variable obtained by applying a (possibly random) function to $C$'s realization.
	
	We use some customary shorthands, such as $\prob(C=c):=\prob(\{\omega\in\Omega:\; C(\omega)=c\})$. The marginal and conditional distributions of $C$ are written $\prob_C,\prob_{C\mid E}\in\pmeas(A)$, respectively, for any event $E\in\mathcal F$. For example, $\prob_C(c) := \prob(C=c)$, and $\prob_{C\mid E}(c) := \prob(C=c\mid E)$. Indexed collections of random variables $\mathbf C = (C_a)_{a\in A}$ may be bolded for emphasis. Via the minor abuse of swapping argument order, the whole collection $\mathbf C$ can also be thought of as a random variable: $\mathbf C(\omega)_a := C_a(\omega)$.
	
	Finally, the total variation norm of a signed measure $\mu$ is denoted by
	\[\lVert \mu \rVert_\mathrm{TV} := \sup_{A\in\mathcal F} |\mu(A)|.\]
	
	It's used to express the total variation distance between probability measures $\mu,\nu$ as $\lVert \mu-\nu \rVert_\mathrm{TV}$.
	
	\subsection{The Kullback-Leibler Divergence}
	\label{sec:klinfo}
	
	All of our information-theoretic quantities are derived from this definition.
	
	\begin{definition}
		Let $\mathcal S$ be a countable set, $\mu\in\pmeas(\mathcal S),$ and $\pi\in\meas(\mathcal S)$. The \textbf{Kullback-Leibler divergence} of $\mu$ relative to $\pi$ is
		\[\kl\mu\pi
		:= \sum_{s\in\mathcal S}\mu(s)\log\frac{\mu(s)}{\pi(s)}.\]
	\end{definition}
	
	Following standard conventions, terms with $\mu(s)=0$ are treated as $0$, and terms with $\mu(s)>\pi(s)=0$ are treated as $+\infty$ \footnote{When the sum's positive and negative terms both diverge to infinity, the result is ill-defined. For definiteness, we may define $\kl\mu\pi$ to be either of $\pm\infty$ in this case.}. When the logarithm's base is $2$, the KL divergence is measured in units of \textbf{bits}; when the base is $e$, it's measured in \textbf{nats}.
	
	When $\pi$ is also a probability measure, it's customary to think of $\kl\mu\pi$ as a kind of ``distance'' between $\mu$ and $\pi$ \citep[\S2.3]{thomas2006elements}. However, in this article, a different interpretation is more useful: $\kl\mu\pi$ should be thought of as the \emph{precision} with which samples from $\mu$ are known, meaning, loosely speaking, the smallness of the $\pi$-measure of the ``typical support'' of $\mu$. For a concrete example, let $\mu$ be \textbf{$\pi$-uniform} on a \textbf{$\pi$-finite} support $A\subset\mathcal S$; that is, suppose $\pi(A)<\infty$, and let
	\[\mu(s):=
	\begin{cases}
		\pi(s)/\pi(A) &\text{if }s\in A,
		\\0 &\text{if }s\in\mathcal S\setminus A.
	\end{cases}\]
	
	Then, it's easily verified that $\kl\mu\pi = \log\frac 1{\pi(A)} = \log\frac 1{\sum_{s\in A}\pi(s)}$. In general, the \emph{least} precise distributions are those that are spread $\pi$-uniformly everywhere, whereas the \emph{most} precise are focused on a single element with small $\pi$-measure:
	\begin{lemma}
		\label{lem:klbounds}
		For fixed $\pi\in\meas(\mathcal S)$,
		\begin{align*}
			\inf_{\mu\in\pmeas(\mathcal S)}\kl\mu\pi
			&= \log\frac{1}{\sum_{s\in\mathcal S}\pi(s)},
			\\
			\sup_{\mu\in\pmeas(\mathcal S)}\kl\mu\pi
			&= \log\frac{1}{\inf_{s\in\mathcal S}\pi(s)}.
		\end{align*}
	\end{lemma}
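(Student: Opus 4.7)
The proof splits naturally into two halves, one for each bound, and both reduce to the classical Gibbs inequality plus some bookkeeping to accommodate the fact that $\pi$ need not be a probability measure.

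For the infimum, write $Z := \sum_{s\in\mathcal S}\pi(s) \in (0,\infty]$. When $Z$ is finite, the renormalization $\tilde\pi := \pi/Z$ is a probability measure on $\mathcal S$, and a direct algebraic manipulation gives $\kl\mu\pi = \kl\mu{\tilde\pi} - \log Z$. The standard Gibbs inequality $\kl\mu{\tilde\pi}\ge 0$ (with equality iff $\mu=\tilde\pi$), which the paper can either cite from Cover--Thomas or derive in one line from the concavity of $\log$, then yields $\kl\mu\pi\ge -\log Z = \log(1/Z)$, attained by $\mu := \tilde\pi$. When $Z=\infty$, so that $\log(1/Z) = -\infty$, I would exhibit a sequence of $\pi$-uniform distributions on an increasing family of finite sets $A_n\uparrow\mathcal S$: by the worked example preceding the lemma, $\kl{\mu_n}\pi = \log(1/\pi(A_n)) \to -\infty$, which establishes the infimum.

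For the supremum, let $\pi_{\min} := \inf_{s\in\mathcal S}\pi(s) \in [0,\infty)$. The key identity is $\kl\mu\pi = -H(\mu) + \sum_{s}\mu(s)\log\frac{1}{\pi(s)}$, where $H$ is the Shannon entropy (equation~\eqref{eq:hshannon}). Nonnegativity of $H$ handles the first term, and the pointwise bound $\log(1/\pi(s))\le \log(1/\pi_{\min})$ gives $\sum_s\mu(s)\log(1/\pi(s))\le \log(1/\pi_{\min})$. Adding these yields $\kl\mu\pi\le \log(1/\pi_{\min})$. For sharpness, if the infimum $\pi_{\min}$ is attained at some $s^*$, then $\mu := \delta_{s^*}$ achieves equality, since both $H(\delta_{s^*})=0$ and $\log(1/\pi(s^*))=\log(1/\pi_{\min})$. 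If the infimum is not attained (in particular, if $\pi_{\min}=0$ but $\pi(s)>0$ for all $s$), I would instead take a sequence $s_n$ with $\pi(s_n)\to\pi_{\min}$ and set $\mu_n := \delta_{s_n}$, obtaining $\kl{\mu_n}\pi = \log(1/\pi(s_n)) \to \log(1/\pi_{\min})$.

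Neither step contains a real mathematical obstacle; the only thing that requires any care is the separation of cases based on whether $Z$ and $\pi_{\min}$ are attained or finite, so that the $\sup$ and $\inf$ (rather than $\max$ and $\min$) are justified, and so that expressions like $\log(1/0)$ and $\log 0$ are handled via the conventions stated just after the definition of KL divergence. The whole proof should fit in well under a page.
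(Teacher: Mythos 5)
Your proof is correct and follows essentially the same route as the paper's: both bounds rest on Gibbs' inequality (your renormalization $\tilde\pi=\pi/Z$ is exactly how the paper invokes it) together with the pointwise bounds $\mu(s)\le 1$ and $\pi(s)\ge\inf_{s'}\pi(s')$, and both use point masses on a minimizing sequence for the supremum. The only cosmetic difference is in the infimum's attainment: you hit it exactly at $\tilde\pi$ when $\sum_s\pi(s)<\infty$ and use truncations only in the infinite-mass case, whereas the paper uses the sequence of $\pi$-uniform distributions on finite prefixes uniformly in both cases.
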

	
	\begin{proof}
		First, we show that for all $\mu\in\pmeas(\mathcal S)$,
		\[\log\frac{1}{\sum_{s\in\mathcal S}\pi(s)}
		\le \kl\mu\pi
		\le \log\frac{1}{\inf_{s\in\mathcal S}\pi(s)}.\]
		
		If $\pi(\mathcal S)=\infty$, the left-hand inequality follows from the convention that $\log\frac 1\infty = -\infty$. Otherwise,
		\begin{align*}
			\kl\mu\pi
			&= \sum_{s\in\mathcal S}\mu(s)\log\mu(s) - \sum_{s\in\mathcal S}\mu(s)\log\pi(s)
			\\&\ge \sum_{s\in\mathcal S}\mu(s)\log\frac{\pi(s)}{\sum_{s'\in\mathcal S}\pi(s')} - \sum_{s\in\mathcal S}\mu(s)\log\pi(s) &\text{by Gibbs' inequality}
			\\&= \log\frac{1}{\sum_{s'\in\mathcal S}\pi(s')}.
		\end{align*}
		
		As for the right-hand inequality, since $0\le \mu(s)\le 1$ and $\pi(s)\ge\inf_{s'\in\mathcal S}\pi(s')$,
		\[\kl\mu\pi
		\le \sum_{s\in\mathcal S}\mu(s)\log\frac{1}{\inf_{s'\in\mathcal S}\pi(s')}
		= \log\frac{1}{\inf_{s'\in\mathcal S}\pi(s')}.\]
		
		Now, to actually achieve the infimum, enumerate $\mathcal S=\{s_1,s_2,\ldots\}$ and consider the $\pi$-uniform distribution on $\{s_1,\ldots,s_i\}$:
		\[\mu_i(s_j) :=
		\begin{cases}
			\pi(s_j)/\sum_{k=1}^i \pi(s_k) &\text{if }j\le i
			\\0 &\text{if }j>i.
		\end{cases}\]
		
		Then, as $i\rightarrow\infty$,
		\[\kl{\mu_i}\pi=\log\frac{1}{\sum_{k=1}^i\pi(s_k)} \rightarrow\log\frac{1}{\sum_{s\in\mathcal S}\pi(s)}.\]
		
		Finally, to achieve the supremum, choose a sequence $(s_i)_{i\in\mathbb N}$ such that $\lim_{i\rightarrow\infty}\pi(s_i)=\inf_{s\in\mathcal S}\pi(s)$. Then, as $i\rightarrow\infty$,
		\[\kl{\delta_{s_i}}\pi=\log\frac{1}{\pi(s_i)}\rightarrow\log\frac{1}{\inf_{s\in\mathcal S}\pi(s)}.\]
	\end{proof}
	
	When the bounds in \Cref{lem:klbounds} are finite, it's useful to compare them to $\kl\mu\pi$.
	
	\begin{definition}
		\label{def:entropy}
		Relative to a fixed $\pi\in\meas(\mathcal S)$, the \textbf{entropy} and \textbf{negentropy}, respectively, of any $\mu\in\pmeas(\mathcal S)$ are
		\begin{align}
			\label{eq:entropy}
			H_\pi(\mu)
			:= -\kl\mu{\frac{\pi}{\inf_{s\in\mathcal S}\pi(s)}}
			&= \log\frac 1{\inf_{s\in\mathcal S}\pi(s)} - \kl\mu\pi
			\ge 0,
			\\\label{eq:negentropy}
			J_\pi(\mu)
			:= \kl\mu{\frac{\pi}{\sum_{s\in\mathcal S}\pi(s)}}
			&=\kl\mu\pi - \log\frac 1{\sum_{s\in\mathcal S} \pi(s)}
			\ge 0.
		\end{align}
		
		If $\inf_{s\in\mathcal S}\pi(s)=0$ or $\sum_{s\in\mathcal S}\pi(s)=\infty$, we instead define $H_\pi(\mu) := \infty$ or $J_\pi(\mu) := \infty$, respectively, regardless of $\mu$.
	\end{definition}
	
	Note that the \textbf{information capacity} of $\pi$, generalized from \citet{frank2018physical} by
	\[\textbf{I}(\pi):=H_\pi(\mu) + J_\pi(\mu) = \log\frac 1{\inf_{s\in\mathcal S}\pi(s)} - \log\frac 1{\sum_{s\in\mathcal S} \pi(s)},\]
	
	does not depend on $\mu$. Thus, each of $H_\pi(\mu)$ or $J_\pi(\mu)$ is maximal precisely when the other is zero. $J_\pi(\mu) = 0$ iff $\mu$ is the $\pi$-uniform distribution $\pi/\pi(\mathcal S)$, and $H_\pi(\mu) = 0$ iff $\mu$ is a point mass $\delta_s$ satisfying $s\in\arg\min_{s'\in\mathcal S}\pi(s')$. As a special case, we recover the Shannon entropy and negentropy:
	\begin{align*}
		H(\mu)
		:= H_\sharp(\mu)
		= -\kl\mu\sharp
		&= \sum_{s\in\mathcal S}\mu(s)\log\frac{1}{\mu(s)},
		\\J(\mu)
		:= J_\sharp(\mu)
		= \log|\mathcal S| + \kl\mu\sharp
		&= \log|\mathcal S| - H(\mu).
	\end{align*}
	
	It may aid the reader's intuition to specialize the results of this article to the case $\pi=\sharp$. The additional generality enables analysis of dynamics with non-uniform stationary distributions. The next definition is useful when we discuss memory, and want to quantify how much two systems know about one another. In what follows, let $\mathcal S_X,\mathcal S_Y$ be countable sets.
	
	\begin{definition}
		\label{def:mutinf}
		The \textbf{mutual information} between a pair of random variables $X\in\mathcal S_X,\,Y\in\mathcal S_Y$ is
		
		\[I(X;\,Y) := \kl{\prob_{(X,Y)}}{\prob_X\times\prob_Y}
		= \sum_{x,y}\prob(X=x,Y=y)\log\frac{\prob(X=x,Y=y)}{\prob(X=x)\prob(Y=y)}.\]
	\end{definition}
	
	It's often convenient to identify a random variable with its distribution. For example, we may speak of the KL divergence of a random variable, perhaps conditioned on some event, as a shorthand for the KL divergence of its marginal or conditional probability measure. That is, $\kl X\pi := \kl{\prob_X}\pi$ and $\kl{X\mid Y=y}\pi := \kl{\prob_{X\mid Y=y}}\pi$. As a further shorthand, let $\kl{X\mid Y}{\pi_X} := \expect_y \kl{X\mid Y=y}{\pi_X}$, where the expectation is over realizations of $Y$, i.e., $\expect_yf(y):=\sum_y\prob(Y=y)f(y)$.
	
	\begin{lemma}
		\label{lem:mutinf}
		Let $\pi_X\in\meas(\mathcal S_X),\,\pi_Y\in\meas(\mathcal S_Y)$. For any random variables $X\in\mathcal S_X,\,Y\in\mathcal S_Y$,
		\begin{align*}
			\kl{X,Y}{\pi_X\times\pi_Y}
			&= \kl Y{\pi_Y} + \kl{X\mid Y}{\pi_X},
			\\\kl{X\mid Y}{\pi_X}
			&= \kl X{\pi_X} + I(X;\, Y).
		\end{align*}
	\end{lemma}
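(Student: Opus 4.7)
The plan is to prove both identities by direct manipulation of the KL sum, using $\mathbb{P}(X=x, Y=y) = \mathbb{P}(Y=y)\,\mathbb{P}(X=x\mid Y=y)$ and the multiplicativity of $\pi_X\times\pi_Y$ inside the logarithm. Essentially these are the standard chain rule and mutual-information decomposition, but stated with arbitrary reference measures $\pi_X,\pi_Y$; the generality costs nothing in the calculation.

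For the first identity, I would start from the definition
\[
\kl{X,Y}{\pi_X\times\pi_Y} = \sum_{x,y}\mathbb{P}(X=x,Y=y)\log\frac{\mathbb{P}(X=x,Y=y)}{\pi_X(x)\,\pi_Y(y)},
\]
substitute $\mathbb{P}(X=x,Y=y)=\mathbb{P}(Y=y)\,\mathbb{P}(X=x\mid Y=y)$ inside the logarithm, and split into two sums. Summing the $\log[\mathbb{P}(Y=y)/\pi_Y(y)]$ term over $x$ collapses the inner conditional distribution to $1$ and yields $\kl{Y}{\pi_Y}$. The other term, after rearranging the order of summation, becomes $\sum_y \mathbb{P}(Y=y)\,\kl{X\mid Y=y}{\pi_X}$, which is exactly $\kl{X\mid Y}{\pi_X}$ by the shorthand introduced just above the lemma.

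The second identity follows as an immediate specialization: apply the first identity with $\pi_Y := \mathbb{P}_Y$ (a legitimate choice in $\meas(\mathcal{S}_Y)$ as long as $\mathbb{P}_Y$ is nontrivial), and observe that then $\kl{Y}{\pi_Y}=0$ and $\kl{X,Y}{\pi_X\times\mathbb{P}_Y} = \kl{X,Y}{\mathbb{P}_X\times\mathbb{P}_Y} + \kl{X}{\pi_X} - \kl{X}{\pi_X}$; more cleanly, one can just redo the computation, multiplying and dividing by $\mathbb{P}(X=x)$ inside the log to get
\[
\log\frac{\mathbb{P}(X=x\mid Y=y)}{\pi_X(x)} = \log\frac{\mathbb{P}(X=x)}{\pi_X(x)} + \log\frac{\mathbb{P}(X=x,Y=y)}{\mathbb{P}(X=x)\mathbb{P}(Y=y)},
\]
and summing against $\mathbb{P}(X=x,Y=y)$ produces $\kl{X}{\pi_X}+I(X;Y)$ by \Cref{def:mutinf}.

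There is no real obstacle; the only thing to watch is the treatment of infinities and zeros. The conventions stated after the definition of KL divergence ($0\log 0 = 0$, and a strictly positive mass against a zero reference giving $+\infty$) ensure the rearrangement is legal whenever both sides are finite, and both sides are simultaneously $+\infty$ when $\mathbb{P}_{(X,Y)}$ is not absolutely continuous with respect to the relevant product measure. I would briefly note this at the end so that the identities hold in the extended-real sense in which they are stated.
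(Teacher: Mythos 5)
Your proof is correct and follows essentially the same route as the paper's: both write each KL quantity as an expectation of a log-ratio over the joint distribution of $(X,Y)$ and obtain the identities by factoring $\prob(X=x,Y=y)=\prob(Y=y)\prob(X=x\mid Y=y)$ for the first, and multiplying and dividing by $\prob(X=x)$ inside the logarithm for the second. Your closing remark on the zero/infinity conventions is a harmless addition the paper leaves implicit.
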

	
	\begin{proof}
		From their respective definitions,
		\begin{align*}
			\kl X{\pi_X}
			&= \expect_{x,y} \log\frac{\prob(X=x)}{\pi_X(x)},
			\\\kl Y{\pi_Y}
			&= \expect_{x,y} \log\frac{\prob(Y=y)}{\pi_Y(y)},
			\\\kl{X\mid Y}{\pi_X}
			&= \expect_{x,y} \log\frac{\prob(X=x\mid Y=y)}{\pi_X(x)},
			\\\kl{X,Y}{\pi_X\times\pi_Y}
			&= \expect_{x,y} \log\frac{\prob(X=x,Y=y)}{\pi_X(x)\pi_Y(y)},
			\\I(X;\,Y)
			&= \expect_{x,y} \log\frac{\prob(X=x,Y=y)}{\prob(X=x)\prob(Y=y)}.
		\end{align*}
		Therefore,
		\begin{align*}
			\kl{X,Y}{\pi_X\times\pi_Y}
			&= \expect_{x,y} \log\frac{\prob(X=x\mid Y=y)\prob(Y=y)}{\pi_X(x)\pi_Y(y)}
			\\&= \kl Y{\pi_Y} + \kl{X\mid Y}{\pi_X},\text{ and}
			\\\kl{X\mid Y}{\pi_X}
			&= \expect_{x,y} \log\frac{\prob(X=x,Y=y)}{\pi_X(x)\prob(Y=y)}
			\\&= \kl X{\pi_X} + I(X;\, Y).
		\end{align*}
	\end{proof}
	
	Our first corollary is a generalization of \Cref{eq:mutinfsimple} that expresses the joint negentropy as a sum of three non-negative contributions.
	
	\begin{corollary}
		\label{cor:abstractdecomp}
		Let $\pi_X\in\meas(\mathcal S_X),\,\pi_Y\in\meas(\mathcal S_Y)$. For any random variables $X\in\mathcal S_X,\,Y\in\mathcal S_Y$,
		\begin{align*}
			J_{\pi_X\times\pi_Y}(X,Y) = J_{\pi_X}(X) + J_{\pi_Y}(Y) + I(X;\, Y).
		\end{align*}
	\end{corollary}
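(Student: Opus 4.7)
The plan is to combine the two identities from \Cref{lem:mutinf} with the definition of $J_\pi$, using the multiplicativity of the normalization constant under products of measures.

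First I would chain the two identities in \Cref{lem:mutinf}: substituting the second into the first yields
\[\kl{X,Y}{\pi_X\times\pi_Y} = \kl{X}{\pi_X} + \kl{Y}{\pi_Y} + I(X;\,Y).\]
This is already the corollary modulo the shift by normalization constants that distinguishes $J_\pi$ from the raw KL divergence.

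Next I would invoke the definition of negentropy, $J_\pi(\mu) = \kl\mu\pi - \log\frac{1}{\sum_s \pi(s)}$, and use the fact that the product measure satisfies $\sum_{(x,y)}(\pi_X\times\pi_Y)(x,y) = \bigl(\sum_x \pi_X(x)\bigr)\bigl(\sum_y \pi_Y(y)\bigr)$, so that
\[\log\frac{1}{\sum(\pi_X\times\pi_Y)} = \log\frac{1}{\sum \pi_X} + \log\frac{1}{\sum \pi_Y}.\]
Subtracting this from both sides of the chained KL identity, the three normalization terms distribute exactly so as to convert each $\kl{\,\cdot\,}{\,\cdot\,}$ on the right-hand side into the corresponding $J_{\,\cdot\,}(\,\cdot\,)$, while the left-hand side becomes $J_{\pi_X\times\pi_Y}(X,Y)$. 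The $I(X;Y)$ term is unaffected, as it carries no normalization shift.

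The only delicate step is the degenerate case $\sum_x \pi_X(x) = \infty$ or $\sum_y \pi_Y(y) = \infty$, where \Cref{def:entropy} stipulates $J = \infty$ by convention. Here one has to check that both sides of the claimed identity reduce to $+\infty$: if, say, $\sum \pi_X = \infty$, then $J_{\pi_X}(X) = \infty$ and also $\sum(\pi_X\times\pi_Y) = \infty$ (since $\pi_Y$ is not everywhere zero), giving $J_{\pi_X\times\pi_Y}(X,Y) = \infty$; since $J_{\pi_Y}(Y)\ge 0$ and $I(X;Y)\ge 0$, the right-hand side is also $\infty$. I expect this bookkeeping around infinities to be the only real source of friction, since the rest is straightforward algebra atop \Cref{lem:mutinf}.
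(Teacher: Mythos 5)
Your proposal matches the paper's proof: chain the two identities of \Cref{lem:mutinf} to get the KL decomposition, then substitute \Cref{eq:negentropy}, with the product-measure normalization splitting multiplicatively. The paper leaves the normalization bookkeeping and the infinite cases implicit, so your extra care there is fine but not a divergence in method.
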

	
	\begin{proof}
		Substitute the second part of \Cref{lem:mutinf} into the first, to get
		\begin{equation}
			\label{eq:kldecomp}
			\kl{X,Y}{\pi_X\times\pi_Y} = \kl X{\pi_X} + \kl Y{\pi_Y} + I(X;\, Y).
		\end{equation}
		
		Substituting \Cref{eq:negentropy} into \Cref{eq:kldecomp} yields the desired result.
	\end{proof}
	
	We close this section with another pair of bounds.
	
	\begin{corollary}
		\label{cor:klprod}
		Let $\pi_X\in\meas(\mathcal S_X),\,\pi_Y\in\meas(\mathcal S_Y)$. For any random variables $X\in\mathcal S_X,\,Y\in\mathcal S_Y$,
		\begin{align*}
			\log\frac{1}{\sum_{s\in\mathcal S_X}\pi_X(s)}
			\le \kl{X,Y}{\pi_X\times\pi_Y} - \kl Y{\pi_Y}
			\le \log\frac{1}{\inf_{s\in\mathcal S_X}\pi_X(s)}.
		\end{align*}
	\end{corollary}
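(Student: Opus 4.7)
The plan is to recognize that the quantity being bounded is nothing but a conditional KL divergence, and then invoke \Cref{lem:klbounds} pointwise.

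First, I would apply the first identity of \Cref{lem:mutinf} to rewrite the middle expression as
\[\kl{X,Y}{\pi_X\times\pi_Y} - \kl Y{\pi_Y} = \kl{X\mid Y}{\pi_X}.\]
By the shorthand introduced just before \Cref{lem:mutinf}, the right-hand side unfolds as $\expect_y \kl{X\mid Y=y}{\pi_X}$, where the expectation is taken over realizations of $Y$.

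Next, for each fixed $y$ in the support of $Y$, the conditional distribution $\prob_{X\mid Y=y}$ lies in $\pmeas(\mathcal S_X)$, so \Cref{lem:klbounds} applies to yield
\[\log\frac{1}{\sum_{s\in\mathcal S_X}\pi_X(s)}
\le \kl{X\mid Y=y}{\pi_X}
\le \log\frac{1}{\inf_{s\in\mathcal S_X}\pi_X(s)}.\]
Since these bounds do not depend on $y$, taking the expectation $\expect_y$ of both inequalities preserves them, giving the claimed two-sided bound on $\kl{X\mid Y}{\pi_X}$.

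There is no real obstacle here: the proof is a two-line composition of the two preceding results, and the only thing worth emphasizing in the write-up is the implicit appeal to the notational shorthand $\kl{X\mid Y}{\pi_X} = \expect_y\kl{X\mid Y=y}{\pi_X}$, so the reader sees why \Cref{lem:klbounds} can be invoked under the expectation.
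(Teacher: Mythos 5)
Your proposal is correct and follows exactly the paper's own argument: rewrite the middle quantity as $\kl{X\mid Y}{\pi_X}=\expect_y\kl{X\mid Y=y}{\pi_X}$ via \Cref{lem:mutinf}, then bound each conditional term with \Cref{lem:klbounds} and pass the uniform bounds through the expectation. Nothing is missing.
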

	
	\begin{proof}
		By \Cref{lem:mutinf},
		\[\kl{X,Y}{\pi_X\times\pi_Y} - \kl Y{\pi_Y} = \kl{X\mid Y}{\pi_X} = \expect_y \kl{X\mid Y=y}{\pi_X}.\]
		
		The result now follows from \Cref{lem:klbounds}, because
		\begin{align*}
			\log\frac{1}{\sum_{s\in\mathcal S_X}\pi_X(s)}
			&\le \inf_y \kl{X\mid Y=y}{\pi_X}
			\\&\le \expect_y \kl{X\mid Y=y}{\pi_X}
			\\&\le \sup_y \kl{X\mid Y=y}{\pi_X}
			\le \log\frac{1}{\inf_{s\in\mathcal S_X}\pi_X(s)}.
		\end{align*}
	\end{proof}
	
	
	\section{Markov Chains}
	\label{sec:markov}
	
	In practical applications, when one wishes to model a system that evolves with time, without concern for microscopic details, it's common to use a very simple kind of causal model called a \textbf{Markov chain}. Its random variables are arranged in a linear sequence, like so:
	
	\begin{align*}
		C_0\rightarrow C_1&\rightarrow C_2\rightarrow C_3\rightarrow\cdots
	\end{align*}
	
	Recall that causal graphs encode two things: independence constraints and intervention rules. For a closed system, interventions play no role, so the graph only encodes the independences given by Pearl's $d$-separation criterion. For a Markov chain, these are nicely summarized by the \textbf{Markov property}, which can be stated in a symmetric fashion:
	
	\begin{align*}
		\text{For all }t\in\mathbb N,\; C_{>t} \text{ is independent of } C_{<t}, \text{ conditioned on } C_t.
	\end{align*}
	
	The Markov property is unchanged for the time-reversed Markov chain:
	
	\begin{align*}
		C_0\leftarrow C_1&\leftarrow C_2\leftarrow C_3\leftarrow\cdots
	\end{align*}
	
	Since the two graphs yield identical constraints on the joint probability distribution of $\mathbf C := (C_t)_{t\in\mathbb Z^+}$, there is no observation on $\mathbf C$ that can distinguish between the graphs. Thus, in the general case, Markov chains lack a well-defined causal direction.
	
	There exist graphs in which a causal direction \emph{can} be inferred from passive observation alone, without interventions: see \cite[\S2]{pearl2009causality} for a systematic treatment, or \Cref{sec:spca} for the relevant case of SPCAs. On the other hand, non-graphical constraints may also contribute an arrow of time. We'll soon see that non-stationary \emph{time-homogeneous} Markov chains cease to be time-homogeneous in the reverse direction. This particular asymmetry is responsible for the second law of thermodynamics; see \Cref{thm:secondmarkov}.
	
	This section has two main purposes. The first is to forge a bridge between Markov chains and deterministic dynamical systems, which allows useful properties to transfer from one to the other. We formally define three ways to specify a Markov chain's dynamics, the last of which provides our bridge: it expresses the sequence $\mathbf C$ as the macroscopic component of a deterministic system with random initialization.
	
	This section's second purpose is to better understand how the arrow of time arises in Markov chains. We learn how time-homogeneity breaks time-reversal symmetry. Then, using the bridge to deterministic dynamics, we trace the time-homogeneity back to microscopic variations in the initial conditions. Finally, by crossing the bridge to a \emph{reversible} deterministic system, running that system backward beyond the initial condition, and then crossing back into the world of Markov chains, our Markov chains acquire a natural bidirectional extension:
	
	\begin{align}
		\label{fig:bidirectional}
		\cdots\leftarrow C_{-2}\leftarrow C_{-1}\leftarrow C_0\rightarrow C_1&\rightarrow C_2\rightarrow\cdots
	\end{align}
	
	Throughout this section, let $\mathcal S$ be a fixed countable set. Its elements should be thought of as the fine macrostates discussed in \Cref{sec:hcontinuous}. $\mathcal S$ is our Markov chain's state space, which we eventually see as a discretization of some dynamical system's state space.
	
	When speaking of a collection of random variables, such as $\mathbf C$, we really only care about their joint distribution $\prob_{\mathbf C}\in\pmeas(\mathcal S^{\mathbb Z^+})$. Thus, we formally define a Markov chain to be any element of $\pmeas(\mathcal S^{\mathbb Z^+})$ that satisfies the Markov property. We say a Markov chain is homogeneous in time, or \textbf{time-homogeneous}, if its conditional distribution $\prob(C_{t+1}=c_{t+1} \mid C_t=c_t)$ does not depend on $t$.
	
	Let $\mathrm{Markov}(\mathcal S)\subset \pmeas(\mathcal S^{\mathbb Z^+})$ denote the set of time-homogeneous Markov chains. An element of $\mathrm{Markov}(\mathcal S)$ is uniquely determined by (1) its initial distribution $\prob_{C_0}$, and (2) its dynamics, i.e., the time-independent conditional distributions $\prob_{C_{t+1}\mid C_t=c_t}$. We always specify the initial distribution directly, but consider three different means of specifying the dynamics. Let's start with the direct approach.
	
	\subsection{Transition Matrix Presentation}
	\label{sec:markovmat}
	
	Since the conditional distributions $\prob_{C_{t+1}\mid C_t=c_t}$ are assumed not to depend on $t$, we need only specify one distribution for each possible preceding state $c_t\in\mathcal S$. In other words, the dynamics are given by an element of $\pmeas(\mathcal S)^{\mathcal S}$ or, equivalently, a $|\mathcal S|\times |\mathcal S|$ stochastic matrix.
	
	\begin{definition}
		\label{def:stochastic}
		Suppose $\pi\in\meas(\mathcal S)$. The set of \textbf{stochastic matrices}, \textbf{doubly stochastic matrices}, and \textbf{$\pi$-stochastic matrices} on $\mathcal S$ are, respectively,
		\begin{align*}
			\mathrm{SM}(\mathcal S)
			&:= \{p\in(\mathbb R^+)^{\mathcal S\times\mathcal S}:\; \forall s\in\mathcal S,\; \sum_{s'\in\mathcal S} p(s,s')=1\},
			\\\mathrm{DM}(\mathcal S)
			&:= \{p\in(\mathbb R^+)^{\mathcal S\times\mathcal S}:\; \forall s\in\mathcal S,\; \sum_{s'\in\mathcal S} p(s,s')=\sum_{s'\in\mathcal S} p(s',s)=1\},
			\\\mathrm{SM}_\pi(\mathcal S)
			&:= \{p\in\mathrm{SM}(\mathcal S):\; \forall s\in\mathcal S,\; \sum_{s'\in\mathcal S} \pi(s')p(s',s)=\pi(s)\}.
		\end{align*}
		Note that $\mathrm{SM}_\sharp(\mathcal S) = \mathrm{DM}(\mathcal S)$. $\pi$ is said to be \textbf{stationary} for $p$ if $p\in\mathrm{SM}_\pi(\mathcal S)$. For $m\in\mathbb N$, a measure $\mu$ or matrix $p$ is said to be \textbf{$m$-sided} if all its entries are multiples of $\frac 1m$; that is, if $\mu\in(\frac 1m\mathbb Z^+)^{\mathcal S}$ or $p\in(\frac 1m\mathbb Z^+)^{\mathcal S\times\mathcal S}$, respectively. If we don't want to specify $m$, we can also say it is \textbf{finite-sided}. It is said to be \textbf{strictly positive} if none of its entries are zero.
	\end{definition}
	
	\begin{definition}
		\label{def:markovm}
		The \textbf{transition matrix presentation} of time-homogeneous Markov chains on $\mathcal S$ is the surjective map
		
		\[\Phi_\mathrm{matrix}:\pmeas(\mathcal S) \times \mathrm{SM}(\mathcal S)
		\rightarrow \mathrm{Markov}(\mathcal S),\]
		
		defined as follows. Consider any \textbf{initial condition} $\mu\in\pmeas(\mathcal S)$ and \textbf{transition matrix} $p\in\mathrm{SM}(\mathcal S)$. Let the random variable $\mathbf C = (C_t)_{t\in\mathbb Z^+}$ be distributed such that, for all $c:\mathbb Z^+\rightarrow\mathcal S$,
		\[\prob(C_{\le t}=c_{\le t}) = \mu(c_0)\prod_{u=0}^{t-1} p(c_{u},\,c_{u+1})
		\qquad\forall t\in\mathbb Z^+.\]
		
		Or equivalently,
		\begin{align}
			\label{eq:markovm1}
			\prob(C_0=c_0) &= \mu(c_0),
			\\\label{eq:markovm2}
			\prob(C_{t+1}=c_{t+1} \mid C_{\le t}=c_{\le t}) &= p(c_t,\,c_{t+1}) \qquad\forall t\in\mathbb Z^+.
		\end{align}
		
		Then, $\Phi_\mathrm{matrix}(\mu,\,p) := \prob_{\mathbf C}.$
	\end{definition}
	
	Now, just as a Markov chain's causal graph can be reversed without affecting $\prob_{\mathbf C}$, one might wonder: can \Cref{eq:markovm2} be reversed, or does it represents a true time-reversal asymmetry? This is really a two-part question: (1) can $\prob(C_{t-1}=c_{t-1} \mid C_{\ge t}=c_{\ge t})$ be expressed as a function of only $c_t$ and $c_{t-1}$, independent of $t$; and (2) can $\mathbf C$ be extended to negative times, in such a way that \Cref{eq:markovm2} is satisfied there as well?
	
	In general, the answer to both questions is no. Extensions to negative times are not guaranteed to exist, nor be unique if they do: the simplest counterexamples are deterministic transitions that are non-surjective or non-injective, respectively. Even at positive times, the backward probabilities may be inhomogeneous. Let's derive them using the Markov property and Bayes' rule:
	
	\begin{align*}
		\prob(C_{t-1}=c_{t-1} \mid C_{\ge t}=c_{\ge t})
		&= \prob(C_{t-1}=c_{t-1} \mid C_t=c_t)
		\\&= \frac{\prob(C_{t-1}=c_{t-1}) \prob(C_t=c_t\mid C_{t-1}=c_{t-1})} {\prob(C_t=c_t)}
		\\&= \frac{\prob(C_{t-1}=c_{t-1})}{\prob(C_t=c_t)} p(c_{t-1}, c_t). \numberthis\label{eq:bayes}
	\end{align*}
	
	The analysis is simpler when $p$ is \textbf{recurrent}, meaning that from every $s\in\mathcal S$, we'll almost surely eventually revisit $s$. In that case, $p$ has a strictly positive stationary measure $\pi$ which, on each irreducible component of $\mathcal S$, is uniquely determined up to constant multiples \citep[Thm 17.48 and Rmk 17.51(i)]{klenke2020probability}. In particular, if $p(s',s)>0$, then $s',s$ are in the same irreducible component, and so the ratio $\pi(s')/\pi(s)$ is uniquely determined. Therefore, the \textbf{dual transition matrix}
	
	\begin{align}
		\label{eq:dual}
		p_\mathrm{dual}(s, s') := \frac{\pi(s')}{\pi(s)} p(s', s)
	\end{align}
	
	is well-defined, and it's easy to check that $p_\mathrm{dual}\in\mathrm{SM}_\pi(\mathcal S)$. If $p$ is not recurrent but has one or more stationary measures, the same expression defines the dual transition matrix with respect to a specific stationary measure $\pi$, provided that we delete the states with $\pi(s)=0$.
	
	Now, suppose the initial distribution $\mu$ is stationary for $p$. Then, $\prob_{C_t} = \mu$ for all $t$ so, wherever $\mu(c_t)>0$, the backward dynamics
	\begin{equation}
		\label{eq:markovmback}
		\prob(C_{t-1}=c_{t-1} \mid C_{\ge t}=c_{\ge t})
		= \frac{\mu(c_{t-1})}{\mu(c_t)} p(c_{t-1}, c_t)
		= p_\mathrm{dual}(c_t, c_{t-1})
	\end{equation}
	
	are indeed time-homogeneous. This also grants $\mathbf C$ a unique extension to negative times, satisfying both \Cref{eq:markovm2,eq:markovmback} for all $t\in\mathbb Z$.
	
	On the other hand, when $\mu$ is non-stationary, the backward probabilities in \Cref{eq:bayes} may differ from $p_\mathrm{dual}(c_t, c_{t-1})$. Typically, the ratio $\prob(C_{t-1}=c_{t-1})/\prob(C_t=c_t)$ is time-dependent, rendering the backward probabilities inhomogeneous in time.
	
	It's still possible to extend $\mathbf C$ in such a way that \Cref{eq:markovmback}, but not \Cref{eq:markovm2}, is satisfied at negative times. \Cref{sec:markovdual} justifies such an extension in terms of reversible microscopic dynamics. Picturing two arrows of time diverging from $t=0$ to $\pm\infty$, we find homogeneous dynamics ($p$ at positive times, $p_\mathrm{dual}$ at negative times) \emph{along} the arrow, and inhomogenous probabilities (inferred using Bayes' rule) \emph{against} the arrow. Thus, just as ``down'' always points toward Earth's center, the ``past'' always points toward the initial condition.
	
	Suppose $p\in\mathrm{SM}_\pi(\mathcal S)$. Intuitively, as $t$ increases, the state distribution $\prob_{C_t}$ moves ``closer'' to $\pi$. To make this precise, we define the thermo-majorization relation $\succeq_\pi$.
	
	\begin{definition}
		\label{def:majorize}
		Suppose $\mu,\nu\in\pmeas(\mathcal S)$ and $\pi\in\meas(\mathcal S)$. We say $\mu$ \textbf{thermo-majorizes} $\nu$ with respect to $\pi$, and write $\mu\succeq_\pi\nu$, if there exists $p\in\mathrm{SM}_\pi(\mathcal S)$ transforming $\mu$ into $\nu$, i.e.,
		
		\begin{align*}
			\sum_{s'\in\mathcal S}\mu(s')p(s',s)=\nu(s),
			\quad \sum_{s'\in\mathcal S}\pi(s')p(s',s)=\pi(s)
			\qquad\forall s\in\mathcal S.
		\end{align*}
	\end{definition}
	
	From the fact that $\mathrm{SM}_\pi(\mathcal S)$ is closed under matrix multiplication, it follows that $\succeq_\pi$ is a preorder. Progress along $\succeq_\pi$ can be quantified by the KL divergence:
	
	\begin{theorem}[Second law of thermodynamics]
		\label{thm:secondmarkov}
		Suppose $\mu,\nu\in\pmeas(\mathcal S)$, $\pi\in\meas(\mathcal S)$, and $\mu\succeq_\pi\nu$. Then,
		\[\kl\mu\pi \ge \kl\nu\pi.\]
	\end{theorem}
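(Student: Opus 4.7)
The plan is to lift $\mu$ and $\pi$ to joint measures on $\mathcal S \times \mathcal S$ by appending a $p$-transition, and to read off the desired inequality from the fact that the appended conditional is the \emph{same} on both sides. Concretely, I would define $Q(s, s') := \mu(s) p(s, s')$ and $\Pi(s, s') := \pi(s) p(s, s')$; by $\pi$-stationarity of $p$, the second marginal of $\Pi$ is $\pi$, and by the hypothesis $\mu \succeq_\pi \nu$, the second marginal of $Q$ is $\nu$.

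The argument then reduces to two short steps. First,
\[\kl{Q}{\Pi} = \sum_{s, s'} \mu(s)\, p(s, s') \log \frac{\mu(s)}{\pi(s)} = \kl{\mu}{\pi},\]
since the $p(s, s')$ factors cancel inside the logarithm and $\sum_{s'} p(s, s') = 1$. Second, applying the chain rule for KL divergence, but conditioning on the \emph{second} coordinate $s'$ instead of the first, gives
\[\kl{Q}{\Pi} = \kl{\nu}{\pi} + \expect_{s' \sim \nu} \kl{Q(\cdot \mid s')}{\Pi(\cdot \mid s')}.\]
The second summand is a non-negative expected KL divergence, so $\kl{Q}{\Pi} \ge \kl{\nu}{\pi}$; combining with the previous display yields $\kl{\mu}{\pi} \ge \kl{\nu}{\pi}$.

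The one technical gap is that \Cref{lem:mutinf} is stated for product reference measures only, whereas $\Pi$ is not a product. The needed generalization is routine: I would factor $Q(s, s') = \nu(s') \cdot Q(s \mid s')$ and $\Pi(s, s') = \pi(s') \cdot \Pi(s \mid s')$ pointwise, take logs, and sum against $Q$, mirroring the proof of \Cref{lem:mutinf} almost verbatim. A leaner alternative, if one prefers to avoid introducing the chain rule at all, is to apply the log-sum inequality at each fixed $s'$ to the numerators $\mu(s) p(s, s')$ and denominators $\pi(s) p(s, s')$, then sum over $s'$ to obtain the bound in a single line. The only care required is with degenerate cases: when $\mu$ is not absolutely continuous with respect to $\pi$, $\kl{\mu}{\pi} = \infty$ and the claim is trivial; otherwise $\pi$-stationarity of $p$ forces $\nu \ll \pi$ as well, so every quantity above is finite and the manipulations are unambiguous.
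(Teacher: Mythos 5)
Your proof is correct, and it coincides with the standard argument the paper relies on: the paper gives no in-text proof of \Cref{thm:secondmarkov}, deferring instead to Cover and Thomas (\S 4.4), whose proof is exactly your construction of the joint measures $\mu(s)p(s,s')$ and $\pi(s)p(s,s')$ followed by the chain rule (equivalently, the log-sum inequality applied at each fixed $s'$). The only point worth making explicit is that the non-negativity of the conditional terms $\kl{Q(\cdot\mid s')}{\Pi(\cdot\mid s')}$ uses Gibbs' inequality, which applies because $\pi$-stationarity makes $\Pi(\cdot\mid s')$ a genuine probability measure even when $\pi$ itself is not normalized --- your construction already guarantees this, so the argument is complete.
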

	
	A proof is provided by \citet[\S4.4]{thomas2006elements}, and a broad generalization by \citet[Thm 17]{brandao2015second}. For a Markov chain $\mathbf C$ with stationary measure $\pi$, it's immediate from \Cref{def:stochastic,def:majorize} that $C_t\succeq_\pi C_{t+1}$ for all $t\in\mathbb Z^+$. Therefore, \Cref{thm:secondmarkov} implies $\kl{C_t}\pi \ge \kl{C_{t+1}}\pi$. Using \Cref{def:entropy}, we recover the familiar statement that entropy increases monotonically:
	\[H_\pi(C_t) \le H_\pi(C_{t+1})\qquad\forall t\in\mathbb Z^+.\]
	
	\subsection{Random Function Presentation}
	\label{sec:markovrand}
	
	In physics, dynamics are not given as transition matrices, but as reversible \emph{equations of motion}. Integrating equations of motion for a fixed time interval yields their discrete-time analogue: invertible functions on $\mathcal S$. These functions should be deterministic; however, as a first step, we present Markov chains in terms of \emph{random} functions on $\mathcal S$.
	
	\begin{definition}
		\label{def:markovr}
		The \textbf{random function presentation} of time-homogeneous Markov chains on $\mathcal S$ is the map
		
		\[\Phi_\mathrm{random}:\pmeas(\mathcal S) \times \pmeas(\mathcal S^\mathcal S)
		\rightarrow \mathrm{Markov}(\mathcal S),\]
		
		defined as follows. Consider any \textbf{initial condition} $\mu\in\pmeas(\mathcal S)$ and \textbf{random dynamics} $\Gamma\in\pmeas(\mathcal S^\mathcal S)$. Let the random variables $(\mathbf C,\mathbf F) = (C_t,F_t)_{t\in\mathbb Z^+}$ be distributed such that
		
		\begin{align}
			\label{eq:markovr1}
			\prob_{(C_0,\mathbf F)} &= \mu\times\Gamma^{\mathbb Z^+},
			\\\label{eq:markovr2}
			C_{t+1} &= F_t(C_t) \qquad\forall t\in\mathbb Z^+.
		\end{align}
		
		Then, $\Phi_\mathrm{random}(\mu,\,\Gamma) := \prob_{\mathbf C}.$
	\end{definition}
	
	In other words, the random functions $(F_t)_{t\in\mathbb Z^+}$ are i.i.d. samples from a fixed distribution $\Gamma$. At each time $t\in\mathbb Z^+$, $C_{\le t}$ is independent of $F_{\ge t}$; applying $F_t$ to $C_t$ produces the next state $C_{t+1}$.
	
	For example, a random walk on $\mathcal S := \mathbb Z$ can be described as follows. Let $\Gamma$ be the uniform distribution on the multiset $\{\iota_{-1},\iota_0,\iota_0,\iota_1\}$, where $\iota_i:\mathbb Z\rightarrow\mathbb Z$ is defined by $\iota_i(s) := s+i$. Let the initial distribution $\mu\in\pmeas(\mathbb Z)$ be arbitrary. Then, $\Phi_\mathrm{random}(\mu,\,\Gamma)$ is a random walk that steps left or right on the number line, with $1/4$ probability each. The equivalent transition matrix is
	\[p(s,\,s+i) =
	\begin{cases}
		1/2 &\text{if }i=0,
		\\1/4 &\text{if }i=\pm 1,
		\\0 &\text{otherwise}.
	\end{cases}\]
	
	We extend the definition of $m$-sidedness to measures whose range is constrained to multiples of $1/m$. Thus, $\Gamma\in\pmeas(\mathcal S^\mathcal S)$ is $m$-sided iff it is uniform over some multiset of $m$ functions. In our example, both $\Gamma$ and $p$ are $4$-sided. The functions $\iota_i$ are bijective, whereas the matrix $p$ is doubly stochastic. These parallels are no coincidence: we now prove a general correspondence between the random function and transition matrix presentations.
	
	\begin{theorem}
		\label{thm:equivalence1}
		For every $\Gamma\in\pmeas(\mathcal S^\mathcal S)$, there is a \emph{unique} $p\in\mathrm{SM}(\mathcal S)$ such that, for all $\mu\in\pmeas(\mathcal S)$, $\Phi_\mathrm{matrix}(\mu, p) = \Phi_\mathrm{random}(\mu, \Gamma)$.
		If $\Gamma(\mathrm{Bij}(\mathcal S))=1$, then $p\in\mathrm{DM}(\mathcal S)$.
		If $\Gamma$ is $m$-sided, then so is $p$.
		
		Conversely, for every $p\in\mathrm{SM}(\mathcal S)$, there exists $\Gamma\in\pmeas(\mathcal S^\mathcal S)$ such that, for all $\mu\in\pmeas(\mathcal S)$, $\Phi_\mathrm{matrix}(\mu, p) = \Phi_\mathrm{random}(\mu, \Gamma)$.
		If $p\in\mathrm{DM}(\mathcal S)$ and/or $p$ is $m$-sided, then $\Gamma$ can be chosen to be supported on $\mathrm{Bij}(\mathcal S)$ and/or $m$-sided, respectively.
	\end{theorem}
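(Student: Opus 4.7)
My plan is to handle the two directions of the equivalence separately, using the following structure.

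For the forward direction (from $\Gamma$ to $p$), I would define
\[ p(s, s') := \Gamma(\{f \in \mathcal{S}^{\mathcal{S}} : f(s) = s'\}), \]
i.e., $p(s, s')$ is the probability that a single draw $F \sim \Gamma$ sends $s$ to $s'$. That $\sum_{s'} p(s, s') = \Gamma(\mathcal{S}^{\mathcal{S}}) = 1$ and that $p$ inherits $m$-sidedness are immediate. If $\Gamma$ is concentrated on $\mathrm{Bij}(\mathcal{S})$, then for each bijection $f$ we have $\sum_{s} \mathbf{1}[f(s)=s'] = 1$, so Fubini/Tonelli gives $\sum_{s} p(s, s') = 1$, i.e., $p \in \mathrm{DM}(\mathcal{S})$. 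To verify $\Phi_\mathrm{matrix}(\mu, p) = \Phi_\mathrm{random}(\mu, \Gamma)$, I would use the independence of $F_t$ from $C_{\le t}$ built into \Cref{eq:markovr1} to compute
\[ \prob(C_{t+1} = c_{t+1} \mid C_{\le t} = c_{\le t}) = \prob(F_t(c_t) = c_{t+1}) = p(c_t, c_{t+1}), \]
so $\prob_{\mathbf C}$ satisfies \Cref{eq:markovm1,eq:markovm2}. Uniqueness of $p$ is then immediate, since the Markov chain's transition probabilities pin down every entry $p(s, s')$ at any $s$ reachable from $\mathrm{supp}(\mu)$, and choosing $\mu$ to range over point masses $\delta_s$ forces the rest.

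For the converse direction with no structure assumed on $p$, I would take $\Gamma$ to be the product measure $\bigotimes_{s \in \mathcal{S}} p(s, \cdot)$ on $\mathcal{S}^{\mathcal{S}}$, so that the coordinates $F(s)$ are independent with $F(s) \sim p(s, \cdot)$. Then $\Gamma(\{f : f(s) = s'\}) = p(s, s')$, and the equivalence of joint distributions follows from the forward direction. For $m$-sided $p$, I would instead write $p(s, \cdot) = \frac{1}{m}\sum_{k=1}^m \delta_{g_k(s)}$ by listing the atoms of $p(s, \cdot)$ with their multiplicities (any fixed enumeration works), then set $\Gamma$ to be uniform on the multiset of functions $\{f_k : s \mapsto g_k(s)\}_{k=1}^m$; this $\Gamma$ is $m$-sided by construction and recovers $p$ via the forward direction.

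The genuinely non-trivial case is when $p$ is doubly stochastic (and, in particular, when it is both $m$-sided and doubly stochastic), because now the independent-per-coordinate construction destroys bijectivity. Here I would reduce to a decomposition of $p$ into permutation matrices. In the $m$-sided case, $mp$ is a non-negative integer matrix whose rows and columns all sum to $m$; viewed as the bi-adjacency matrix of a countable $m$-regular bipartite multigraph, it admits a proper $m$-edge-coloring by König's theorem (the countable version follows by compactness/De Bruijn--Erd\H{o}s from the finite case applied to growing finite subgraphs), and each color class is a perfect matching, i.e., a permutation $\pi_k$ of $\mathcal{S}$; taking $\Gamma$ uniform on $\{\pi_1, \ldots, \pi_m\}$ is then both $m$-sided and bijection-supported. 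For a general doubly stochastic $p$ on a countable $\mathcal{S}$, I would iteratively extract permutations: apply (the countable form of) Hall's marriage theorem to the support of the current residual to produce a permutation $\pi_k$ with $p_k(s, \pi_k(s)) > 0$ for all $s$, subtract $\alpha_k \delta_{\pi_k}$ where $\alpha_k := \inf_s p_k(s, \pi_k(s))$, and iterate; the weights $\alpha_k$ are non-negative with $\sum_k \alpha_k = 1$, yielding $\Gamma = \sum_k \alpha_k \delta_{\pi_k}$.

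The main obstacle is verifying this iterative decomposition for the general countable doubly-stochastic case: one needs that at each stage Hall's condition holds on the support of the residual (which is inherited from double stochasticity), that the process terminates or converges with $\sum_k \alpha_k = 1$, and that the resulting $\Gamma$ is a genuine probability measure on $\mathrm{Bij}(\mathcal{S})$. The $m$-sided and finite-state subcases avoid these measure-theoretic subtleties and are essentially combinatorial, so I would present them first to isolate the conceptual content before invoking the infinite-dimensional Birkhoff--von Neumann machinery for the fully general case.
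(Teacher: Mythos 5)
Your treatment of the forward direction, uniqueness, the general converse, and the $m$-sided cases is correct and essentially matches the paper's proof; the one genuine variation is that for a general $p\in\mathrm{SM}(\mathcal S)$ you take $\Gamma$ to be the product measure $\bigotimes_s p(s,\cdot)$ (independent coordinates), whereas the paper couples all coordinates through a single uniform $r\in[0,1)$ via the quantile construction $f_r$. Both give $\Gamma(\{f:f(s)=s'\})=p(s,s')$, which is all that is needed, and the paper's coupling has the side benefit of making the $m$-sided case fall out immediately ($f_r=f_{\lfloor rm\rfloor/m}$) rather than requiring a separate enumeration of atoms. Your handling of the $m$-sided doubly stochastic case via an $m$-regular bipartite multigraph and K\H{o}nig/Hall is exactly the paper's argument.

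The gap is in your iterative extraction for a general (not $m$-sided) doubly stochastic $p$ on a countably infinite $\mathcal S$. First, $\alpha_k:=\inf_s p_k(s,\pi_k(s))$ can be zero even when every $p_k(s,\pi_k(s))$ is positive: a doubly stochastic matrix can have row $s$ supported on infinitely many columns with all entries at most $2^{-s}$, so \emph{every} permutation has infimal diagonal entry zero and the very first subtraction removes nothing. Second, the infinite Hall/K\H{o}nig theorems you invoke require the bipartite support graph to be locally finite on (at least) one side; rows of a doubly stochastic matrix may have infinite support, and Hall's condition alone is not sufficient for a perfect matching in that regime. Third, even when each $\alpha_k>0$, you would need a separate argument (or a transfinite/limiting scheme) to show $\sum_k\alpha_k=1$ rather than converging to some value strictly below $1$, and that the residual vanishes. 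These are precisely the difficulties that make the infinite Birkhoff--von Neumann statement a theorem in its own right; the paper sidesteps them by citing the probabilistic proof of \citet{revesz1962probabilistic}, and your write-up should do the same (or reproduce that proof) rather than rely on the greedy decomposition.
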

	
	\begin{proof}
		Uniqueness of $p(s,s')$ follows from the fact that it can be read off a Markov chain, by initializing with $C_0=s$ and checking the probability that $C_1=s'$:
		\[p(s,\,s')=\Phi_\mathrm{matrix}(\delta_s,\,p)(C_1=s').\]
		
		To prove the first implication, fix $\Gamma\in\pmeas(\mathcal S^\mathcal S)$. For $s,s'\in\mathcal S$, define the sets
		\begin{align*}
			A(s,s') &:= \{f\in\mathcal S^\mathcal S:\; f(s)=s'\},
			\\\text{and let }p(s,s') &:= \Gamma(A(s,s')).
			\\\text{Then, for all }s\in\mathcal S,\;
			\sum_{s'\in\mathcal S}p(s,s')
			&= \Gamma\left(\bigcup_{s'\in\mathcal S} A(s,s')\right)
			= \Gamma(\mathcal S) = 1,
		\end{align*}
		
		because the sets $A(s,s')$, with fixed $s$, form a partition of $\mathcal S^\mathcal S$. Hence, $p\in\mathrm{SM}(\mathcal S)$. The roles of $s$ and $s'$ become symmetric when restricting to $\mathrm{Bij}(\mathcal S)$, so that the sets $A(s,s')\cap\mathrm{Bij}(\mathcal S)$, with either $s$ or $s'$ held fixed, form a partition of $\mathrm{Bij}(\mathcal S)$. Therefore, by the same argument, if $\Gamma(\mathrm{Bij}(\mathcal S))=1$, then $p\in\mathrm{DM}(\mathcal S)$. In addition, if $\Gamma$ is $m$-sided, then so is $p$ by its definition.
		
		Now, fix $\mu\in\pmeas(\mathcal S)$. Let $(\mathbf C,\mathbf F)$ be as in \Cref{def:markovr}, so that $\prob_{\mathbf C}=\Phi_\mathrm{random}(\mu, \Gamma)$. By definition, $\prob_{C_0}=\mu$, so \Cref{eq:markovm1} holds. Furthermore, since $F_t$ is independent of $C_{\le t}$,
		\begin{equation*}
			\prob(C_{t+1}=c_{t+1} \mid C_{\le t}=c_{\le t})
			= \prob(F_t(c_t) = c_{t+1})
			= \Gamma(A(c_t,c_{t+1}))
			= p(c_t,c_{t+1}),
		\end{equation*}
		
		so \Cref{eq:markovm2} holds as well. Therefore, $\Phi_\mathrm{matrix}(\mu, p) = \Phi_\mathrm{random}(\mu, \Gamma)$.
		
		To prove the converse, fix $p\in\mathrm{SM}(\mathcal S)$. Enumerate $\mathcal S=\{s_1,s_2,\ldots\}$. For each $r\in [0,1)$ and $s\in\mathcal S$, let $f_r(s) := s_i$, where $i\in\mathbb N$ is chosen such that $\sum_{j=1}^{i-1} p(s,s_j) \le r < \sum_{j=1}^i p(s,s_j)$. If we take $r$ to be drawn uniformly from $[0,1)$, then $f_r$ is drawn from the pushforward measure $\Gamma(B):=\lambda(\{r\in[0,1): f_r\in B\})$. For all $s,s_i\in\mathcal S$, it satisfies
		\[\Gamma(A(s,s_i))
		= \lambda(\{r\in [0,1):\, f_r(s)=s_i\})
		= \lambda\left(\left[\sum_{j=1}^{i-1} p(s,s_j),\; \sum_{j=1}^i p(s,s_j)\right)\right)
		= p(s,s_i).\]
		
		Just as in the previous case, this equation implies that $\Phi_\mathrm{matrix}(\cdot, p) = \Phi_\mathrm{random}(\cdot, \Gamma)$. In the case where $p$ is $m$-sided, by definition, $f_r$ = $f_{\lfloor rm\rfloor/m}$ for all $r\in [0,1)$. $\Gamma$ is then the uniform distribution over $\{f_0,f_{1/m},\ldots,f_{(m-1)/m}\}$, making it $m$-sided as well.
		
		If $p\in\mathrm{DM}(\mathcal S)$, special care is needed to construct $\Gamma$ using only bijections, satisfying $\Gamma(A(s,s')\cap\mathrm{Bij}(\mathcal S)) = p(s,s')$. The existence of such a $\Gamma$ is a generalization of the Birkhoff-von Neumann theorem; see the proof by \citet{revesz1962probabilistic}. An easier construction is possible under the additional hypothesis that $p$ is $m$-sided: consider the bipartite graph with vertex partition $(\mathcal S, \mathcal S)$ and, for each $s,s'\in\mathcal S$, $m\cdot p(s,s')$ edges from $s$ in the left partition to $s'$ in the right. Since $p\in\mathrm{DM}(\mathcal S)$, the graph is $m$-regular, meaning that each vertex is incident to exactly $m$ edges. Repeated application of Hall's marriage theorem partitions the edges into $m$ perfect matchings, i.e., bijections on $\mathcal S$. We take $\Gamma$ to be the uniform distribution over these $m$ bijections.
	\end{proof}
	
	\subsection{Deterministic Function Presentation}
	\label{sec:markovdet}
	
	To complete our bridge into the theory of dynamical systems, we must derandomize the transition function. Consider again our $4$-sided random walk on $\mathbb Z$, whose dynamics sample uniformly from the multiset of transition functions $\{\iota_{-1},\iota_0,\iota_0,\iota_1\}\subset\mathrm{Bij}(\mathbb Z)$. We can simulate one step of the walk by rolling a fair $4$-sided die. Denoting the roll at time $t$ by $R_{t,0}\in\mathbb Z_4$, $C_{t+1}$ becomes a deterministic function of the pair $(C_t,R_{t,0})$. Since a deterministic function cannot add new rolls to the state, we should attach an infinite supply of rolls to the initial state $C_0$.
	
	The idea, then, is to start with an extended state $(C_0,R_0)$, which includes the bi-infinite sequence $R_0=(R_{0,i})_{i\in\mathbb Z}$ of pre-rolled dice. Define a ``macroscopic'' transition function $T:\mathbb Z\times\mathbb Z_4\rightarrow\mathbb Z\times\mathbb Z_4$ by, for all $s\in\mathbb Z$,
	\begin{align*}
		T(s,0) &:= (s-1,\,0),
		\\T(s,1) &:= (s,\,1),
		\\T(s,2) &:= (s,\,2),
		\\T(s,3) &:= (s+1,\,3).
	\end{align*}
	
	Note that $T$ is bijective. The full dynamics consist of two interleaved operations on the expanded state space $\mathbb Z\times(\mathbb Z_4)^\mathbb Z$. First, $T$ is applied using only the central (i.e., $0$'th) die:
	\[(C_{t+1},R'_{t,0}) := T(C_t,R_{t,0}).\]
	
	Note that $R'_{t,0}=R_{t,0}$ in our example. Then, the $4$-symbol shift map $\sigma$ is applied to the sequence of dice, centering $R_{t,1}$ in preparation for the next application of $T$:
	\begin{align*}
		R_t
		= (\ldots,R_{t,-1},&R_{t,0},R_{t,1},\ldots),
		\\R_{t+1}
		= \sigma((R_t)_{0\leftarrow R'_{t,0}})
		= \sigma(\ldots,R_{t,-1},R'_{t,0},R_{t,1},\ldots)
		= (\ldots,R'_{t,0},&R_{t,1},R_{t,2},\ldots).
	\end{align*}
	
	In a more compact notation, at time $t+1$ the sequence becomes
	\[R_{t+1,i} =
	\begin{cases}
		R'_{t,0} &\text{if }i=-1,
		\\R_{t,i+1} &\text{if }i\ne -1.
	\end{cases}\]
	
	By induction, for all $t,u,i,j\in\mathbb Z^+$ satisfying $t+i=u+j$, we have $R_{t,i} = R_{u,j}$ . In particular, $R_{t,\mathbb Z^+}=R_{0,t+\mathbb Z^+}$; therefore, if the initial dice $R_{0,\mathbb Z}$ are i.i.d., then the unused dice $R_{t,\mathbb Z^+}$ will continue to be i.i.d. at all later times $t\in\mathbb Z^+$. As a result, $C_t$'s evolution is time-homogeneous in the direction of increasing $t$. In contrast, if we were to follow $C_t$'s evolution in the direction of \emph{decreasing} $t$, toward zero, we would encounter used dice. These dice correlated with $C_t$, in such a way as to return the process to its initial configuration $C_0$.
	
	Let's get some geometric intuition, using the correspondence $\Psi$ from \Cref{eq:bakershift}. It identifies the sequence $R_t$ with a point in the unit square; thus, $(C_t,R_t)\in\mathbb Z\times(\mathbb Z_4)^\mathbb Z$ is identified with a point in $\mathbb Z\times[0,1)\times[0,1)\simeq\mathbb R\times[0,1)$. The resulting system is essentially the multibaker chain, first introduced and visualized by \citet{gaspard1992diffusion}. Thanks to $\Psi$, we can think of the sequence
	\[C_t,\;R_{t,0},\;R_{t,\pm 1},\;R_{t,\pm 2},\cdots\]
	
	as the base-$4$ digits, in order of decreasing significance, of a continuous state's coordinates. Although the maps $T,\sigma$ are deterministic and reversible, their result on $C_t$ appears stochastic if we ignore $R_t$. $\sigma$ is chaotic, gradually amplifying formerly insignificant figures in the real-valued coordinates of $R_t$. $T$ uses these as a source of randomness to evolve $C_t$. Since real numbers have infinitely many digits, they can provide an endless supply of entropy by repeating $\sigma$.
	
	More generally, we can view any time-homogeneous Markov chain $\mathbf C$ as the macroscopic part of some chaotic dynamical system $(\mathbf C,\mathbf R)$. The initial microscopic variables $(R_{0,i})_{i\in\mathbb Z}$ are i.i.d. samples from a probability space $(\mathcal R,\mathcal G,\Gamma)$. Adapting \Cref{eq:shift} to symbols in $\mathcal R$, the shift map $\sigma:\mathcal R^\mathbb Z\rightarrow\mathcal R^\mathbb Z$ is again defined by $\sigma(r)_t := r_{t+1}$. Given a macroscopic transition function $T:\mathcal S\times\mathcal R\rightarrow\mathcal S\times\mathcal R$, the full dynamics $\overline T:\mathcal S\times\mathcal R^\mathbb Z\rightarrow\mathcal S\times\mathcal R^\mathbb Z$ are given by
	\begin{align}
		\label{eq:tshift}
		\overline T(s,r):=(s',\sigma(r_{0\leftarrow r'})),\text{ where }(s',r'):=T(s,r_0).
	\end{align}
	
	In other words, $\overline T$ first applies $T$ to $(s,r_0)$, and then applies $\sigma$ to the entire sequence $r$. In the following definition, let $\mathrm{Dyn}(\mathcal S)$ denote the class of all tuples of the form $(\mathcal R,\mathcal G,\Gamma,T)$, where $(\mathcal R,\mathcal G,\Gamma)$ is a probability space, and $T:\mathcal S\times\mathcal R\rightarrow\mathcal S\times\mathcal R$ is $(\powerset{\mathcal S}\times\mathcal G)$-measurable.
	
	\begin{definition}
		\label{def:markovd}
		The \textbf{deterministic function presentation} of time-homogeneous Markov chains on $\mathcal S$ is the map
		
		\[\Phi_\mathrm{determ}:\pmeas(\mathcal S) \times \mathrm{Dyn}(\mathcal S)
		\rightarrow \mathrm{Markov}(\mathcal S),\]
		
		defined as follows. Consider any \textbf{initial condition} $\mu\in\pmeas(\mathcal S)$ and \textbf{deterministic dynamics} $(\mathcal R,\mathcal G,\Gamma,T)\in\mathrm{Dyn}(\mathcal S)$. Let the random variables $(\mathbf C,\mathbf R) = (C_t,R_t)_{t\in\mathbb Z^+}$ be distributed such that
		
		\begin{align}
			\label{eq:markovd1}
			\prob_{(C_0,R_0)} &= \mu\times\Gamma^\mathbb Z,
			\\\label{eq:markovd2}
			(C_{t+1},R_{t+1}) &= \overline T(C_t,R_t) \qquad\forall t\in\mathbb Z^+,
		\end{align}
		
		with $\overline T$ defined by \Cref{eq:tshift}. Then, $\Phi_\mathrm{determ}(\mu,\mathcal R,\mathcal G,\Gamma,T) := \prob_{\mathbf C}.$
	\end{definition}
	
	We proceed to this section's main result, establishing a general correspondence between the deterministic function and transition matrix presentations. In order to continue thinking of the microscopic variables $\mathbf R$ as base-$m$ digits of real-valued phase coordinates, we prefer the probability space $(\mathcal R,\mathcal G,\Gamma)$ to be the \textbf{fair $m$-sided die} $(\mathbb Z_m, \powerset{\mathbb Z_m}, \frac 1m\sharp)$. A necessary and sufficient condition for the existence of such a presentation, is that the transition matrix $p$ be $m$-sided.
	
	As an analogue of Liouville's theorem, we also prefer $\overline T$ to be reversible and $(\pi\times\Gamma^\mathbb Z)$-measure-preserving, for some $\pi\in\meas(\mathcal S)$. By \Cref{eq:tshift}, these properties hold iff $T$ is reversible and $(\pi\times\Gamma)$-measure-preserving.
	
	\begin{theorem}
		\label{thm:equivalence2}
		For every $(\mathcal R,\mathcal G,\Gamma,T)\in\mathrm{Dyn}(\mathcal S)$, there is a \emph{unique} $p\in\mathrm{SM}(\mathcal S)$ such that, for all $\mu\in\pmeas(\mathcal S)$, $\Phi_\mathrm{matrix}(\mu, p) = \Phi_\mathrm{determ}(\mu,\mathcal R,\mathcal G,\Gamma,T)$.
		If $T$ is $(\pi\times\Gamma)$-measure-preserving, then $p\in\mathrm{SM}_\pi(\mathcal S)$. If $\Gamma$ is $m$-sided, then so is $p$.
		
		Conversely, for every $p\in\mathrm{SM}(\mathcal S)$, there exists $(\mathcal R,\mathcal G,\Gamma,T)\in\mathrm{Dyn}(\mathcal S)$ such that, for all $\mu\in\pmeas(\mathcal S)$, $\Phi_\mathrm{matrix}(\mu, p) = \Phi_\mathrm{determ}(\mu,\mathcal R,\mathcal G,\Gamma,T)$.
		If $p\in\mathrm{SM}_\pi(\mathcal S)$, with $\pi$ strictly positive and finite-sided, then $T$ can be chosen to be bijective and $(\pi\times\Gamma)$-measure-preserving. If $p$ is $m$-sided, then $(\mathcal R,\mathcal G,\Gamma)$ can be chosen to be the fair $m$-sided die, with $T$ simultaneously being bijective if $p\in\mathrm{DM}(\mathcal S)$.
	\end{theorem}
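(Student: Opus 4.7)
The plan is to route everything through the random function presentation of \Cref{thm:equivalence1}, which already contains most of the combinatorial content we need.

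\textbf{Forward direction.} Starting from $(\mathcal R, \mathcal G, \Gamma, T) \in \mathrm{Dyn}(\mathcal S)$, I would first prove by induction on $t$ that $R_{t,0} = R_{0,t}$ for every $t \in \mathbb Z^+$. This reflects the fact that the shift $\sigma$ in \Cref{eq:tshift} transports the positive-indexed tail of the sequence leftward, while $T$ only overwrites the central entry $r_0$. Consequently, $(R_{t,0})_{t \in \mathbb Z^+}$ are i.i.d.\ samples from $\Gamma$, independent of $C_0$. Writing $T(s,r) = (T_1(s,r), T_2(s,r))$ and setting $F_t := s \mapsto T_1(s, R_{t,0})$, I obtain i.i.d.\ random functions with $C_{t+1} = F_t(C_t)$, reducing \Cref{def:markovd} to \Cref{def:markovr} with random dynamics equal to the pushforward of $\Gamma$ under $r \mapsto T_1(\cdot, r)$. \Cref{thm:equivalence1} then yields existence and uniqueness of $p \in \mathrm{SM}(\mathcal S)$, with explicit formula $p(s, s') = \Gamma(\{r : T_1(s, r) = s'\})$. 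The $m$-sided claim follows because the pushforward of an $m$-sided measure is $m$-sided. For the measure-preserving claim, I would compute
\[\sum_{s' \in \mathcal S} \pi(s')\, p(s', s) \;=\; (\pi \times \Gamma)\bigl(T^{-1}(\{s\} \times \mathcal R)\bigr) \;=\; (\pi \times \Gamma)(\{s\} \times \mathcal R) \;=\; \pi(s),\]
invoking $(\pi \times \Gamma)$-measure-preservation of $T$ in the middle equality.

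\textbf{Reverse direction, base cases.} Given $p \in \mathrm{SM}(\mathcal S)$, use \Cref{thm:equivalence1} to obtain $\Gamma_0 \in \pmeas(\mathcal S^\mathcal S)$ realizing $p$. For the bare existence claim, setting $\mathcal R := \mathcal S^\mathcal S$, $\Gamma := \Gamma_0$, and $T(s, f) := (f(s), f)$ suffices; verifying the correspondence amounts to checking $p(s, s') = \Gamma_0(\{f : f(s) = s'\})$, which is precisely what \Cref{thm:equivalence1} delivers. When $p$ is $m$-sided, \Cref{thm:equivalence1} allows $\Gamma_0$ uniform on a multiset $\{f_0, \ldots, f_{m-1}\}$; relabeling yields the fair $m$-sided die $\mathcal R = \mathbb Z_m$, $\Gamma = \tfrac{1}{m} \sharp$, with $T(s, r) := (f_r(s), r)$. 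If additionally $p \in \mathrm{DM}(\mathcal S)$, \Cref{thm:equivalence1} makes the $f_r$ bijective, so $T$ is a bijection on each fiber $\mathcal S \times \{r\}$, hence on $\mathcal S \times \mathbb Z_m$; such a $T$ automatically preserves $\sharp \times \Gamma$.

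\textbf{Reverse direction, general $\pi$-stationary case.} I expect this to be the main obstacle, because it demands both bijectivity and $(\pi \times \Gamma)$-measure-preservation with a non-uniform $\pi$. My plan is to reduce to the doubly stochastic case just handled by uniformizing the state space. Choose $N \in \mathbb N$ so that $N \pi(s) \in \mathbb N$ for every $s$, form the lifted space $\tilde{\mathcal S} := \{(s, i) : s \in \mathcal S,\; i \in \mathbb Z_{N \pi(s)}\}$, and define $\tilde p((s, i), (s', j)) := p(s, s') / (N \pi(s'))$. A short calculation using $p \in \mathrm{SM}_\pi(\mathcal S)$ shows $\tilde p$ is doubly stochastic and finite-sided on $\tilde{\mathcal S}$, so the previous case produces a bijection $\tilde T$ on $\tilde{\mathcal S} \times \mathbb Z_{\tilde m}$ realizing $\tilde p$. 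Finally, I would view the copy coordinate $i$ as part of the randomness and bundle it with $\tilde r \in \mathbb Z_{\tilde m}$ into a single $\mathcal R$-coordinate, pushing the uniform measure on $\tilde{\mathcal S} \times \mathbb Z_{\tilde m}$ down to a measure $\Gamma$ on $\mathcal R$; this transforms $\tilde T$ into a bijective $T$ on $\mathcal S \times \mathcal R$ whose $\mathcal S$-projection is $p$ and which preserves $\pi \times \Gamma$. Because the copy range $\mathbb Z_{N\pi(s)}$ depends on $s$, the resulting $\Gamma$ is generally not the fair $m$-sided die, which is exactly why the theorem states the finite-sided-$\pi$ and fair-die conclusions as separate clauses rather than combining them.
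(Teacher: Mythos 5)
Your forward direction and the base cases of the converse are correct and match the paper's proof essentially step for step, and your reduction of the $\pi$-stationary case via state duplication is exactly the paper's strategy (its $m\pi$-duplication, \Cref{def:dup}). The gap is in the last step, which is where the real difficulty of the theorem lives. First, a smaller issue: you assert that $\tilde p$ is finite-sided, but the hypothesis only makes $\pi$ finite-sided, not $p$; the entries $p(s,s')/(N\pi(s'))$ can be irrational, so you cannot invoke your ``$m$-sided plus doubly stochastic'' case. This is repairable: the general Birkhoff--von Neumann argument in \Cref{thm:equivalence1} supplies a $\Gamma$ supported on $\mathrm{Bij}(\tilde{\mathcal S})$ with no sidedness assumption, and $T(s,f)=(f(s),f)$ is then automatically bijective and $(\sharp\times\Gamma)$-preserving on $\tilde{\mathcal S}\times\mathrm{Bij}(\tilde{\mathcal S})$.

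The serious gap is the final ``bundle the copy coordinate into $\mathcal R$'' step. The copy index $i$ ranges over $\mathbb Z_{N\pi(s)}$, a set whose size depends on $s$ and is unbounded whenever $\pi$ is (finite-sidedness of $\pi$ does not bound it), so there is no fixed discrete set $\mathcal R$ into which the pair $(i,\tilde r)$ embeds, and ``pushing the uniform measure down'' does not by itself produce a measure of the product form $\pi\times\Gamma$ on $\mathcal S\times\mathcal R$ --- which is what \Cref{def:markovd} and the measure-preservation claim require. The paper's resolution is a concrete rescaling device: take $\mathcal R := [0,1)\times\mathrm{Bij}(\mathcal S_\mathrm{dup})$ with $\Gamma := \lambda\times(\text{the bijection measure})$, encode the duplicate $(s,i)$ as the subinterval $I_{s,i}=[\tfrac{i}{m\pi(s)},\tfrac{i+1}{m\pi(s)})$ of $[0,1)$, and let $T$ map $\{s\}\times I_{s,i}\times\{f\}$ affinely onto $\{s'\}\times I_{s',i'}\times\{f\}$, where $(s',i')=f((s,i))$. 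Each such piece has $\pi\times\lambda$-measure exactly $\tfrac 1m$ on both sides, which is precisely what makes the glued-together $T$ a $(\pi\times\Gamma)$-preserving bijection; a final computation then recovers $p(s,s')$ as the induced transition probability. Without this (or an equivalent) construction, your argument stops short of the theorem's hardest claim.
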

	
	The proof is based on that of \Cref{thm:equivalence1}, and is detailed in \Cref{sec:appendixproof}.
	
	\Cref{thm:equivalence2} expresses every time-homogeneous Markov chain as the macroscopic component of a deterministic system, whose dynamics are given by \Cref{eq:tshift}. Conversely, in every dynamical system of this form, the macroscopic component is a time-homogeneous Markov chain. While the dynamics of real physical systems may take other forms, many of them are likewise believed to closely approximate time-homogeneous Markov chains \citep{werndl2009deterministic}.
	
	For ease of analysis, we base our modeling on \Cref{eq:tshift}. In order to build further intuition on time and reversibility, the following subsections demonstrate some basic applications. More sophisticated applications, which require the theory of \Cref{sec:spca}, are postponed until \Cref{sec:discussion}.
	
	\subsection{The Past Hypothesis}
	\label{sec:pasthypothesis}
	
	Consider \citeauthor{albert2001time}'s \citep{albert2001time} proposal that the arrow of time requires a \textbf{Past Hypothesis}, i.e., a special initial condition. Within our model, we can investigate exactly which initial conditions are suitable. \Cref{thm:secondmarkov} says that the entropy of a Markov chain, with respect to a fixed stationary distribution $\pi$, cannot decrease. If we want entropy to actually increase rather than stay fixed, it's necessary for $\mu$ to have less than maximal entropy. However, sub-maximal entropy alone is insufficient: if the underlying dynamics have a CPT-like symmetry, any configuration that exhibits sub-maximal and increasing entropy, can be converted into one in which entropy is sub-maximal but \emph{decreasing}.
	
	A closer look at the initial distribution $\bar\mu=\mu\times\Gamma^\mathbb Z$ from \Cref{eq:markovd1} reveals that its microscopic component is assumed to have the special form $\Gamma^\mathbb Z$. In the case of the fair $m$-sided die, we saw that $\Psi$ maps this to the uniform distribution on $\mathcal S_\mathrm{baker}$. This agrees with \citeauthor{albert2001time}'s Mentaculus, in which the starting distribution is assumed to be uniform over some macrostate.
	
	In fact, we can relax the form of the initial distribution $\bar\mu$: it's sufficient that it be \textbf{absolutely continuous} with respect to $\sharp\times\Gamma^\mathbb Z$, meaning that it can be written $d\bar\mu = f\,d(\sharp\times\Gamma^\mathbb Z)$. Here, $f:\mathcal S\times\mathcal R^\mathbb Z\rightarrow\mathbb R^+$ is called the Radon-Nikodym derivative, or \textbf{probability density function}, of $\bar\mu$ with respect to $\sharp\times\Gamma^\mathbb Z$. Explicitly, this means that for all $A\in\powerset{\mathcal S}\times\mathcal G^\mathbb Z$,
	\begin{equation}
		\label{eq:generalinit}
		\bar\mu(A)
		:= \int_A f\,d(\sharp\times\Gamma^\mathbb Z)
		= \sum_{s\in\mathcal S}\int_{\{r\in\mathcal R^\mathbb Z:\,(s,r)\in A\}} f(s,r)\,d\Gamma^\mathbb Z(r).
	\end{equation}
	
	When $\bar\mu$ has this form, to an increasingly good approximation as $t$ gets large, $\mathbf C$ behaves like a time-homogeneous Markov chain. To prove it, we first need a technical lemma. In measure theory, a \textbf{semiring of sets} is a collection of sets $\mathcal A$, such that $\emptyset\in\mathcal A$ and, for all $A,B\in\mathcal A$, $A\cap B$ is in $\mathcal A$ and $A\setminus B$ is a finite disjoint union of elements of $\mathcal A$. We show that probability density functions can be approximated by finite linear combinations of characteristic functions on a semiring.
	
	\begin{lemma}
		\label{lem:semiring}
		Let $(\Omega,\mathcal F,\mu)$ be a $\sigma$-finite measure space, let $\mathcal A$ be a semiring of sets generating $\mathcal F$, and let $f:\Omega\rightarrow\mathbb R^+$ be a probability density with respect to $\mu$, i.e., $\int_\Omega f \,d\mu = 1$. Then, for all $\varepsilon>0$, there exist $k\in\mathbb Z^+$ and $(\alpha_i,A_i)_{i=1}^k$, with $\alpha_i>0$ and the sets $A_i\in\mathcal A$ mutually disjoint, such that $g:=\sum_{i=1}^k\alpha_i\mathbf 1_{A_i}$ is a probability density with respect to $\mu$, satisfying
		\begin{align}
			\label{eq:semiring}
			\int_\Omega\left|f - g\right| \,d\mu &< \varepsilon.
		\end{align}
	\end{lemma}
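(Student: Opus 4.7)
The plan is to build the approximation in three stages: first approximate $f$ by a simple function on arbitrary $\mathcal F$-measurable sets, then replace those sets by finite disjoint unions of $\mathcal A$-sets, and finally renormalize to restore the unit mass condition.

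\textbf{Stage 1 (reduction to a simple function).} By $\sigma$-finiteness, write $\Omega = \bigcup_{n} \Omega_n$ with $\Omega_n$ increasing and $\mu(\Omega_n) < \infty$. The truncations $f_n := \min(f, n)\mathbf 1_{\Omega_n}$ increase pointwise to $f$, so by monotone convergence $\int_\Omega |f - f_n|\,d\mu \to 0$. Fix $n$ large enough that this is less than $\varepsilon/4$; $f_n$ is bounded and supported on a set of finite measure. A standard construction (partition the range $[0,n]$ into dyadic intervals of width $\le \varepsilon/(4\mu(\Omega_n)+1)$ and take preimages) then yields a simple function $h = \sum_{i=1}^{k} \beta_i \mathbf 1_{B_i}$ with $\beta_i > 0$, $B_i \in \mathcal F$ mutually disjoint and of finite $\mu$-measure, such that $\int_\Omega |f_n - h|\,d\mu < \varepsilon/4$.

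\textbf{Stage 2 (replacement by semiring sets).} I will invoke the standard approximation fact: if $\mathcal A$ is a semiring generating $\mathcal F$ and $\mu$ is $\sigma$-finite, then for every $B \in \mathcal F$ with $\mu(B) < \infty$ and every $\eta > 0$ there exists a finite disjoint union $C = \bigsqcup_{j} A_{j}$ of sets $A_j \in \mathcal A$ with $\mu(B \triangle C) < \eta$. (This follows because the algebra generated by $\mathcal A$ consists precisely of such finite disjoint unions, and Carathéodory's extension together with $\sigma$-finiteness give outer-measure approximation from the generating algebra.) Applying this to each $B_i$ with $\eta = \varepsilon/(4 k \max_i \beta_i)$ produces disjoint decompositions $C_i = \bigsqcup_j A_{i,j}$; by intersecting each $A_{i,j}$ with the complements of the previously chosen $C_{i'}$ (which preserves membership in $\mathcal A$ up to the semiring's finite-disjoint-difference property) we may further assume the $C_i$ are mutually disjoint. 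Setting $h' := \sum_i \beta_i \mathbf 1_{C_i} = \sum_{i,j} \beta_i \mathbf 1_{A_{i,j}}$ yields
\[
\int_\Omega |h - h'|\,d\mu \;\le\; \sum_{i=1}^k \beta_i\, \mu(B_i \triangle C_i) \;<\; \varepsilon/4.
\]

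\textbf{Stage 3 (renormalization).} By the triangle inequality, $\int h'\,d\mu$ lies within $\varepsilon/2$ of $1$; in particular it is positive once $\varepsilon < 2$. Let $g := h'/\int_\Omega h'\,d\mu$. Then $g$ is a probability density, still of the required form $\sum_{i,j} \alpha_{i,j} \mathbf 1_{A_{i,j}}$ with $\alpha_{i,j} > 0$ and the $A_{i,j}$ disjoint elements of $\mathcal A$ (discard any index pair with zero coefficient, which does not occur here since all $\beta_i > 0$). The scaling factor satisfies $|1 - 1/\int h'\,d\mu| \cdot \int h'\,d\mu = |\int h'\,d\mu - 1|$, so $\int |h'-g|\,d\mu < \varepsilon/2$. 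Combining the three $L^1$ estimates gives $\int_\Omega |f - g|\,d\mu < \varepsilon$, as required.

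\textbf{Main obstacle.} The only nontrivial point is the set-approximation claim used in Stage 2, namely that every finite-measure $\mathcal F$-set can be approximated in symmetric difference by a finite disjoint union of semiring sets. Everything else is bookkeeping. This claim is classical but worth stating carefully: it is where $\sigma$-finiteness enters (to ensure the generated outer measure matches $\mu$) and where the semiring hypothesis is used (so that set differences remain expressible as finite disjoint unions within $\mathcal A$, making the decompositions $C_i$ genuinely of the required form). I would either cite it from a standard measure theory reference or sketch it by showing that the collection of sets admitting such approximations is a monotone class containing the algebra generated by $\mathcal A$.
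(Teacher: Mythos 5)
Your proof is correct and follows essentially the same route as the paper's: approximate $f$ in $L^1$ by a simple function, replace its level sets by finite disjoint unions of semiring sets via the measure approximation theorem (the paper cites Klenke, Thm 1.65(ii), which is exactly the "nontrivial point" you isolate in Stage 2), use the semiring's closure under intersection and finite-disjoint differences to disjointify, and renormalize at the end. The only difference is that you track the $\varepsilon/4$ bookkeeping explicitly where the paper is terser.
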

	
	\begin{proof}
		By definition of the Lebesgue integral, $f$ can be approximated by a finite linear combination of characteristic functions of sets in $\mathcal F$. By the measure approximation theorem \citep[Thm 1.65(ii)]{klenke2020probability}, these characteristic functions can themselves be approximated by finite linear combinations of characteristic functions of sets in $\mathcal A$; putting them together yields \Cref{eq:semiring}. Since semirings are closed under set difference and intersection, the sets $A_i$ can be made disjoint. Furthermore, \Cref{eq:semiring} implies that
		\[
		\left|1 - \int_\Omega g\,d\mu\right|
		= \left|\int_\Omega (f-g)\,d\mu\right|
		\le \int_\Omega |f-g|\,d\mu
		< \varepsilon,
		\]
		
		so $g$ is almost a probability density. Normalizing $g$ increases the error in \Cref{eq:semiring} to at most $2\varepsilon$; since $\varepsilon$ was arbitrary, the proof is finished.
	\end{proof}
	
	\begin{theorem}
		\label{thm:markovapprox}
		Suppose $(\mathcal R,\mathcal G,\Gamma,T)\in\mathrm{Dyn}(\mathcal S)$, and consider the \textbf{generalized initial condition} $\bar\mu\in\pmeas(\mathcal S\times\mathcal R^\mathbb Z)$, defined by $d\bar\mu = f\,d(\sharp\times\Gamma^\mathbb Z)$ for some probability density function $f:\mathcal S\times\mathcal R^\mathbb Z\rightarrow\mathbb R^+$. Let the random variables $(\mathbf C,\mathbf R) = (C_t,R_t)_{t\in\mathbb Z^+}$ be distributed such that
		
		\begin{align}
			\label{eq:markovapprox1}
			\prob_{(C_0,R_0)} &= \bar\mu,
			\\\label{eq:markovapprox2}
			(C_{t+1},R_{t+1}) &= \overline T(C_t,R_t) \qquad\forall t\in\mathbb Z^+,
		\end{align}
		
		with $\overline T$ defined by \Cref{eq:tshift}. Then,
		\[\lim_{t\rightarrow\infty}\lVert \prob_{C_{\ge t}} - \Phi_\mathrm{determ}(\prob_{C_t},\mathcal R,\mathcal G,\Gamma,T) \rVert_\mathrm{TV} = 0,\]
		
		and the convergence rate is uniform in the choice of transition function $T$.
	\end{theorem}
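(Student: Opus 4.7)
The plan is to exploit the observation that $C_{\geq t}$ is a deterministic function of $C_t$ together with the as-yet-unused dice $R_{t,\geq 0}$, and to show that under the given dynamics these dice become effectively independent of $C_t$ and $\Gamma^{\mathbb Z^+}$-distributed after enough time. Unrolling \Cref{eq:tshift}, one checks by induction that $R_{t,i}=R_{0,\,t+i}$ for all $i\geq 0$, while $C_t$ depends only on $C_0$ and the finite initial segment $R_{0,0},\ldots,R_{0,t-1}$. Hence, if the tail $R_{0,\geq t}$ were $\Gamma^{\mathbb Z^+}$-distributed and independent of $(C_0,R_{0,<t})$, then \Cref{thm:equivalence2} applied starting from $\prob_{C_t}$ would give $\prob_{C_{\geq t}}=\Phi_\mathrm{determ}(\prob_{C_t},\mathcal R,\mathcal G,\Gamma,T)$ exactly.

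For general $f$ this independence fails, but it can be arranged up to small total-variation error. First I would apply \Cref{lem:semiring} to the $\sigma$-finite measure space $(\mathcal S\times\mathcal R^\mathbb Z,\,\powerset{\mathcal S}\otimes\mathcal G^\mathbb Z,\,\sharp\times\Gamma^\mathbb Z)$, using the semiring of finite-dimensional cylinders $\{s\}\times\prod_{i\in\mathbb Z}B_i$ in which $B_i=\mathcal R$ for all but finitely many $i$; this semiring generates the product $\sigma$-algebra. Given $\varepsilon>0$, this produces a simple probability density $g=\sum_{i=1}^k\alpha_i\mathbf 1_{A_i}$ with $\int|f-g|\,d(\sharp\times\Gamma^\mathbb Z)<\varepsilon$, where each $A_i$ constrains only coordinates inside some finite index set. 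Let $N$ be the largest such index across all the $A_i$, and let $\bar\mu'$ be the probability measure with density $g$.

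Under $\bar\mu'$, for every $t>N$, the tail $R_{0,\geq t}$ is disjoint from the constrained coordinates of every cell $A_i$, so conditional on $(C_0,R_{0,<t})$ lying in any $A_i$, the tail retains the product measure $\Gamma^{\mathbb Z^+}$ and is independent of the constraining part. Mixing over $i$, the same unconditional statement holds, and since $C_t$ is a function of $(C_0,R_{0,<t})$, the tail is independent of $C_t$ with marginal $\Gamma^{\mathbb Z^+}$. By the observation of the first paragraph, this gives $\prob'_{C_{\geq t}}=\Phi_\mathrm{determ}(\prob'_{C_t},\mathcal R,\mathcal G,\Gamma,T)$ \emph{exactly} for every such $t$.

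Finally, I would assemble the pieces via the triangle inequality. Total variation does not increase under measurable pushforwards (in particular, under $(C_0,R_0)\mapsto C_{\geq t}$ and $(C_0,R_0)\mapsto C_t$), and $\Phi_\mathrm{determ}(\cdot,\mathcal R,\mathcal G,\Gamma,T)$ is linear and nonexpansive in its first argument since it is the pushforward of $\mu\times\Gamma^\mathbb Z$ through a fixed measurable map. Combining these with $\lVert\bar\mu-\bar\mu'\rVert_\mathrm{TV}\leq\tfrac12\int|f-g|\,d(\sharp\times\Gamma^\mathbb Z)<\varepsilon/2$ yields, for every $t>N$,
\[\lVert\prob_{C_{\geq t}}-\Phi_\mathrm{determ}(\prob_{C_t},\mathcal R,\mathcal G,\Gamma,T)\rVert_\mathrm{TV}\leq 2\lVert\bar\mu-\bar\mu'\rVert_\mathrm{TV}<\varepsilon.\]
Since $\varepsilon$ is arbitrary and $N$ depends only on $f$, not on $T$, the convergence to $0$ is uniform in $T$. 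The main obstacle is picking the approximating class correctly: one needs a semiring whose generators localize the density in only finitely many coordinates, so that the shift eventually carries all dependence past position $0$ and out of reach of $T$; once the finite-cylinder semiring is in hand, the rest reduces to bookkeeping of ``used'' versus ``fresh'' coordinates together with standard total-variation estimates.
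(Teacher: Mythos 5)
Your proposal is correct and follows essentially the same route as the paper: approximate the density $f$ by a simple function over the finite-cylinder semiring via \Cref{lem:semiring}, observe that beyond the largest constrained index the unused dice are i.i.d.\ $\Gamma$-distributed and independent of the macrostate so the approximating chain is exactly Markov from that time on, and then transfer the bound back to $\bar\mu$ by the triangle inequality together with nonexpansiveness of pushforwards and of $\Phi_\mathrm{determ}$ in its initial condition. The only (harmless) refinement is that you track only the non-negative tail $R_{0,\ge t}$, whereas the paper freezes both tails $|i|\ge t$.
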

	
	\begin{proof}
		Let $\mathcal A$ consist of all subsets of $\mathcal S\times\mathcal R^\mathbb Z$ that are of the form
		\begin{equation}
			\label{eq:cylinder}
			B' \times \prod_{i\in\mathbb Z} B_i,
		\end{equation}
		
		where $B'\subset\mathcal S$, $B_i\in\mathcal G$, and, for all but finitely many $i\in\mathbb Z$, $B_i=\mathcal R$. $\mathcal A$ is the semiring of cylinder sets generating the $\sigma$-algebra $\powerset{\mathcal S}\times \mathcal G^\mathbb Z$.
		
		Fix the approximation error $\varepsilon>0$. By \Cref{lem:semiring}, there exist positive coefficients $\alpha_1,\ldots,\alpha_k$ and disjoint sets $A_1,\ldots,A_k\in\mathcal A$, such that the probability measure $d\bar\nu = g\,d(\sharp\times\Gamma^\mathbb Z)$, with density function $g := \sum_{j=1}^k \alpha_j\mathbf 1_{A_j}$, satisfies
		\[\lVert \bar\mu - \bar\nu \rVert_\mathrm{TV}
		= \frac 12\int_{\mathcal S\times\mathcal R^\mathbb Z}\left|f - g\right| \,d(\sharp\times\Gamma^\mathbb Z) < \frac\varepsilon 2.\]
		
		Now, expressing each of the $A_j\in\mathcal A$ as in \Cref{eq:cylinder}, let $t$ be the maximum of $1+|i|$, over all $i\in\mathbb Z$ for which at least one of the $A_j$ has $B_i\ne\mathcal R$ (or $0$ if there is no such $i$). Then, for each $j=1,\ldots,k$, we can write $A_j = B''_j\times \mathcal R^{\{i\in\mathbb Z:\, |i|\ge t\}}$, where $B''_j\in\powerset{\mathcal S}\times\mathcal G^{\{i\in\mathbb Z:\, |i|< t\}}$. As a result, under the distribution $\mathbb P^g$ defined by
		\begin{align*}
			\prob^g_{(C_0,R_0)} &= \bar\nu,
			\\(C_{t+1},R_{t+1}) &= \overline T(C_t,R_t) \qquad\forall t\in\mathbb Z^+,
		\end{align*}
		
		the ``dice'' $(R_{0,i})_{|i|\ge t}$ are i.i.d. with marginals $\Gamma$, independent of the other initial variables. Therefore, starting at time $t$, $\mathbb P^g$ becomes a time-homogeneous Markov chain, satisfying
		\[\prob^g_{C_{\ge t}} = \Phi_\mathrm{determ}(\prob^g_{C_t},\mathcal R,\mathcal G,\Gamma,T).\]
		
		Since the initial variables of $\Phi_\mathrm{determ}(\mu,\mathcal R,\mathcal G,\Gamma,T)$ are distributed according to $\mu$, and its subsequent variables evolve independently of $\mu$, it follows that for all $\mu,\nu\in\pmeas(\mathcal S)$,
		\[\lVert \Phi_\mathrm{determ}(\mu,\mathcal R,\mathcal G,\Gamma,T) - \Phi_\mathrm{determ}(\nu,\mathcal R,\mathcal G,\Gamma,T) \rVert_\mathrm{TV}
		= \lVert \mu - \nu \rVert_\mathrm{TV}.\]
		
		Putting these equations together yields
		\[\lVert \prob^g_{C_{\ge t}} - \Phi_\mathrm{determ}(\prob_{C_t},\mathcal R,\mathcal G,\Gamma,T) \rVert_\mathrm{TV}
		= \lVert \prob^g_{C_t} - \prob_{C_t} \rVert_\mathrm{TV},\]
		
		which we plug into the triangle inequality to get
		\begin{align*}
			\lVert \prob_{C_{\ge t}} - \Phi_\mathrm{determ}(\prob_{C_t},\mathcal R,\mathcal G,\Gamma,T) \rVert_\mathrm{TV}
			&\le \lVert \prob_{C_{\ge t}} - \prob^g_{C_{\ge t}} \rVert_\mathrm{TV}
			+ \lVert \prob^g_{C_{\ge t}} - \Phi_\mathrm{determ}(\prob_{C_t},\mathcal R,\mathcal G,\Gamma,T) \rVert_\mathrm{TV}
			\\&= \lVert \prob_{C_{\ge t}} - \prob^g_{C_{\ge t}} \rVert_\mathrm{TV}
			+ \lVert \prob_{C_t} - \prob^g_{C_t} \rVert_\mathrm{TV}
			\\&\le 2\lVert \prob_{(C_0,R_0)} - \prob^g_{(C_0,R_0)} \rVert_\mathrm{TV}
			\\&= 2\lVert \bar\mu-\bar\nu \rVert_\mathrm{TV}
			\\&< \varepsilon,
		\end{align*}
		
		where the middle inequality holds because \Cref{eq:markovapprox2} makes $\mathbf C$ a deterministic function of $(C_0,R_0)$. Since $\varepsilon$ did not depend on $T$, taking $\varepsilon\rightarrow 0$ proves the uniform convergence.
	\end{proof}
	
	For some intuition regarding the proof of \Cref{thm:markovapprox}, consider the case where $(\mathcal R,\mathcal G,\Gamma)$ is the fair $m$-sided die. Using $\Psi$, identify $\mathcal R^\mathbb Z$ with the unit square, so that $\mathcal G^\mathbb Z,\Gamma^\mathbb Z$ become its Borel $\sigma$-algebra and Lebesgue (i.e., area) measure, respectively. The cylinder sets $A_j\in\mathcal A$ become finite unions of squares whose vertices have $m$-adic rational coordinates. The absolutely continuous distribution $\bar\mu$ is ``spread out'' in such a way that it appears uniform at scales smaller than the $A_j$'s. After enough iterations of stretching by the baker's map, the $x$ coordinate becomes uniform on large scales as well. $T$ only looks at a coarse description of the $x$ coordinate; thus, for large $t$, it yields essentially Markovian behavior.
	
	Of course, this result would be much less interesting if processes only became Markovian after attaining heat death. Fortunately, the convergence is uniform in $T$. Therefore, for fixed $\bar\mu$, we can always choose a $T$ that mixes sufficiently slowly, so that the process becomes Markovian long before its entropy reaches a maximum.
	
	In summary, \Cref{eq:generalinit} provides our general Past Hypothesis. While it allows for messy behavior in some neighborhood of $t=0$, a stable arrow of time is sure to emerge eventually for $t\gg 0$. Of course, if $T$ is invertible, then the argument applies equally well in reverse: a stable arrow of time must also emerge in the opposite direction for $t\ll 0$. By \Cref{thm:markovapprox}, the forward dynamics are ultimately determined by $(\mathcal R,\mathcal G,\Gamma,T)$, while the backward dynamics are ultimately determined by $(\mathcal R,\mathcal G,\Gamma,T^{-1})$.  By \Cref{thm:equivalence2}, each of these corresponds to some transition matrix; to close out this section, we investigate how the forward and backward matrices are related.
	
	\subsection{Dual Arrows}
	\label{sec:markovdual}
	
	Fix $\pi\in\meas(\mathcal S)$ and $(\mathcal R,\mathcal G,\Gamma,T)\in\mathrm{Dyn}(\mathcal S)$, with $T$ bijective and $(\pi\times\Gamma)$-measure-preserving. Then, $\overline T$ is a deterministic, reversible, chaotic map that acts on the microstate $(C_t,R_t)$. The macroscopic transition probabilities $\prob(C_{t+1}=s'\mid C_t=s)$ depend on how $R_t$ is distributed. When $R_t$ is ``uniform'' (i.e., its distribution is $\Gamma^\mathbb Z$) and independent of $C_t$, these probabilities are given by the unique transition matrix $p\in\mathrm{SM}_\pi(\mathcal S)$ that \Cref{thm:equivalence2} assigns to $(\mathcal R,\mathcal G,\Gamma,T)$. By \Cref{thm:secondmarkov}, it follows that $H_\pi(C_t) \le H_\pi(C_{t+1})$.
	
	In \Cref{def:markovd}, $R_0$ meets this condition, but $R_t$ at later times tends to become correlated with $C_t$. Fortunately, since $T$ thereafter only encounters the terms $R_{t,\mathbb Z^+}$, which \emph{are} uniform and independent, the forward dynamics continue to abide by $p$. Thus, $R_t$ has the remarkable property that it appears well-mixed for the purposes of forward evolution, but not for the purposes of backward evolution. Indeed, using \Cref{thm:equivalence2} to pass to the transition matrix presentation, \Cref{eq:bayes} shows that the backward probabilities are inhomogeneous.
	
	As an intuition-building (albeit unrealistic) thought experiment, we can imagine perturbing an arbitrary microstate $(C_t,R_t)$, perhaps corresponding to the present state of the Universe. That is, without changing $C_t$, we replace $R_t$ with a random sample from $\Gamma^\mathbb Z$ \footnote{By \Cref{thm:markovapprox}, more general perturbations would yield similar results.}. Since $\overline T$ is chaotic, the evolution of $C_t$ is highly sensitive to this microscopic perturbation; nonetheless, the perturbed forward evolution would still appear \emph{plausible}. To be mathematically precise, the forward transition probabilities would be unchanged; the new trajectory simply resamples from the same matrix $p$ as the unperturbed trajectory.
	
	On the other hand, the perturbed backward evolution would appear truly incredible. The perturbation destroys any correlations in $R_t$ that would otherwise conspire to restore the past's low entropy. Reversible phenomena, such as celestial orbits, would rewind to their former configurations relatively unharmed. However, thermodynamically irreversible events would fail to become undone: a shattered vase would no longer enjoy a past in which it was whole, nor would a fried egg return to its raw origins (perhaps it would overcook on the pan instead). We do not presume to guess how bizarrely life on Earth would be afflicted by such a reversal in the arrow of time.
	
	However, the new trajectory can be understood in basic terms. Making $R_t$ uniform restores a symmetry between $t$'s past and future: applying \Cref{thm:equivalence2} to $(\mathcal R,\mathcal G,\Gamma,T)$ and $(\mathcal R,\mathcal G,\Gamma,T^{-1})$, respectively, we find that the forward and backward dynamics are both homogeneous in time, corresponding to two different transition matrices.
	
	We now show that these matrices are dual to each other. Recall that, for fixed $\pi\in\meas(\mathcal S)$, the dual of a stochastic matrix $p\in\mathrm{SM}_\pi(\mathcal S)$ is defined by \Cref{eq:dual}. If $p$ is recurrent, then it uniquely determines $\pi$, hence also $p_\mathrm{dual}$; otherwise, the choice of $\pi$ must be clarified by context.
	
	By introducing one additional piece of notation, we can also reverse the random function presentation. For $\Gamma\in\pmeas(\mathrm{Bij}(\mathcal S))$, define $\Gamma_\mathrm{dual}\in\pmeas(\mathrm{Bij}(\mathcal S))$ by $\Gamma_\mathrm{dual}(A) := \Gamma(\{f\in\mathrm{Bij}(\mathcal S):\, f^{-1}\in A\})$. In other words, $\Gamma_\mathrm{dual}$ samples a bijection from $\Gamma$ and inverts it.
	
	\begin{theorem}
		\label{thm:dual}
		Fix $\mu\in\pmeas(\mathcal S)$. Suppose $p\in\mathrm{DM}(\mathcal S)$ and $\Gamma\in\pmeas(\mathrm{Bij}(\mathcal S))$.
		If $\Phi_\mathrm{matrix}(\mu, p) = \Phi_\mathrm{random}(\mu, \Gamma)$,
		then $\Phi_\mathrm{matrix}(\mu, p_\mathrm{dual}) = \Phi_\mathrm{random}(\mu, \Gamma_\mathrm{dual})$.
		
		Alternatively, suppose $\pi\in\meas(\mathcal S)$ is strictly positive, $p\in\mathrm{SM}_\pi(\mathcal S)$, and $(\mathcal R,\mathcal G,\Gamma,T)\in\mathrm{Dyn}(\mathcal S)$, with $T$ bijective and $(\pi\times\Gamma)$-measure-preserving.
		If $\Phi_\mathrm{matrix}(\mu, p) = \Phi_\mathrm{determ}(\mu, \mathcal R,\mathcal G,\Gamma,T)$,
		then $\Phi_\mathrm{matrix}(\mu, p_\mathrm{dual}) = \Phi_\mathrm{determ}(\mu, \mathcal R,\mathcal G,\Gamma,T^{-1})$.
	\end{theorem}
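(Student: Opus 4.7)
The plan is to exploit the uniqueness clauses of Theorems \ref{thm:equivalence1} and \ref{thm:equivalence2}: for each part of the claim, I read off the transition matrix $p'$ that the reversed random or deterministic presentation induces, and then verify that it coincides with $p_\mathrm{dual}$. No new machinery beyond those two theorems and the definition of $p_\mathrm{dual}$ is needed.

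For the first part, I apply the explicit formula from the proof of Theorem \ref{thm:equivalence1}, namely $p(s,s')=\Gamma(\{f\in\mathrm{Bij}(\mathcal S):f(s)=s'\})$, this time to $\Gamma_\mathrm{dual}$. Unwinding the definition of $\Gamma_\mathrm{dual}$ gives $p'(s,s')=\Gamma(\{f:f^{-1}(s)=s'\})=\Gamma(\{f:f(s')=s\})=p(s',s)$. Because $p\in\mathrm{DM}(\mathcal S)$, the counting measure is stationary and the dual collapses to the transpose: $p_\mathrm{dual}(s,s')=(\sharp(s')/\sharp(s))\,p(s',s)=p(s',s)$. The two agree, and uniqueness from Theorem \ref{thm:equivalence1} completes Part 1.

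For the second part, let $p'$ denote the unique transition matrix Theorem \ref{thm:equivalence2} assigns to $(\mathcal R,\mathcal G,\Gamma,T^{-1})$; since $T^{-1}$ inherits bijectivity and $(\pi\times\Gamma)$-measure-preservation from $T$, the theorem further guarantees $p'\in\mathrm{SM}_\pi(\mathcal S)$. Reading off $p'$ amounts to computing the one-step distribution of $\overline{T^{-1}}$ starting from an $R_0\sim\Gamma^\mathbb Z$ initialization. Only the central die $R_{0,0}\sim\Gamma$ is consulted by $T^{-1}$ at the first step, and the subsequent shift $\sigma$ does not affect which macrostate is produced, so I obtain $p'(s,s')=\Gamma(\{r\in\mathcal R:\pi_\mathcal S(T^{-1}(s,r))=s'\})$, where $\pi_\mathcal S$ denotes projection onto $\mathcal S$. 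The heart of the argument is then the identity $\pi(s)\,p'(s,s')=\pi(s')\,p(s',s)$. Setting $B:=(\{s'\}\times\mathcal R)\cap T^{-1}(\{s\}\times\mathcal R)$, the right-hand side equals $(\pi\times\Gamma)(B)=\pi(s')\,p(s',s)$, while $T$ bijectively sends $B$ onto $T(B)=\{(s,r'')\in\{s\}\times\mathcal R:\pi_\mathcal S(T^{-1}(s,r''))=s'\}$, whose measure is $\pi(s)\,p'(s,s')$. Measure-preservation of $T$ equates these two quantities, and dividing by $\pi(s)$ yields $p'(s,s')=(\pi(s')/\pi(s))\,p(s',s)=p_\mathrm{dual}(s,s')$.

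The main obstacle I foresee is purely bookkeeping rather than conceptual: identifying the image $T(B)$ correctly, and being careful that $\sigma$'s role inside $\overline{T^{-1}}$ does not contaminate the one-step transition so that it truly reduces to a statement about $T^{-1}$ acting on $\mathcal S\times\mathcal R$. Once $B$ and $T(B)$ are written in the right form, the two-line measure-preservation calculation closes the proof without further input.
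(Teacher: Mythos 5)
Your proposal is correct and follows essentially the same route as the paper: the first part is the identical computation $\Gamma_\mathrm{dual}(\{f:f(s)=s'\})=\Gamma(\{f:f(s')=s\})=p(s',s)$, and the second part's key identity $\pi(s)\,p'(s,s')=(\pi\times\Gamma)\bigl((\{s\}\times\mathcal R)\cap T(\{s'\}\times\mathcal R)\bigr)=(\pi\times\Gamma)\bigl((\{s'\}\times\mathcal R)\cap T^{-1}(\{s\}\times\mathcal R)\bigr)=\pi(s')\,p(s',s)$ is exactly the paper's central step, merely packaged via the uniqueness clause of \Cref{thm:equivalence2} rather than as a chain of equalities on conditional probabilities. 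No gaps.
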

	
	\begin{proof}
		Since $\mathrm{DM}(\mathcal S)=\mathrm{SM}_\sharp(\mathcal S)$, the implied choice of stationary measure in the first part is $\sharp$; hence, the dual matrix in \Cref{eq:dual} is simply the transpose of $p$. Let $(\mathbf C,\mathbf F)$ be as in \Cref{def:markovr}, so that $\prob_{\mathbf C}=\Phi_\mathrm{random}(\mu, \Gamma_\mathrm{dual})$. Then, since $F_t$ is independent of $C_{\le t}$, 
		\begin{align*}
			\prob(C_{t+1}=c_{t+1} \mid C_{\le t}=c_{\le t})
			&= \prob(F_t(c_t) = c_{t+1})
			\\&= \Gamma_\mathrm{dual}(\{f\in\mathrm{Bij}(\mathcal S):\, f(c_t)=c_{t+1}\})
			\\&= \Gamma(\{f\in\mathrm{Bij}(\mathcal S):\, f(c_{t+1})=c_t\})
			\\&= p(c_{t+1},c_t)
			\\&= p_\mathrm{dual}(c_t,c_{t+1}).
		\end{align*}
		
		Similarly, under the second set of hypotheses, let $(\mathbf C,\mathbf R)$ be as in \Cref{def:markovd}, so that $\prob_{\mathbf C}=\Phi_\mathrm{determ}(\mu,\mathcal R,\mathcal G,\Gamma,T^{-1})$. Since $R_{t,0} = R_{0,t}$ is independent of $C_{\le t}$,
		\begin{align*}
			\pi(c_t)\prob(C_{t+1}=c_{t+1} \mid C_{\le t}=c_{\le t})
			&= \pi(c_t)\prob(\exists a'\in\mathcal R,\, T^{-1}(c_t,R_{0,t}) = (c_{t+1},a'))
			\\&= \pi(c_t)\prob(\exists a'\in\mathcal R,\, T(c_{t+1},a') = (c_t,R_{0,t}))
			\\&= (\pi\times\Gamma)(\{c_t\}\times\{a\in\mathcal R:\, \exists a'\in\mathcal R, T(c_{t+1},a') = (c_t,a)\})
			\\&= (\pi\times\Gamma)((\{c_t\}\times\mathcal R)\cap T(\{c_{t+1}\}\times\mathcal R))
			\\&= (\pi\times\Gamma)((\{c_{t+1}\}\times\mathcal R)\cap T^{-1}(\{c_t\}\times\mathcal R))
			\\&= (\pi\times\Gamma)(\{c_{t+1}\}\times\{a\in\mathcal R:\, \exists a'\in\mathcal R,\, T(c_{t+1},a) = (c_t,a')\})
			\\&= \pi(c_{t+1}) p(c_{t+1},c_t).
		\end{align*}
		
		Therefore,
		\[\prob(C_{t+1}=c_{t+1} \mid C_{\le t}=c_{\le t})
		= \frac{\pi(c_{t+1})}{\pi(c_t)}p(c_{t+1},c_t) = p_\mathrm{dual}(c_t,c_{t+1}).\]
	\end{proof}
	
	In light of \Cref{thm:dual}, it's natural to extend \Cref{eq:markovm1,eq:markovm2} with the dual \Cref{eq:markovmback}. That is, given $\mu\in\pmeas(\mathcal S)$ and a recurrent matrix $p\in\mathrm{SM}(\mathcal S)$, the \textbf{bidirectional Markov chain} $\mathbf C = (C_t)_{t\in\mathbb Z}$ has its joint distribution uniquely determined by:
	
	\begin{align*}
		\prob(C_0=c_0) &= \mu(c_0),
		\\\prob(C_{t+1}=c_{t+1} \mid C_{\le t}=c_{\le t}) &= p(c_t,c_{t+1}) \qquad\quad\,\,\forall t\in\mathbb Z^+,
		\\\prob(C_{t-1}=c_{t-1} \mid C_{\ge t}=c_{\ge t}) &= p_\mathrm{dual}(c_t,c_{t-1}) \qquad\forall t\in\mathbb Z^-.
	\end{align*}
	
	$\mathbf C$ is effectively two Markov chains glued together at $t=0$, and its causal graph is \Cref{fig:bidirectional}. When the functions in question are bijective, the random function and deterministic function presentations have obvious extensions to negative times. \Cref{thm:dual} then extends the correspondences of \Cref{thm:equivalence1,thm:equivalence2} to bidirectional Markov chains. These might prove helpful for modeling cosmologies in which the Big Bang is replaced by a temporally bidirectional Big Bounce.
	
	\section{Stochastic Partitioned Cellular Automata}
	\label{sec:spca}
	
	As a final extension, we supplement the time coordinate with a space coordinate. The initial condition and dynamics now jointly determine the state (or its distribution) at every point in \textbf{spacetime}. We say the dynamics are \textbf{local} if they consist of independent transitions at all points in spacetime; we say they are homogeneous in space and time, or \textbf{spacetime-homogeneous}, if the transition matrix is identical at all points. We seek a model whose forward transitions are both local and spacetime-homogeneous.
	
	In the physical Universe, local interactions propagate no faster than the speed of light, ensuring a form of causal separation between spacelike separated events. When viewed in reverse, however, macroscopic events seem to conspire in a non-local manner: for example, the fragments of a broken vase rise up simultaneously to reassemble. There is no macroscopic indication that the fragments are about to rise because, once at rest, the fragments have lost the \emph{memory} of their common time of origin. In other words, the fragments decorrelate from each other when they come to rest.
	
	In this section, we identify a new time-reversal asymmetry, the \textbf{Memory Law}, which permits spontaneous decorrelation but forbids spontaneous correlation. It complements the \textbf{Resource Law}, which is a very general and local version of the second law of thermodynamics.
	
	\subsection{Core Definitions}
	
	We begin by defining our discrete spatial structure. It consists of a set of positions $\mathcal X$ linked by a neighbor relation; in other words, a graph. We want $\mathcal X$ to be homogeneous, meaning that all its elements should have neighborhoods that ``look the same''. To that end, we color the edges, and require that each $x\in\mathcal X$ have one outgoing and one incoming edge of each color $\tau\in\mathcal T$. Thus, each color corresponds to a bijection on $\mathcal X$. By coupling the set of positions $\mathcal X$ with the set of times $\frac 12\mathbb Z^+ = \{0,\frac 12,1,\frac 32,\ldots\}$, we obtain the spacetime $\frac 12\mathbb Z^+\times\mathcal X$. The reason for including half-integer times will become

	clear in \Cref{def:spca}. Let's formalize what we have so far.
	
	\begin{definition}
		\label{def:geometry}
		A homogeneous discrete \textbf{spatial geometry} $(\mathcal X, \mathcal T)$ is a set of spatial positions or \textbf{cells} $\mathcal X$, along with a finite set of local translations or \textbf{tracks} $\mathcal T\subset\mathrm{Bij}(\mathcal X)$. We assume an identity track: $\mathrm{id}\in\mathcal T$. Associated with $(\mathcal X, \mathcal T)$ is the quasimetric
		\[d(x,y) := \min\{n\in\mathbb Z^+:\; \exists \tau_1,\ldots\tau_n\in\mathcal T,\; \tau_n(\ldots(\tau_1(x))\ldots)=y\},\]
		
		with $d(x,y):=\infty$ if no such path exists. $d$ is a metric iff $\mathcal T$ is closed under inverses. It determines a partial order on $\frac 12\mathbb Z^+\times\mathcal X$: for times $t,u\in\frac 12\mathbb Z^+$ and positions $x,y\in\mathcal X$, we write $(t,x) < (u,y)$ iff $t< u$ and $d(x,y)\le \lceil u \rceil - \lceil t \rceil$. In this case, we say $(u,y)$ is \textbf{reachable} from, or in the \textbf{future} of, $(t,x)$, and that $(t,x)$ is in the \textbf{past} of $(u,y)$.
		
		Also associated with $(\mathcal X, \mathcal T)$ is the causal graph $\mathrm G(\mathcal X,\mathcal T)$ on the vertex set $\{O\}\cup\left(\frac 12\mathbb Z^+\times\mathcal X\right)$. It has a directed edge from $(t,x)$ to $(u,y)$ iff $(t,x) < (u,y)$ and $u=t+\frac 12$. To allow for dependences in the initial state, there is also an edge from $O$ to $(0,x)$ for all $x\in\mathcal X$.
		
		If either $(t,x) < (u,y)$, $(t,x)=(u,y)$, or $(t,x) > (u,y)$, we say the pairs are \textbf{timelike} related; otherwise, they are \textbf{spacelike} related. In analogy with light cones in relativity, for any subset of spacetime $A\subset\frac 12\mathbb Z^+\times\mathcal X$, let
		\begin{align*}
			\mathrm{Past}(A) &:= \bigcup_{(t,x)\in A} \left\{(u,y)\in\frac 12\mathbb Z^+\times\mathcal X:\;
			(u,y) < (t,x)\right\},
			\\\mathrm{Future}(A) &:= \bigcup_{(t,x)\in A} \left\{(u,y)\in\frac 12\mathbb Z^+\times\mathcal X:\;
			(u,y) > (t,x)\right\},
			\\\mathrm{NonFut}(A) &:= \left(\frac 12\mathbb Z^+\times\mathcal X\right)\setminus\mathrm{Future}(A).
		\end{align*}
	\end{definition}
	
	It's easy to check that $\mathrm G(\mathcal X,\mathcal T)$ has a path from $(t,x)$ to $(u,y)$ iff $(t,x)<(u,y)$. One example of a spatial geometry is the infinite grid $\mathcal X=\mathbb Z^d$, equipped with $2d+1$ tracks: the identity, and the unit displacements. The finite grid $\mathcal X=(\mathbb Z_m)^d$ can be made into a spatial geometry using a periodic variant of the same tracks, connecting at opposite boundaries.
	
	After choosing a spatial geometry, the next step is to define a dynamical model over it. We have a few options here, with different tradeoffs. For example, if we wish to study how nature breaks the symmetry between space and time, then the dynamics should not distinguish between the space and time axes. For completeness, we include such a model in \Cref{sec:appendixspacetime}. There, we find that a spacelike initial condition suffices to break the symmetry, leading to a macroscopic description in which causality is strictly timelike. To keep this section simple, however, we take the special status of the time axis for granted, and seek only to break its symmetry between past and future.
	
	Perhaps the most straightforward model is the ordinary \textbf{cellular automaton (CA)}, whose dynamics compute the state at coordinate $(t,x)\in\mathbb Z^+\times\mathcal X$, as a function of the previous state of all neighboring cells $\{(t-1,\,\tau^{-1}(x)):\; \tau\in\mathcal T\}$. For our purposes, the CA has two major drawbacks: (1) it combines neighborhood aggregation and dynamical evolution into a single step, making analysis more difficult; and (2) the question of whether a given CA is reversible is undecidable and, even when the answer is affirmative, a CA's forward and backward dynamics may require drastically different neighborhood sizes \citep{kari1994reversibility}.
	
	To solve these issues, \citet{morita1989computation} invented the \textbf{partitioned cellular automaton (PCA)}. Both CAs and PCAs are divided into cells $\mathcal X$, but the cells in a PCA are further subdivided along tracks $\mathcal T$. Thus, each pair $(x,\tau)\in\mathcal X\times\mathcal T$ is the coordinate of a \textbf{subcell}. The dynamics of a PCA interleave two distinct steps. In a \textbf{trackwise translation step}, the contents of each subcell $(x,\tau)$ move to $(\tau(x), \tau)$. Then, in a \textbf{cellwise evolution step},  each cell $x$ evolves independently of all the others. Thanks to this separation, a PCA is reversible iff its cellwise evolution rule is. See \citet{kari2018reversible} for a modern survey of reversible cellular automata, including PCAs.
	
	Just as with the Markov chains in the previous section, we will find macroscopic views of certain PCAs to have homogeneous transition probabilities; we call such a view a \textbf{stochastic partitioned cellular automaton (SPCA)}. For the remainder of this section, fix a spatial geometry $(\mathcal X,\mathcal T)$. Additionally, for each track $\tau\in\mathcal T$, fix a countable set $\mathcal S_\tau$ of fine macrostates. Subcells on track $\tau$ take states in $\mathcal S_\tau$; therefore, a whole cell's state belongs to $\mathcal S := \prod_{\tau\in\mathcal T}\mathcal S_\tau$.
	
	An SPCA's \textbf{configuration history} is a collection of random variables $\mathbf C = (C_{t,x,\tau})_{(t,x,\tau)\in\frac 12\mathbb Z^+\times\mathcal X\times\mathcal T}$, such that $C_{t,x,\tau}\in\mathcal S_\tau$ for all $t,x,\tau$. To interpret $\mathrm G(\mathcal X,\mathcal T)$ as a graphical model on $\mathbf C$, replace each vertex $(t,x)$ with the corresponding cell $C_{t,x}$. Thus, $\mathbf C$ is Markov relative to $\mathrm G(\mathcal X,\mathcal T)$ iff the conditional distribution of each vertex $C_{t,x}$, given all its nondescendants $C_{\mathrm{NonFut}(t,x)\setminus\{(t,x)\}}$, depends only on its immediate parents. In this case, the joint distribution is uniquely determined as the product of the conditional distributions. We say that $\mathbf C$ is spacetime-homogeneous if the conditional distributions do not depend on $t$ or $x$.
	
	Let $\mathrm{SPCA}(\mathcal X,\mathcal T,\mathcal S)\subset \pmeas(\mathcal S^{\frac 12\mathbb Z^+\times\mathcal X\times\mathcal T})$ denote the set of spacetime-homogeneous distributions that are Markov relative to $\mathrm G(\mathcal X,\mathcal T)$ and satisfy the trackwise translation
	\begin{align}
		\label{eq:cellulart}
		C_{t+\frac 12,\tau(x),\tau} = C_{t, x, \tau} \qquad\forall (t,x,\tau)\in\mathbb Z^+\times\mathcal X\times\mathcal T.
	\end{align}
	
	What data are needed to uniquely determine an element of $\mathrm{SPCA}(\mathcal X,\mathcal T,\mathcal S)$? Since \Cref{eq:cellulart} determines the behavior at half-integer times, it only remains to specify the behavior at whole-integer times. By homogeneity, and the fact that $(t,x)\in\mathbb N\times\mathcal X$ in $\mathrm G(\mathcal X,\mathcal T)$ has $(t-\frac 12,x)$ as its sole parent, this amounts to choosing a pairing of initial condition and cellwise evolution. In analogy with \Cref{def:markovm,def:markovr,def:markovd}, the cellwise evolution comes in three presentations.
	
	\begin{definition}
		\label{def:spca}
		The transition matrix, random function, and deterministic function presentations of SPCAs on $(\mathcal X,\mathcal T,\mathcal S)$ are, respectively, the surjective maps
		
		\begin{align*}
			\Phi_\mathrm{cell\text-matrix}:\pmeas(\mathcal S^\mathcal X) \times \mathrm{SM}(\mathcal S)
			\rightarrow \mathrm{SPCA}(\mathcal X,\mathcal T,\mathcal S),
			\\\Phi_\mathrm{cell\text-random}:\pmeas(\mathcal S^\mathcal X) \times \pmeas(\mathcal S^\mathcal S)
			\rightarrow \mathrm{SPCA}(\mathcal X,\mathcal T,\mathcal S),
			\\\Phi_\mathrm{cell\text-determ}:\pmeas(\mathcal S^\mathcal X) \times \mathrm{Dyn}(\mathcal S)
			\rightarrow \mathrm{SPCA}(\mathcal X,\mathcal T,\mathcal S),
		\end{align*}
		
		defined as follows. Fix an initial condition $\mu\in\pmeas(\mathcal S^\mathcal X)$.
		
		Given $p\in\mathrm{SM}(\mathcal S)$, let $\mathbf C$ be a configuration history satisfying \Cref{eq:cellulart} and, for all $c:\frac 12\mathbb Z^+\times\mathcal X\rightarrow\mathcal S$,
		\begin{align}
			\label{eq:cellularm1}
			\prob_{C_0} &= \mu,
			\\\label{eq:cellularm2}
			\prob(C_{t,x} = c_{t,x} \mid C_{\mathrm{NonFut}(t-\frac 12,x)} = c_{\mathrm{NonFut}(t-\frac 12,x)})
			&= p(c_{t-\frac 12,x},\, c_{t,x})
			\qquad\forall (t,x)\in\mathbb N\times\mathcal X.
		\end{align}
		
		Then, $\Phi_\mathrm{cell\text-matrix}(\mu, p) := \prob_{\mathbf C}$.
		
		Similarly, given $\Gamma\in\pmeas(\mathcal S^\mathcal S)$, let the random variables $(\mathbf C,\mathbf F) = (C_{t,x},C_{t+\frac 12,x},F_{t,x})_{(t,x)\in\mathbb Z^+\times\mathcal X}$ satisfy \Cref{eq:cellulart}, $F_{t,x}\in\mathcal S^\mathcal S$ for all $(t,x)\in\mathbb Z^+\times\mathcal X$, and
		\begin{align}
			\label{eq:cellularr1}
			\prob_{(C_0,\mathbf F)} &= \mu\times\Gamma^{\mathbb Z^+\times\mathcal X},
			\\\label{eq:cellularr2}
			C_{t,x} &= F_{t-1,x}(C_{t-\frac 12, x})
			\qquad\forall (t,x)\in\mathbb N\times\mathcal X.
		\end{align}
		
		Then, $\Phi_\mathrm{cell\text-random}(\mu, \Gamma) := \prob_{\mathbf C}$.
		
		Finally, given $(\mathcal R,\mathcal G,\Gamma,T)\in\mathrm{Dyn}(\mathcal S)$, let the random variables $(\mathbf C,\mathbf R) = (C_{t,x},R_{t,x})_{(t,x)\in\frac 12\mathbb Z^+\times\mathcal X}$ satisfy \Cref{eq:cellulart}, $R_{t,x}\in\mathcal R^\mathbb Z$ for all $(t,x)\in\frac 12\mathbb Z^+\times\mathcal X$, and
		\begin{align}
			\label{eq:cellulard1}
			\prob_{(C_0,R_0)} &= \mu\times\Gamma^{\mathcal X\times\mathbb Z},
			\\\label{eq:cellulardt}
			R_{t+\frac 12,x} &= \sigma(R_{t,x})
			&\forall (t,x)\in\mathbb Z^+\times\mathcal X,
			\\\label{eq:cellulard2}
			(C_{t,x},R_{t,x}) &= \overline T(C_{t-\frac 12, x},\,R_{t-\frac 12, x})
			&\forall (t,x)\in\mathbb N\times\mathcal X,
		\end{align}
		where $T(s,r_0)=(s',r')\implies\overline T(s,r) := (s',r_{0\leftarrow r'})$. Then, $\Phi_\mathrm{cell\text-determ}(\mu,\mathcal R,\mathcal G,\Gamma,T) := \prob_{\mathbf C}$.
	\end{definition}
	
	Markov chains and SPCAs mutually generalize each other. On one hand, every Markov chain can be thought of as an SPCA with the trivial geometry: $|\mathcal X|=1$ and $\mathcal T=\{\mathrm{id}\}$. On the other hand, every SPCA with $|\mathcal X|<\infty$ can be thought of as a Markov chain, by collapsing all its cells and tracks into a \textbf{global state} $C_t$, taking values in $\mathcal S_\mathrm{global}:=\mathcal S^\mathcal X$. 
	
	The correspondences of \Cref{thm:equivalence1,thm:equivalence2} generalize to the SPCA setting verbatim, replacing each $\Phi$ in their statements with the corresponding $\Phi_\mathrm{cell\text-}$. To see this, note that the initial condition and trackwise translation steps are shared among all three presentations. While the cellwise evolution steps have three separate definitions, each one applies independently at all points $(t,x)$ in spacetime. Thus, consider the evolution at each point in isolation: these are identical to the Markov chain transitions defined in \Cref{eq:markovm2,eq:markovr2,eq:markovd2}, respectively.
	
	Therefore, every SPCA $\mathbf C$ can be understood as the macroscopic component of some deterministic PCA $(\mathbf C,\mathbf R)$, whose identity track has the state space $\mathcal S_\mathrm{id}\times\mathcal R^\mathbb Z$. Having established this equivalence, we need no longer worry about the microscopic details. From here onward, it's convenient to focus on the transition matrix presentation.
	
	There is just one complication: the dynamical law $p\in\mathrm{SM}_\pi(\mathcal S)$, with $\pi\in\meas(\mathcal S)$, is insufficient to guarantee that the global distribution $\pi^\mathcal X\in\meas(\mathcal S^\mathcal X)$ is stationary for an SPCA. While it's true that $p$ preserves the distribution $\pi$ in each cell, so that cellwise evolution steps preserve $\pi^\mathcal X$, the trackwise translation steps need not share this property. For example, if $\pi$ has perfectly correlated subcells, then a translation would mix adjacent cells, likely breaking the correlation. To remedy this, we restrict attention to measures that factorize into their subcellular marginals:
	\begin{align}
		\label{eq:cellulars}
		\meas(\mathcal T,\mathcal S) := \left\{ \prod_{\tau\in\mathcal T}\pi_\tau:\; \pi_\tau\in\meas(\mathcal S_\tau)\text{ for each }\tau\in\mathcal T \right\}.
	\end{align}
	
	If we add the condition that $\pi\in\meas(\mathcal T,\mathcal S)$, then $\pi^\mathcal X$ distributes all subcells independently and trackwise identically, so it's clear that trackwise translations also preserve $\pi^\mathcal X$. $\pi^\mathcal X$ is truly stationary for the SPCA, making it a suitable reference measure for the entropy-like quantities of \Cref{sec:klinfo}; in particular, applying \Cref{thm:secondmarkov} to an SPCA's global state yields a global second law of thermodynamics. It has some unfortunate limitations: it only applies to finite-sized SPCAs, and says little about the entropy within specific regions of interest. In order to develop a local version, we need some notation dealing with spatial regions.
	
	A \textbf{region} is a finite set of cells $A\subset\mathcal X$; its \textbf{volume} is the set cardinality $|A|$. When $\pi\in\meas(\mathcal T,\mathcal S)$, the expressions $\kl{C_{t,A}}{\pi^A}$, $H_{\pi^A}(C_{t,A})$, and $J_{\pi^A}(C_{t,A})$ are invariant to permutations of the subcells within any track $A\times\{\tau\}$. In these expressions, as well as in the relation $\succeq_{\pi^A}$, we may write $\pi$ instead of $\pi^A$; this notational abuse presents no ambiguity.
	
	A \textbf{system} (or \textbf{time-varying region}) is a collection of regions $(A_t)_{t\in\frac 12\mathbb Z^+}$ held fixed during cellwise evolutions, meaning that $A_t=A_{t-\frac 12}$ for all $t\in\mathbb N$. The pair $(t,A_t)$ may be abbreviated by $(t,A)$ so that, for example,
	\[C_{t,A} := C_{t,A_t} = (C_{t,x,\tau})_{(x,\tau)\in A_t\times\mathcal T}.\]
	
	Let $\mathcal T_+(A)$ denote the set of cells reachable from $A\subset\mathcal X$ in one local translation, and $T_-(A)$ denote the set of cells that can be reached \emph{only} from $A$ in one local translation: formally,
	
	\begin{align*}
		\mathcal T_+(A) &:= \{\tau(x): x\in A,\tau\in\mathcal T\},
		\\\mathcal T_-(A) &:= \mathcal X\setminus\mathcal T_+(\mathcal X\setminus A).
	\end{align*}
	
	Since $\mathrm{id}\in\mathcal T$, it follows that $\mathcal T_-(A) \subset A\subset\mathcal T_+(A)$. Additionally, iterating $T_+$ yields the reachable set at arbitrary future times, so that
	\begin{align*}
		\mathrm{Future}(\{t\}\times A)
		&= \bigcup_{u>t} \left(\{u\}\times\mathcal T_+^{\lceil u \rceil - \lceil t \rceil}(A)\right).
	\end{align*}
	
	\subsection{The Laws of Information Dynamics}
	\label{sec:laws}
	
	Conservation laws, for physical quantities such as energy and charge, have local statements in the form of \textbf{continuity equations}. These equate the decrease (or increase) in a system's quantity, with the net ``flux'' of the quantity exiting (or entering) the system. Entropy is known to increase spontaneously, so it's not a conserved quantity; thus, while a system's entropy must increase by \emph{at least} the net incoming flux, it may well increase more. As a result, entropy should satisfy a \textbf{continuity inequality} instead of an equation.
	
	More generally, we derive continuity inequalities for two types of information-theoretic quantity: (1) generic information, measured by the KL divergence, and (2) mutual information about specific random variables. Since these quantities are non-additive, their flux is defined relative to information already contained in the system.
	
	\begin{definition}
		Fix a measure $\pi\in\meas(\mathcal T,\mathcal S)$, an SPCA $\mathbf C$, and a random variable $Y$. The \textbf{net flux} from system $A$ at time $t\in\mathbb Z^+$, relative to $\pi$ or $Y$, respectively, is
		\begin{align*}
			F_\pi(t,A) &:= \kl{C_{t,A}}\pi - \kl{C_{t+\frac 12,A}}\pi,
			\\F_Y(t,A) &:= I(C_{t,A};\,Y) - I(C_{t+\frac 12,A};\,Y).
		\end{align*}
	\end{definition}
	
	By definition, flux is a change in either $D_\mathrm{KL}$ or $I$, over the time interval $[t,\,t+\frac 12]$. Since $t$ is a whole integer, this time interval is a trackwise translation step (see \Cref{eq:cellulart}): it moves, but does not modify, the SPCA's information content. Thus, the fluxes should be thought of as quantifying the net movement of information through a system boundary. $F_\pi$ is generic flux, whereas $F_Y$ is the flux of information pertaining to $Y$.
	
	When stating the inequalities, we tacitly assume that they contain no conflicting infinities (e.g., expressions of the form $\infty-\infty$), whose result would be ill-defined. Sufficient (but not necessary) conditions are $\mathbf I(\pi)<\infty$ and $H(Y)<\infty$, since then the $D_\mathrm{KL}$ and $I$ terms are finite.
	
	Our first continuity inequality is a local version of the second law of thermodynamics (cf. \Cref{thm:secondmarkov}). It states that the KL divergence is non-increasing, net of incoming and outgoing fluxes. Hence, the KL divergence can be thought of as a kind of \emph{resource}: the only way to obtain more of it is to take some from elsewhere.
	
	\begin{theorem}[Resource Law]
		\label{thm:resgeneral}
		Fix $\pi\in\meas(\mathcal T,\mathcal S)$, and an SPCA $\mathbf C$ with dynamics $p\in\mathrm{SM}_\pi(\mathcal S)$. For $t,u\in\frac 12\mathbb Z^+$ with $t\le u$, and any system $A$,
		\[\kl{C_{t,A}}\pi \ge \kl{C_{u,A}}\pi + \sum_{v=\lceil t\rceil}^{\lceil u\rceil-1} F_\pi(v,A).\]
	\end{theorem}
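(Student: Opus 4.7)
The plan is to telescope the change $\kl{C_{t,A}}\pi - \kl{C_{u,A}}\pi$ over the atomic half-steps of the time interval $[t,u]$, classify each step as either a trackwise translation or a cellwise evolution, and show that the former contribute exactly the flux terms on the right-hand side while the latter contribute a non-negative amount. Concretely, I first write
\[\kl{C_{t,A}}\pi - \kl{C_{u,A}}\pi = \sum_{v \in \frac{1}{2}\mathbb{Z}^+ \cap [t, u)} \left(\kl{C_{v,A}}\pi - \kl{C_{v+\frac{1}{2},A}}\pi\right).\]
For $v$ a whole integer, the map $v \mapsto v+\frac 12$ is a trackwise translation, so by definition the corresponding increment equals $F_\pi(v, A)$. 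Using the identification $\{v \in \mathbb{Z} : t \le v < u\} = \{\lceil t \rceil, \ldots, \lceil u\rceil - 1\}$, these integer-indexed terms account for exactly the sum on the right-hand side of the claimed inequality.

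The crux is then showing that each half-integer-indexed term — i.e., a cellwise evolution step $v \mapsto v+\frac{1}{2} \in \mathbb{N}$ — contributes a non-negative amount $\kl{C_{v,A}}\pi - \kl{C_{v+\frac{1}{2},A}}\pi \ge 0$. For this, I would verify that the conditional distribution of $C_{v+\frac{1}{2},A}$ given $C_{v, A}$ is exactly the product transition matrix $p^{\otimes A} \in \mathrm{SM}(\mathcal{S}^A)$ that applies $p$ coordinatewise. This is a consequence of the Markov property of the SPCA relative to $\mathrm{G}(\mathcal X, \mathcal T)$: the only parent of $(v+\frac{1}{2}, x)$ at a cellwise-evolution step is $(v, x)$, distinct cells in $A$ have disjoint parent sets, and the transition law \Cref{eq:cellularm2} gives each cell's conditional distribution as $p(C_{v, x}, \cdot)$. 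A direct computation, exploiting that $\pi \in \meas(\mathcal T, \mathcal S)$ factorizes trackwise and that $p \in \mathrm{SM}_\pi(\mathcal S)$, then gives $p^{\otimes A} \in \mathrm{SM}_{\pi^A}(\mathcal{S}^A)$. Hence $C_{v, A} \succeq_{\pi^A} C_{v+\frac{1}{2}, A}$, and \Cref{thm:secondmarkov} yields the desired inequality.

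Dropping the non-negative cellwise increments from the telescoped identity produces the claimed bound. The main obstacle is the d-separation argument showing that distinct cells in $A$ update independently during cellwise evolution — this requires checking that $(v+\frac{1}{2}, x)$ and $(v+\frac{1}{2}, y)$ are not descendants of each other and have disjoint parent sets for $x \ne y$, so the joint conditional distribution factorizes into the product matrix $p^{\otimes A}$. Everything else is bookkeeping: verifying the telescoping range accommodates the edge cases in which $t$ and/or $u$ is a half-integer (automatically handled by taking $v \in \frac 12 \mathbb{Z}^+ \cap [t, u)$), and matching the integer-indexed increments with $\sum_{v=\lceil t\rceil}^{\lceil u\rceil - 1} F_\pi(v, A)$.
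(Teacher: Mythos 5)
Your proposal is correct and follows essentially the same route as the paper's proof: telescope over half-steps, identify the integer-time increments with the flux terms by definition, and handle the cellwise evolution steps by noting that the product matrix $\bigotimes_{x\in A_v}p$ lies in $\mathrm{SM}_{\pi^{A_v}}(\mathcal S^{A_v})$ so that \Cref{thm:secondmarkov} applies. The extra detail you give on the $d$-separation/factorization step is just an expansion of what the paper states more tersely.
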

	
	\begin{proof}
		For $v\in\mathbb Z^+$, rearranging the definition of net flux relative to $\pi$ yields
		\[\kl{C_{v,A}}\pi = \kl{C_{v+\frac 12,A}}\pi + F_\pi(v,A).\]
		
		For $v\in\mathbb Z^++\frac 12$, every system satisfies $A_v = A_{v+\frac 12}$ by definition. Cellwise application of the dynamics $p$ corresponds to a single application of the Kronecker product
		\[p^{A_v}:=\bigotimes_{x\in A_v}p\in\mathrm{SM}_{\pi^{A_v}}(\mathcal S^{A_v}).\]
		
		Thus, $C_{v,A}\succeq_\pi C_{v,A+\frac 12}$. By \Cref{thm:secondmarkov}, it follows that
		\[\kl{C_{v,A}}\pi \ge \kl{C_{v+\frac 12,A}}\pi.\]
		
		Summing these equations and inequalities over all $v\in[t,u)\cap\frac 12\mathbb Z^+$ yields the desired result.
	\end{proof}
	
	Compared to the famed second law of thermodynamics, the next result seems more mysterious. It concerns information pertaining to an event that is either (1) in the past, or (2) spacelike separated. In the absence of communication through the system boundary, such information cannot increase, though it may spontaneously decrease.
	
	\begin{theorem}[Memory Law]
		\label{thm:memgeneral}
		Fix an SPCA $\mathbf C$. For any $t,u\in\frac 12\mathbb Z^+$ with $t\le u$, and any system $A$, let $Y$ be any function of $C_{\mathrm{NonFut}(A')}$, where $A':=\bigcup_{v\in [t,u)}\left(\{v\}\times A_v\right)$. Then,
		\[I(C_{t,A};\;Y) \ge I(C_{u,A};\;Y) + \sum_{v=\lceil t\rceil}^{\lceil u\rceil-1} F_Y(v,A).\]
	\end{theorem}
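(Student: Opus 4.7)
The proof will closely mirror that of the Resource Law (\Cref{thm:resgeneral}), telescoping the difference $I(C_{t,A};Y) - I(C_{u,A};Y)$ across the half-step subintervals between $t$ and $u$. Each half-step starting at a whole integer $v$ is a trackwise translation, for which $I(C_{v,A};Y) - I(C_{v+\frac{1}{2},A};Y) = F_Y(v,A)$ by definition; these integer starting times in $[t,u)$ are exactly $\lceil t\rceil, \ldots, \lceil u\rceil-1$, matching the claimed sum. Each half-step starting at a half-integer $v$ is a cellwise evolution, for which I aim to show the non-negativity $I(C_{v,A};Y) \ge I(C_{v+\frac{1}{2},A};Y)$; then summing the integer and half-integer contributions yields the result.

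The nontrivial part, and the main obstacle, is this cellwise non-negativity. I would obtain it from the data processing inequality by establishing the Markov chain $Y \to C_{v,A} \to C_{v+\frac{1}{2},A}$, relying on two structural observations about the causal graph $\mathrm G(\mathcal X,\mathcal T)$ at a half-integer $v$. First, since $\lceil v+\frac{1}{2}\rceil = \lceil v\rceil$, the edge criterion $d(y,x) \le \lceil v+\frac{1}{2}\rceil - \lceil v\rceil$ collapses to $y=x$, so the sole parent of $(v+\frac{1}{2},x)$ in $\mathrm G(\mathcal X,\mathcal T)$ is $(v,x)$, and the parents of the set $\{(v+\frac{1}{2},x):x\in A_{v+\frac{1}{2}}\}$ are exactly the vertices underlying $C_{v,A}$. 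Second, every vertex in $\mathrm{NonFut}(A')$ is a non-descendant of each $(v+\frac{1}{2},x)$ with $x\in A_{v+\frac{1}{2}}$: because $v\in[t,u)$ is a half-integer we have $(v,x)\in A'$, and by transitivity of the partial order any strict descendant $(w,z)>(v+\frac{1}{2},x)$ also satisfies $(w,z)>(v,x)$, placing it in $\mathrm{Future}(A')$. This transitivity step also handles the corner case $v+\frac{1}{2}=u$, in which $(u,x)$ itself need not belong to $A'$.

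Combining the two observations with the Markov property of the SPCA relative to $\mathrm G(\mathcal X,\mathcal T)$, together with the fact that the cellwise conditional distribution factorizes as a product $\prod_{x\in A} p(c'_x,c_x)$ over cells, the joint variable $C_{v+\frac{1}{2},A}$ is conditionally independent of $C_{\mathrm{NonFut}(A')}$ given $C_{v,A}$. Since $Y$ is a (measurable) function of $C_{\mathrm{NonFut}(A')}$, this yields the Markov chain $Y \to C_{v,A} \to C_{v+\frac{1}{2},A}$; the data processing inequality then delivers the desired cellwise bound, and telescoping the integer and half-integer half-steps completes the proof.
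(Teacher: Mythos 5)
Your proposal is correct and follows essentially the same route as the paper's proof: telescope over half-steps, use the definition of $F_Y$ to get equality on the trackwise translation steps starting at integer times $\lceil t\rceil,\ldots,\lceil u\rceil-1$, and apply the data processing inequality on the cellwise evolution steps by noting that their stochasticity is independent of $Y$ (the paper states this by observing the evolution is independent of $C_{\mathrm{NonFut}(\{v\}\times A_v)}\supset C_{\mathrm{NonFut}(A')}$, which you justify in more explicit detail via the parent structure of $\mathrm G(\mathcal X,\mathcal T)$ and transitivity of the partial order). The extra care you take with the corner case $v+\frac 12=u$ is sound and does not change the argument.
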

	
	\begin{proof}
		For $v\in\mathbb Z^+$, rearranging the definition of net flux relative to $Y$ yields
		\[I(C_{v,A};\;Y) = I(C_{v+\frac 12,A};\;Y) + F_Y(v,A).\]
		
		For $v\in\mathbb Z^++\frac 12$, we have $A_v = A_{v+\frac 12}$. The cellwise evolution step from \Cref{eq:cellularm2} is independent of $\mathrm{NonFut}(\{v\}\times A_v)$, which contains $\mathrm{NonFut}(A')$; thus, any stochasticity in this evolution is independent of $Y$. By the data processing inequality, it follows that
		\[I(C_{v,A};\;Y) \ge I(C_{v+\frac 12,A};\;Y).\]
		
		Summing these equations and inequalities over all $v\in[t,u)\cap\frac 12\mathbb Z^+$ yields the desired result.
	\end{proof}
	
	In order to apply \Cref{thm:resgeneral,thm:memgeneral}, we would like to get a handle on the flux terms. In the worst-case, their magnitude is comparable to $\mathbf I(\pi)$, multiplied by the system's effective ``surface area'', i.e., the number of cells worth of content that may cross its boundary. \Cref{cor:klprod} bounds the contributions to $F_\pi$ arising from addition and removal of arbitrary cells. For a system that expands or contracts at the ``speed of light'', only additions or removals, respectively, are possible. The one-way nature of such a system's fluxes enables their elimination from the corresponding inequalities.
	
	\begin{corollary}
		\label{cor:opensystems}
		Fix $\pi\in\meas(\mathcal T,\mathcal S)$, and an SPCA $\mathbf C$ with dynamics $p\in\mathrm{SM}_\pi(\mathcal S)$. For $t,u\in\frac 12\mathbb Z^+$ with $t\le u$, and any region $A$, let $A_+ := \mathcal T_+^{\lceil u \rceil - \lceil t \rceil}(A)$ and $A_- := \mathcal T_-^{\lceil u \rceil - \lceil t \rceil}(A)$. Let $Y$ be any function of $C_{\mathrm{NonFut}(\{t\}\times A)}$. Then,
		
		\begin{align*}
			H_\pi(C_{t,A}) &\le H_\pi(C_{u,A_+}),
			\\J_\pi(C_{t,A}) &\ge J_\pi(C_{u,A_-}),
			\\I(C_{t,A};\, Y) &\ge I(C_{u,A_-};\, Y).
		\end{align*}
	\end{corollary}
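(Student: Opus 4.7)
The plan is to apply the Resource Law (Theorem~\ref{thm:resgeneral}) to prove the first two inequalities and the Memory Law (Theorem~\ref{thm:memgeneral}) for the third, each with a suitable ``lightspeed'' time-varying system. For the $H_\pi$ inequality, I would take the expanding system $A_v := \mathcal T_+^{\lceil v \rceil - \lceil t \rceil}(A)$, which stays fixed during cellwise evolutions and enlarges by one $\mathcal T_+$ step at every translation; for the $J_\pi$ and mutual-information inequalities, I would take the analogous contracting system $A_v := \mathcal T_-^{\lceil v \rceil - \lceil t \rceil}(A)$. Both are valid systems, with $A_t = A$ and $A_u = A_+$ or $A_-$ respectively.

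For the expanding system, the trackwise bijectivity of every $\tau \in \mathcal T$ lets me identify $C_{v+\frac 12, A_{v+\frac 12}}$, up to a relabeling of subcells, with the joint $(C_{v, A_v}, C_{v, E_v})$, where $E_v$ collects the subcells of $(\mathcal X \setminus A_v) \times \mathcal T$ whose translated images lie in $A_{v+\frac 12} \times \mathcal T$. Because $\pi \in \meas(\mathcal T, \mathcal S)$ factorizes by track, the reference measures match under this relabeling, and the upper bound of Corollary~\ref{cor:klprod} yields $F_\pi(v, A) \ge -\log(1/\inf \pi_{E_v})$. Summed over $v$, the totals $\sum_v \log(1/\inf \pi_{E_v})$ and $\log(1/\inf \pi^{A_+}) - \log(1/\inf \pi^A)$ coincide, since both count the same net volume of added subcells weighted by the per-track capacity. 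Plugging this into the Resource Law and applying Definition~\ref{def:entropy} converts the $\kl{\cdot}{\pi}$ bound into exactly $H_\pi(C_{t,A}) \le H_\pi(C_{u,A_+})$.

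For the contracting system, the defining property $A_{v+\frac 12} = \mathcal T_-(A_v)$ forces $\tau^{-1}(A_{v+\frac 12}) \subseteq A_v$ for every $\tau$, so $C_{v+\frac 12, A_{v+\frac 12}}$ is a relabeling of a subcollection of $C_{v, A_v}$; the ``extra'' subcells $E_v$ now lie inside $A_v \times \mathcal T$ and get discarded. The lower bound of Corollary~\ref{cor:klprod} then gives $F_\pi(v, A) \ge \log(1/\sum \pi_{E_v})$, whose telescoped sum equals the $\log(1/\sum \pi^{\cdot})$ shift between $\kl{\cdot}{\pi}$ and $J_\pi(\cdot)$, yielding the negentropy inequality after the same rearrangement. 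For the mutual-information inequality, the same subcell inclusion makes $C_{v+\frac 12, A_{v+\frac 12}}$ a deterministic function of $C_{v, A_v}$ during translation, so the data processing inequality forces $F_Y(v, A) \ge 0$, and the Memory Law immediately delivers the conclusion. Its premise is met because $A' \subseteq [t, u) \times A$ for the contracting system, whence $\mathrm{NonFut}(A') \supseteq \mathrm{NonFut}(\{t\} \times A)$, so any function of $C_{\mathrm{NonFut}(\{t\} \times A)}$ is a fortiori a function of $C_{\mathrm{NonFut}(A')}$.

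The main obstacle is bookkeeping: identifying precisely which subcells are injected or discarded during each translation through the interplay between $\mathcal T_\pm$ and the bijections $\tau$, verifying that the relabeled distributions share the correct reference measure under $\pi$'s track-wise product structure, and confirming that the residual capacity terms from Corollary~\ref{cor:klprod} telescope exactly into the $\log(1/\inf \pi)$ or $\log(1/\sum \pi)$ corrections needed to convert the Resource Law's $\kl{\cdot}{\pi}$ inequality into the claimed $H_\pi$ or $J_\pi$ inequality.
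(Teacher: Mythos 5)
Your proposal is correct and follows essentially the same route as the paper: the same lightspeed expanding system for $H_\pi$ and contracting system for $J_\pi$ and the mutual information, with the Resource and Memory Laws supplying the telescoped inequality and the flux terms controlled by the subcell inclusions. The only difference is bookkeeping: where you bound each flux via \Cref{cor:klprod} and telescope the capacity corrections into the $\log(1/\inf\pi)$ or $\log(1/\sum\pi)$ shifts of \Cref{def:entropy}, the paper instead rescales $\pi$ (without loss of generality, to $\inf_s\pi(s)=1$ or $\sum_s\pi(s)=1$) so that $\kl{\cdot}{\pi}$ coincides with $-H_\pi$ or $J_\pi$ and the fluxes are directly non-negative by \Cref{lem:mutinf} -- a normalization that also disposes cleanly of the degenerate case $\inf_s\pi(s)=0$, which your telescoping argument would need to treat separately since the correction terms become infinite.
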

	
	\begin{proof}
		We start with the first inequality. It holds trivially if $\inf_{s\in\mathcal S}\pi(s)=0$, since then $H\equiv\infty$ by definition. Otherwise, by multiplying $\pi$ by a constant, we can assume without loss of generality that $\inf_{s\in\mathcal S}\pi(s)=1$, so that $\kl\cdot\pi \equiv -H_\pi(\cdot)$. Extend $A$ into a system by setting $A_t:=A$ and $A_{v+1}:=A_{v+\frac 12} := \mathcal T_+(A_v)$ for $v\in\mathbb [t,u)\cap\mathbb Z^+$. It follows that $A_u = A_+$ and, by \Cref{thm:resgeneral},
		\[-H_\pi(C_{t,A}) \ge -H_\pi(C_{u,A_+}) + \sum_{v=\lceil t\rceil}^{\lceil u\rceil-1} F_\pi(v,A).\]
		
		The first inequality now follows if the fluxes can be shown to be non-negative. Indeed, since $C_{v+\frac 12,A}$ contains all of the subcells in $C_{v,A}$, \Cref{lem:mutinf} implies
		\[F_\pi(v,A) = H_\pi(C_{v+\frac 12,A}) - H_\pi(C_{v,A}) = H_\pi(C_{v+\frac 12,A}\mid C_{v,A}) \ge 0.\]
		
		The second inequality is proved in a similar manner, assuming without loss of generality that $\sum_{s\in\mathcal S}\pi(s)=1$, so that $\kl\cdot\pi \equiv J_\pi(\cdot)$. Extend $A$ into a system by setting $A_t:=A$ and $A_{v+1}:=A_{v+\frac 12} := \mathcal T_-(A_v)$ for $v\in\mathbb [t,u)\cap\mathbb Z^+$. It follows that $A_u = A_-$ and, by \Cref{thm:resgeneral},
		\[J_\pi(C_{t,A}) \ge J_\pi(C_{u,A_-}) + \sum_{v=\lceil t\rceil}^{\lceil u\rceil-1} F_\pi(v,A).\]
		
		The second inequality now follows if the fluxes can be shown to be non-negative. Indeed, since $C_{v,A}$ contains all of the subcells in $C_{v+\frac 12,A}$, \Cref{lem:mutinf} implies
		\[F_\pi(v,A) = J_\pi(C_{v,A}) - J_\pi(C_{v+\frac 12,A}) = J_\pi(C_{v,A}\mid C_{v+\frac 12,A}) \ge 0.\]
		
		For the third inequality, apply \Cref{thm:memgeneral} to the same system $A$ to get
		\[I(C_{t,A};\,Y) \ge I(C_{u,A_-};\,Y) + \sum_{v=\lceil t\rceil}^{\lceil u\rceil-1} F_Y(v,A).\]
		
		We are done, since $C_{v,A}$ containing all of the subcells in $C_{v+\frac 12,A}$ also implies that
		\[F_Y(v,A) = I(C_{v,A};\,Y) - I(C_{v+\frac 12,A};\,Y) \ge 0.\]
	\end{proof}
	
	\subsection{Closed Systems}
	\label{sec:closed}
	
	\Cref{cor:opensystems} successfully eliminates the flux terms, at the price of needing to grow or shrink the region $A$ at the ``speed of light''; this is done to block information from crossing $A$'s boundary in the undesired direction. In practice, we often encounter physical systems of fixed size that are nonetheless well-behaved, experiencing little to no flux. The ideal case is a \textbf{closed system} which has no interaction with its external environment, hence zero flux.
	
	A system and environment can block interactions by working together to maintain an inert medium between them. We do not concern ourselves with the mechanisms for maintaining such a medium, but simply assume every cell on the boundary to be fixed to a quiescent state $q\in\mathcal S$. If such a system's initial and final volumes are equal, it emits and absorbs equal numbers of copies of $q$, resulting in a total of zero net flux.
	
	\begin{definition}
		\label{def:closed}
		Fix an SPCA $\mathbf C$ and a system $A$. For $t\in\mathbb Z^+,\,\tau\in\mathcal T$, define the \textbf{boundary operator} $\partial$ by
		\[\partial_{t,\tau}A:=\left(A_t\cap\tau^{-1}(\mathcal X\setminus A_{t+1})\right)\cup\left((\mathcal X\setminus A_t)\cap\tau^{-1}(A_{t+1})\right).\]
		
		Given $q\in\mathcal S$ and $t,u\in\frac 12\mathbb Z^+$, we say that $A$ is \textbf{$q$-closed} over the time interval $[t,u]$ if, with probability one, for all $v\in [t,u)\cap\mathbb Z^+$, $\tau\in\mathcal T$, $x\in\partial_{v,\tau}A$, we have $C_{v,x,\tau}=q_\tau$.
	\end{definition}
	
	Note that, since $\partial$ is only defined at integer times, $q$-closedness holds vacuously over cellwise evolution steps. This makes sense, since no information travels across cell boundaries during these steps. We are finally ready to state the laws for closed systems.
	
	\begin{theorem}[Resource Law for Closed Systems]
		\label{thm:resclosed}
		Fix $\pi\in\meas(\mathcal T,\mathcal S)$, and an SPCA $\mathbf C$ with dynamics $p\in\mathrm{SM}_\pi(\mathcal S)$. Given $q\in\mathcal S$ and $t,u\in\frac 12\mathbb Z^+$ with $t\le u$, let the system $A$ be $q$-closed over $[t,u]$. If $|A_t|=|A_u|$ or $\pi(q)=1$, then
		\[\kl{C_{t,A}}{\pi} \ge \kl{C_{u,A}}{\pi}.\]
		
		Furthermore, if $|A_t|=|A_u|$, then $C_{t,A}\succeq_\pi C_{u,A}$.
	\end{theorem}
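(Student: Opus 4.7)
The plan is to invoke the Resource Law (\Cref{thm:resgeneral}), which reduces the claim to showing that the total flux $\sum_v F_\pi(v,A)$ vanishes over the trackwise translation steps $v\in\mathbb Z^+\cap[\lceil t\rceil,\lceil u\rceil)$. The cellwise-evolution contributions have already been absorbed into the theorem's inequality via $p^{A_v}\in\mathrm{SM}_{\pi^{A_v}}$, so the main task is to evaluate $F_\pi(v,A)$ under the $q$-closedness hypothesis.

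For a single trackwise translation from $v$ to $v+\tfrac 12$, I would partition each track $\tau\in\mathcal T$ into three parts: the bulk $I_\tau := A_v\cap\tau^{-1}(A_{v+1/2})$, the departing boundary $O_\tau := A_v\setminus\tau^{-1}(A_{v+1/2})$, and the arriving boundary $E_\tau := \tau^{-1}(A_{v+1/2})\setminus A_v$. Both $O_\tau$ and $E_\tau$ lie inside $\partial_{v,\tau}A$, so $q$-closedness forces their subcells to equal $q_\tau$ almost surely. The joint $C_{v,A_v,\tau}$ thus factors as a random bulk contribution times $|O_\tau|$ copies of $q_\tau$, while $C_{v+1/2,A_{v+1/2},\tau}$ has the same bulk (relabeled via $\tau$) times $|E_\tau|$ copies of $q_\tau$. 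Applying \Cref{lem:mutinf} to peel off the deterministic factors, and the trackwise permutation invariance of $\pi$ to cancel the bulk KL contributions, a short calculation yields
\[F_\pi(v,A) = \sum_{\tau\in\mathcal T}\bigl(|O_\tau|-|E_\tau|\bigr)\log\frac{1}{\pi_\tau(q_\tau)} = \bigl(|A_v|-|A_{v+1}|\bigr)\log\frac{1}{\pi(q)},\]
since $|O_\tau|-|E_\tau| = |A_v|-|A_{v+1/2}|$ is independent of $\tau$ (as $\tau$ is a bijection), $A_{v+1/2}=A_{v+1}$ by the system constraint, and $\pi(q)=\prod_\tau\pi_\tau(q_\tau)$ by $\pi\in\meas(\mathcal T,\mathcal S)$.

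Next I would telescope the flux sum. Because the system constraint $A_t=A_{t-1/2}$ gives $A_v=A_{\lceil v\rceil}$ for every $v\in\tfrac 12\mathbb Z^+$, the sum reduces to $(|A_t|-|A_u|)\log(1/\pi(q))$, which vanishes under either hypothesis of the theorem. Substituting into \Cref{thm:resgeneral} yields the KL inequality $\kl{C_{t,A}}\pi\ge\kl{C_{u,A}}\pi$.

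For the stronger thermo-majorization $C_{t,A}\succeq_\pi C_{u,A}$ when $|A_t|=|A_u|$, I would construct an explicit $\pi^{A_t}$-stochastic matrix by composing the step-by-step transitions, identifying $\mathcal S^{A_t}\cong\mathcal S^{A_u}$ via a trackwise bijection (the identification preserves $\pi$ by trackwise invariance). Cellwise evolutions contribute $p^{A_v}\in\mathrm{SM}_{\pi^{A_v}}$ directly, and size-preserving trackwise translations act as deterministic subcell permutations within tracks, which are $\pi^{A_v}$-stochastic since $\pi$ is a trackwise product. The hard part is handling intermediate trackwise translations where $|A_v|\ne|A_{v+1/2}|$, because an isolated size-changing translation cannot be $\pi$-stochastic on its own (the boundary $q$ values fail to match the marginal $\pi_\tau$ on newly occupied positions). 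I would resolve this by pairing each expansion with a matching contraction further along the chain (both of which must exist because the endpoint condition $|A_t|=|A_u|$ forces the net cardinality change to vanish) and verifying that the composite pair is $\pi$-stochastic via trackwise invariance; equivalently, one can embed every $C_{v,A_v}$ into a fixed-cardinality configuration by appending virtual $q$-padded subcells, so the augmented dynamics become a chain of $\pi$-preserving permutations and Kronecker products of $p$, and the endpoint equality ensures that the padding projects away cleanly, leaving a genuine $\pi^{A_t}$-stochastic matrix on $\mathcal S^{A_t}$.
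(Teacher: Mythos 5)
Your proposal is correct and follows essentially the same route as the paper: the flux at each translation step is evaluated as $(|A_v|-|A_{v+1}|)\log(1/\pi(q))$ by peeling off the deterministic boundary copies of $q$, the telescoped sum vanishes under either hypothesis, and \Cref{thm:resgeneral} gives the KL inequality. For the thermo-majorization claim, your second ("equivalently") construction — padding every $C_{v,A_v}$ with virtual $q$'s into a fixed region so each step becomes a trackwise permutation or a Kronecker product with identities, then cancelling the padding using $|A_t|=|A_u|$ — is exactly the paper's argument; note that the padded version is the one that works cleanly, since your first suggestion of pairing isolated expansions with later contractions would still require justifying why the composite is $\pi$-stochastic, which the padding does automatically.
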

	
	\begin{proof}
		By definition, $\partial_{v,\tau}A$ consists precisely of the positions on track $\tau\in\mathcal T$, whose contents at time $v\in\mathbb Z^+$ are either about to enter $A$ from outside, or exit it from inside. Since A is $q$-closed, the subcells at these positions are all in the state $q_\tau$. The number of such subcells leaving $A$ exceeds the number entering $A$ by precisely $|A_v| - |A_{v+1}|$. This number does not depend on $\tau$; hence, up to permutations on each track, the translation step simply emits (or absorbs, if negative) this many whole copies of $q$ from the system. Therefore,
		\[F_\pi(v,A) = (|A_v| - |A_{v+1}|)\kl{\delta_q}{\pi}
		= (|A_v| - |A_{v+1}|)\log\frac 1{\pi(q)}.\]
		
		Substituting into \Cref{thm:resgeneral} yields
		\[\kl{C_{t,A}}{\pi}
		\ge \kl{C_{u,A}}{\pi} + (|A_t|-|A_u|)\log\frac 1{\pi(q)},\]
		
		which reduces to the desired inequality when $|A_t|=|A_u|$ or $\pi(q)=1$.
		
		In order to show that $|A_t|=|A_u|\implies C_{t,A}\succeq_\pi C_{u,A}$, we take advantage of the fact that, from $A$'s point of view, the external Universe might as well be an endless sea of $q$'s. To be precise, let $I:=[t,u]\cap\frac 12\mathbb Z^+$, $B := \bigcup_{v\in[t,u]}A_v$, and define the random variable $\mathbf{C'}:I\times B\rightarrow\mathcal S$ by
		\[C'_{v,x} :=
		\begin{cases}
			C_{v,x} & \text{if }x\in A_v,
			\\q & \text{if }x\in B\setminus A_v.
		\end{cases}\]
		
		Let's show that $C'_v\succeq_\pi C'_{v+\frac 12}$ for all $v\in [t,u)\cap\frac 12\mathbb Z^+$.
		
		The case $v\in\mathbb Z^+$ corresponds to a trackwise translation step, with the caveat that each $\tau\in\mathcal T$ may map some elements of $B$ to its complement, and vice-versa. Luckily, all such subcells are in the state $q_\tau$, so there exists some $\tau'\in\mathrm{Bij}(B)$ that produces the same effect as $\tau$ on $C'_v$. Since $\pi\in\meas(\mathcal T,\mathcal S)$, $\pi^B$ is stationary with respect to permutations of each track, and therefore $C'_v\succeq_\pi C'_{v+\frac 12}$.
		
		The case $v\in\mathbb Z^++\frac 12$ corresponds to a cellwise dynamics step. Since $C'_{v,x}=C'_{v+\frac 12,x}=q$ for all $x\in B\setminus A_v$, the Kronecker product $p^{A_v}\otimes\mathrm{id}^{B\setminus A_v}$ transforms the distribution of $C'_v$ into that of $C'_{v+\frac 12}$. Since $p,\mathrm{id}\in\mathrm{SM}_\pi(\mathcal S)$, we have $p^{A_v}\otimes\mathrm{id}^{B\setminus A_v}\in\mathrm{SM}_{\pi^B}(\mathcal S^B)$, and therefore $C'_v\succeq_\pi C'_{v+\frac 12}$.
		
		By transitivity of the relation $\succeq_\pi$, it follows that $C'_{t}\succeq_\pi C'_{u}$. By the definition of $\mathbf{C'}$,
		\[C_{t,A}\times q^{B\setminus A_t}
		\succeq_\pi C_{u,A}\times q^{B\setminus A_u}.\]
		
		When $|A_t|=|A_u|$, the deterministic $q$ factors drop out and we conclude that $C_{t,A}\succeq_\pi C_{u,A}$.
	\end{proof}
	
	\begin{theorem}[Memory Law for Closed Systems]
		\label{thm:memclosed}
		Fix an SPCA $\mathbf C$, $q\in\mathcal S$, and $t,u\in\frac 12\mathbb Z^+$ with $t\le u$. Let the system $A$ be $q$-closed over $[t,u]$, and let $Y$ be any function of $C_{\mathrm{NonFut}(A')}$, where $A':=\bigcup_{v\in [t,u)}\{v\}\times A_v$. Then,
		
		\[I(C_{t,A};\, Y) \ge I(C_{u,A};\, Y).\]
		
		If $B$ is another $q$-closed system over $[t,u]$, satisfying $A_v\cap B_v=\emptyset$ for all $v\in[t,u)\cap\mathbb Z^+$, then
		
		\[I(C_{t,A};\, C_{t,B}) \ge I(C_{u,A};\, C_{u,B}).\]
	\end{theorem}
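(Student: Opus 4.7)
For the first inequality, my plan is to invoke \Cref{thm:memgeneral} directly and show that each flux term in its conclusion is non-negative. At an integer time $v \in [t, u)$, the step from $(v, A)$ to $(v+\frac 12, A)$ is a pure trackwise translation, carrying subcell $(v, x, \tau)$ to $(v+\frac 12, \tau(x), \tau)$. Each subcell of $C_{v+\frac 12, A}$ is thus either inherited from $C_{v, A}$ (when its preimage under $\tau$ lies in $A_v$) or equals the deterministic constant $q_\tau$ (when its preimage lies in the incoming portion of the boundary $\partial_{v,\tau}A$, which is $q_\tau$ almost surely by $q$-closedness). So $C_{v+\frac 12, A}$ is a deterministic function of $C_{v, A}$ up to constants, and the data processing inequality yields $I(C_{v+\frac 12, A}; Y) \le I(C_{v, A}; Y)$, i.e. $F_Y(v, A) \ge 0$. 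Substituting these non-negative fluxes into \Cref{thm:memgeneral} gives the first inequality.

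For the second inequality, my plan is to pass to the random function presentation from \Cref{def:spca}, in which each cellwise evolution $(v-\frac 12, x) \to (v, x)$ draws an independent random function $F_{v-1, x}$ from $\Gamma$. Let $\mathbf F_A := \{F_{v-1, x} : v \in (t, u] \cap \mathbb N,\; x \in A_v\}$ and define $\mathbf F_B$ analogously. The $q$-closedness of $A$ lets me express $C_{u, A}$ as a measurable function of $C_{t, A}$ together with $\mathbf F_A$, because every subcell crossing into $A$ through a trackwise translation during $(t, u]$ is a deterministic $q_\tau$; the symmetric statement holds for $B$. Using the convention $A_v = A_{v-\frac 12}$ (and the same for $B$) to translate the hypothesized integer-time disjointness to the steps at which the $F$'s are indexed, the collections $\mathbf F_A$ and $\mathbf F_B$ share no index. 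They are therefore mutually independent, and jointly independent of $(C_{t,A}, C_{t,B})$.

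These structural facts supply two Markov chains, to which I apply the data processing inequality. Since $C_{u, A}$ is a function of $C_{t, A}$ and noise $\mathbf F_A$ that is independent of $C_{t, B}$, the triple $C_{t, B} - C_{t, A} - C_{u, A}$ is Markov, giving $I(C_{t, A}; C_{t, B}) \ge I(C_{u, A}; C_{t, B})$. Since $\mathbf F_B$ is independent of $C_{t, A}$ and $\mathbf F_A$, and $C_{u, A}$ is a function of the latter two, $\mathbf F_B$ is independent of $C_{u, A}$; hence $C_{u, B}$ is a function of $C_{t, B}$ and noise independent of $C_{u, A}$, so $C_{u, A} - C_{t, B} - C_{u, B}$ is Markov and $I(C_{u, A}; C_{t, B}) \ge I(C_{u, A}; C_{u, B})$. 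Chaining the two inequalities produces $I(C_{t,A}; C_{t,B}) \ge I(C_{u,A}; C_{u,B})$.

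The main obstacle I anticipate is the careful accounting needed to certify that $\mathbf F_A$ and $\mathbf F_B$ are truly disjoint, particularly around the endpoint $v = u$: the hypothesis supplies $A_v \cap B_v = \emptyset$ only for integer $v \in [t, u)$, whereas the cellwise evolutions at time $u$ consume $F_{u-1, x}$ for $x \in A_u$ and $x \in B_u$. Bridging this via the convention $A_u = A_{u-\frac 12}$ (and tracing half-integer regions back to their integer-time counterparts, using $q$-closedness to rule out spurious overlaps) is the delicate bookkeeping step; once it is settled, the Markov structure and the two applications of data processing are routine.
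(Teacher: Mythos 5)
Your proof is correct, and your handling of the first inequality coincides with the paper's: the trackwise translation step only adds or removes deterministic copies of $q$, so each flux $F_Y(v,A)$ is non-negative (the paper notes it is in fact zero), and \Cref{thm:memgeneral} finishes. For the second inequality you take a genuinely different route. The paper never leaves the macroscopic picture: it runs a half-step induction over $v\in[t,u)\cap\frac 12\mathbb Z^+$ and, at each cellwise evolution step, applies the already-proved first inequality twice --- once with $Y=C_{v,B}$ and once with $Y=C_{v+\frac 12,A}$, using $A_v\cap B_v=\emptyset$ to verify $\{v+\frac 12\}\times A_v\subset\mathrm{NonFut}(\{v\}\times B_v)$ so that the second application is admissible. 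You instead descend to the random function presentation, express $C_{u,A}$ and $C_{u,B}$ globally as functions of $(C_{t,A},\mathbf F_A)$ and $(C_{t,B},\mathbf F_B)$ with disjoint (hence mutually independent) noise collections, and invoke the data processing inequality only twice in total, on the chains $C_{t,B}\rightarrow C_{t,A}\rightarrow C_{u,A}$ and $C_{u,A}\rightarrow C_{t,B}\rightarrow C_{u,B}$. Your version makes the independence structure explicit and avoids reusing the first inequality as a lemma, at the cost of the index bookkeeping you flag; one small point of care is that the second chain requires $\mathbf F_B$ to be independent of the \emph{pair} $(C_{t,B},C_{u,A})$ jointly, not merely of $C_{u,A}$, which does hold since $\mathbf F_B$ is jointly independent of $(C_{t,A},C_{t,B},\mathbf F_A)$. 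As for the endpoint worry: the disjointness you need is at the integer times in $(t,u]$ (via $A_v=A_{v-\frac 12}$), while the hypothesis is stated for integer times in $[t,u)$; the paper's own inductive step at half-integer $v$ uses $A_v\cap B_v=\emptyset$ with $A_v=A_{v+\frac 12}$ and $v+\frac 12\in(t,u]$, so it needs exactly the same shifted condition. The off-by-one at $v=u$ is therefore a shared, cosmetic indexing issue in the theorem statement rather than a defect of your argument.
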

	
	\begin{proof}
		For $v\in [t,u)\cap\mathbb Z^+$, $C_{v,A}$ and $C_{v+\frac 12,A}$ hold identical data aside from the deterministic addition or removal of $q$'s; hence, $F_Y(v,A)=0$. The first inequality now follows directly from \Cref{thm:memgeneral}.
		
		The second inequality will follow by induction if we can show that, for all $v\in [t,u)\cap\frac 12\mathbb Z^+$,
		\[I(C_{v,A};\, C_{v,B}) \ge I(C_{v+\frac 12,A};\, C_{v+\frac 12,B}).\]
		
		The case $v\in\mathbb Z^+$ follows as in the first inequality, because the trackwise translation step does not change the information content of either system.
		
		The case $v\in\mathbb Z^++\frac 12$ follows from two applications of the first inequality: first with $C_{v,B}$ in place of $Y$, and then a second time with $C_{v+\frac 12,A}$ in place of $Y$. In the second application, $A_v\cap B_v=\emptyset$ ensures that $\{v+\frac 12\}\times A_v\subset\mathrm{NonFut}(\{v\}\times B_v)$. Chaining the two applications together yields
		\[I(C_{v,A};\, C_{v,B})
		\ge I(C_{v+\frac 12,A};\, C_{v,B})
		\ge I(C_{v+\frac 12,A};\, C_{v+\frac 12,B}).\]
	\end{proof}
	
	To review what we've learned, the ``arrow of time'' is an umbrella term for a bundle of important asymmetries with respect to time reversal. For Markov chains, two such asymmetries are time-homogeneity and the second law of thermodynamics. The former may be viewed as more fundamental, since it implies the latter. Indeed, even if the dynamics are not quite homogeneous, provided that they stay within $\mathrm{SM}_\pi(\mathcal S)$, \Cref{thm:secondmarkov} implies monotonicity of $\kl\cdot\pi$.
	
	For SPCAs, the analogous asymmetries are spacetime-homogeneity and the Resource Law. The fundamental asymmetry, from which all others are derived, is the combination of spacetime-homogeneity and locality that is encoded in \Cref{eq:cellularm2}. In the deterministic presentation, the role of fundamental asymmetry is instead taken by the initial condition in \Cref{eq:cellulard1}. By \Cref{thm:equivalence2}, the two presentations are equivalent and therefore equally valid.
	
	When the spatial geometry $(\mathcal X,\mathcal T)$ is nontrivial, there are two additional derived asymmetries: the Memory Law, and Pearl's $d$-separation criterion as applied to $\mathrm G(\mathcal X,\mathcal T)$. We've seen that a Markov chain's edges can be reversed, and its $d$-separation criterion expressed as a symmetric Markov property. According to \citet[Thm 1.2.8]{pearl2009causality}, this is not the case for the more interesting causal graphs we encounter here. Consequently, an SPCA's backward ``dynamics'' are not merely inhomogeneous; in failing to be Markov relative to the reversed graph, they are also \emph{nonlocal}.
	
	Our definition of a closed system is somewhat restrictive; in future work, it may prove useful to develop more general estimates of the flux terms. On the other hand, \Cref{thm:resclosed,thm:memclosed} are valid asymmetries, and they are powerful enough to illustrate some key applications of our model. Before turning to those, let's derive one easy extension. Perhaps the main weakness of \Cref{def:closed} is that it must occur with probability one, given only the initial condition, i.e., the Big Bang. In practice, we would like to condition on an event $E$, conveying the preparation of a closed system before some experiment start time $t\in\mathbb Z^+$. Fortunately, if $E$ is a function of $C_{\le t}$, it follows from the structure of $\mathrm G(\mathcal X,\mathcal T)$ that the conditional distribution $\prob_{C_{\ge t}\mid E}$ is itself an SPCA with the same dynamics, only its initial condition is changed to $\prob_{C_t\mid E}$. Therefore, our results extend to the setting of $\prob_{C_{\ge t}\mid E}$.
	
	
	\section{Applications}
	\label{sec:discussion}
	
	
	The arrow of time is so deeply ingrained in our intuitions, and even in the language of science and mathematics, that it's hard to reason about from first principles. In the absence of a precise model, it's extremely easy for assumptions to go unchecked, as evidenced by historical debates such as the one over Maxwell's demon \citep{bennett1982thermodynamics}. As a partial solution, \citet{pearl2009causality} brought mathematical precision to causal concepts; however, he did not specify their connection to fundamentally reversible physical laws.
	
	In this article, the SPCA brings a similar degree of mathematical precision to emergent irreversibility, from reversible first principles. It's quite general, being parametrized by $(\mathcal X,\mathcal T,\mathcal S,\mu,p)$, and more tractable than realistic field theories from physics. Formerly ambiguous concepts, such as an experimenter's ``knowledge'' (usually of the past) and ``free will'' (to influence the future), can be endogenized within such a model, enabling clearer analyses. For maximum generality, rather than constructing specific SPCAs, we utilize the theorems of \Cref{sec:markov,sec:spca}.
	
	This section provides a brief overview of some contentious topics that illustrate the clarifying power of our theory. For the sake of brevity, we sometimes blur the distinction between SPCAs and physical reality. Implicit in this approach is the hypothesis that SPCAs make accurate predictions regarding information-theoretic properties of the Universe, at least in some classical limit that evades the peculiarities of quantum information. Some of these predictions may merit future experiments; however, for the purposes of our theoretical exposition, we find assurance in our predictions' consistency with established science and everyday experience.
	
	\subsection{Negentropy as a Resource}
	\label{sec:appresource}
	
	Colloquially, energy is often thought of as a resource, to be irreversibly consumed to perform various tasks. Physics describes the situation differently, with conservation laws that forbid the total energy from changing. Instead, energy is irreversibly converted from more useful forms (referred to as \textbf{free energy}) into less useful forms (e.g., ambient heat). While a detailed analysis of conserved quantities is beyond the scope of this article, let's take a moment to summarize their relationship with negentropy, the fundamental resource of statistical mechanics.
	
	A collection of $d\in\mathbb N$ scalar quantities is represented by a vector function $E:\bigcup_\tau \mathcal S_\tau \rightarrow \mathbb R^d$. We say $E$ is \textbf{conserved} for $p\in\mathrm{SM}_\pi(\mathcal S)$ if, for all $s,s'\in\mathcal S$ with $p(s,s') > 0$, we have $\sum_\tau E(s_\tau) = \sum_\tau E(s'_\tau)$. $E$ extends to arbitrary systems by $E(C_{t,A}):=\sum_{(x,\tau)\in A_t\times\mathcal T}E(C_{t,x,\tau})$. For conserved $E$ and a closed system $A$, the distribution of $E(C_{t,A})$ is unchanging with $t$ \footnote{Given a $q$-closed system $A$, without loss of generality let $E(q)=0$. Then, cellwise evolution steps conserve $E(C_{t,A})$, and so do trackwise translation steps which only add or remove copies of $q$.}.
	
	Now, for any non-intersecting pair of systems $A,B$, a direct substitution into \Cref{cor:abstractdecomp} decomposes their aggregate negentropy into three non-negative parts:
	\begin{align}
		\label{eq:spcadecomp}
		J_\pi(C_{t,A\cup B})
		= J_\pi(C_{t,A}) + J_\pi(C_{t,B}) + I(C_{t,A};\, C_{t,B}).
	\end{align}
	
	By \Cref{def:entropy}, $J_\pi$ is the amount by which a system's KL divergence (relative to $\pi$) exceeds the infimum over all probability distributions. In the presence of conserved quantities, this unconstrained infimum may be unattainable. Consequently, it's more useful to let $J^E_\pi(X)$ be the amount by which $\kl{X}{\pi}$ exceeds its \emph{constrained} infimum, subject to keeping $\prob_{E(X)}$ unchanged. Similarly, let $I^E(X;Y) := I(X;Y) - I(E(X); E(Y))$ be the excess mutual information between a pair of systems, beyond that required by their conserved quantities.
	
	When both $A$ and $B$ are closed, the joint distribution $\prob_{E(C_{t,A}),E(C_{t,B})}$ is fixed. By simultaneously minimizing every term on the right-hand side of \Cref{eq:spcadecomp}, subject to this constraint, we find that the KL divergence of $C_{t,A\cup B}$ exceeds its constrained infimum by $J^E_\pi(C_{t,A}) + J^E_\pi(C_{t,B}) + I^E(C_{t,A}; C_{t,B})$. On the other hand, if the systems $A$ and $B$ are allowed to come into contact, then only the sum's distribution $\prob_{E(C_{t,A\cup B})}=\prob_{E(C_{t,A})+E(C_{t,B})}$ is fixed. Subject to this weaker constraint, the excess KL divergence increases to $J^E_\pi(C_{t,A\cup B})$. Letting $F^E(C_{t,A}; C_{t,B})$ denote the difference between the two values, we have
	\begin{align}
		\label{eq:spcadecompF}
		J^E_\pi(C_{t,A\cup B})
		= J^E_\pi(C_{t,A}) + J^E_\pi(C_{t,B}) + I^E(C_{t,A};\, C_{t,B}) + F^E(C_{t,A};\, C_{t,B}).
	\end{align}
	
	\Cref{eq:spcadecompF} generalizes the decomposition of \Cref{eq:spcadecomp}, and reduces to it in the case where $E$ is a constant function. In intuitive terms, \Cref{eq:spcadecompF} decomposes the \emph{aggregate} resource of $A\cup B$ into four non-negative components: the \emph{internal} resource ($J_\pi^E$) in each of $A$ and $B$, as well as \emph{mutual} resources due to their correlation ($I^E$) and their potential to trade conserved quantities ($F^E$).
	
	Suppose $A\cup B$ is a closed system to which \Cref{thm:resclosed} applies, so that the left-hand side is non-increasing. In general, when $A$ and $B$ are in contact, the negentropy may redistribute arbitrarily between the right-hand terms, subject to their total not increasing. However, when $A$ and $B$ are also closed off from each other, we can constrain the individual terms. Indeed, all four terms are non-increasing: the first and second by \Cref{thm:resclosed}, and the third by \Cref{thm:memclosed}. The fourth term, being a function of the unchanging distribution $\prob_{E(C_{t,A}),E(C_{t,B})}$, is constant.
	
	Thus, if the two systems $A,B$ are allowed to relax separately, then the first three resource terms may drop to zero, but the fourth retains its starting value. When
	\[J^E_\pi(C_{t,A}) = J^E_\pi(C_{t,B}) = I^E(C_{t,A};\, C_{t,B}) = 0,\]
	both systems are said to be in \textbf{thermodynamic equilibrium}, and \Cref{eq:spcadecompF} simplifies to
	\[J^E_\pi(C_{t,A\cup B})=F^E(C_{t,A};\, C_{t,B}).\]
	
	For example, consider two tanks of inert gas: a hot tank $A$ and a cold tank $B$. Individually, not much can be done with either tank. However, when they are brought together, a heat engine can extract the jointly held resource $F^E(C_{t,A}; C_{t,B})$; this is essentially the free energy, expressed in units of negentropy. For further study of the equilibrium setting, in which only the $F^E$ term is positive, we defer to the standard literature in thermodynamics. This article focuses on the remaining terms, so let's dismiss $E$ and return to \Cref{eq:spcadecomp}.
	
	The third term $I(C_{t,A}; C_{t,B})$ merits further examination. It's typically much smaller than the other terms: it becomes zero when either of $A,B$ is relaxed to equilibrium, and can only stay high by protecting densely packed data from environmental noise. To get a sense of scale, copying a 125 TB magnetic tape drive filled with random data (e.g., a one-time pad) would yield only $8\cdot 125\cdot 10^{12}=10^{15}$ bits of mutual information, whereas the physical entropy of any household object exceeds Avogadro's number ($>10^{23}$ bits). At room temperature, each bit is worth $3\times 10^{-21}$ Joules of free energy \citep{bennett1982thermodynamics}, so the data on our hypothetical drive is only worth $3$ microjoules. This renders the mutual information irrelevant to most thermodynamical calculations; on the other hand, we will see its relevance in some of the topics to follow.
	
	\subsection{Energy-Efficient Computing}
	\label{sec:applandauer}
	
	In the digital age, the design of computing devices is constrained by limits on power consumption and heat generation. Thus, a question of great practical interest is: how much computation can be performed on a given free energy budget? Von Neumann initially proposed that every elementary computation should cost at least one bit of negentropy, thereby dissipating $3\times 10^{-21}$ Joules at room temperature. \citet{landauer1961irreversibility} refined this hypothesis, arguing that only irreversible computations, which erase information, carry this cost. \citet{frank2018physical} gave further refinements; the present subsection is mainly an effort to streamline his arguments using our \Cref{thm:equivalence1,thm:resclosed,thm:memclosed}.
	
	Landauer's principle must be interpreted with care. Consider the following operations:
	\begin{itemize}
		\item FLIP: overwrite a known bit with another known bit
		\item ROLL: overwrite a random bit with an independently random bit
		\item FORGET: overwrite a known bit with a random bit
		\item SET: overwrite a random bit with a known bit
	\end{itemize}
	
	Naively, one might expect at least some of these operations to be thermodynamically costly. However, suppose we carry reserves of 0s, 1s, and randomized bits, in specified regions of memory. Then, each of the operations is logically reversible, with the net effect of draining one reserve while replenishing another. In fact, the reserve of 1s is redundant, since 0s and 1s can be reversibly negated.
	
	For the purposes of analysis, it's helpful to simplify the SPCA model. To start with, we imagine an SPCA with doubly stochastic dynamics, whose subcellular state spaces $\mathcal S_\tau$ encode fixed-width binary numbers. By the last part of \Cref{thm:resclosed}, any closed system of constant volume also inherits doubly stochastic dynamics. Consequently, \Cref{thm:equivalence1} implies that no matter how we engineer a system \footnote{One might wonder how it's possible to engineer a system's behavior at all, when the dynamics $p\in\mathrm{DM}(\mathcal S)$ are forever fixed. A straightforward way to embed a variety of dynamics into a single $p$ is to break the state into \emph{control} and \emph{data} parts: $\mathcal S=\mathcal S_\mathrm{control}\times\mathcal S_\mathrm{data}$. If the control is fixed, i.e., $p((c,d),(c',d')) = 0$ whenever $c\ne c'$, then it's easy to show that for every fixed $c\in\mathcal S_\mathrm{control}$, we get an effective dynamics $p_\mathrm{data}(c)\in\mathrm{DM}(\mathcal S_\mathrm{data})$. Cellular automata can be very versatile, as evidenced by the universality results surveyed by \citet{kari2018reversible}.}, its effective transition must always be a random mixture of bijections. To avoid the trouble of constructing full-fledged SPCAs, we therefore model our hypothetical computer as a grid of binary numbers, on which we are free to apply any random mixture of bijections.
	
	We make use of two memory reserves: one containing only zeros, the other containing unbiased i.i.d. bits. The two reserves are jointly modeled by a single system with state space
	\[\mathcal E = \left\{(n,m)\in\mathbb Z^+\times\{0,1\}^{\mathbb Z^+}:\;\forall i<n,\,m_i=0\right\}.\]
	
	The \textbf{charge level} $n$ indicates the size of the reserve of 0s, or \textbf{battery}, taking up a prefix of $m$. The reserve of random bits, or \textbf{heat bath}, occupies the trailing end of $m$. When the system is left alone, its dynamics are as follows: the charge level and battery do not change, but the heat bath mixes rapidly.
	
	To express these dynamics as a mixture of bijections, we treat $n$ as a fixed control. At each position $i<n$, we preserve the battery by applying the identity to $m_i$. At each position $i\ge n$, we randomize the heat bath by choosing randomly between the identity and negation. From any given state with charge level $n$, this system immediately collapses into a uniform equilibrium over all states compatible with the charge level. With this system in hand, let's see how each of the four listed operations are implemented.
	
	FLIP is clearly a bijection: either the identity, if the initial and target values match, or a negation otherwise. We should clarify the meaning of the requirement that the initial bit be \emph{known}. Our choice of bijection must not depend on a direct observation of the initial bit, as such a composite operation would not be bijective. We can try to guess the initial value a priori, in which case an incorrect guess would lead to the wrong result.
	
	Alternatively, we might have a copy of the initial bit elsewhere in memory. One way to copy $x$ is to take a $0$ from our reserve, and apply the reversible XOR-assignment operation $(x,y)\mapsto(x,x\oplus y)$. It maps the pair $(x,0)$ to $(x,x)$, copying $x$ at some expense to our reserve. Thus, knowledge comes from negentropy, a finding that we expand upon in \Cref{sec:appinduction}. A copy of $x$, or more generally any data that implies its value, can serve as a control, reversibly allowing for a choice of bijection that depends on $x$. The battery's charge level $n$ serves the same kind of role, informing its user of the $n$ zeros.
	
	Moving on to the ROLL operation, we'd be tempted to call it irreversible since it erases the initial value. However, it has no effect on the bit's \emph{distribution} \footnote{Indeed, since the old value is permanently lost, there is no way to detect whether ROLL has been applied. Of course, we are assuming the initial bit is independent of all external data; the correlated case is discussed toward the end of this subsection.}. ROLL can be implemented by choosing randomly between the identity and negation; alternatively, it can be implemented by swapping with a bit from the heat bath, leaving the charge level unchanged.
	
	FORGET increases the entropy (hence decreases the negentropy) by one bit. To make it reversible, we should collect the initial bit's negentropy for future use. First, we apply FLIP if necessary to set its value to zero. Then, we swap it into position $n$ of the reserve system and increment its energy level to $n+1$; this has the effect of withdrawing a random bit from the heat bath, while simultaneously depositing a zero into the battery.
	
	SET decreases the entropy from one bit to zero. The second law of thermodynamics forbids this operation in isolation, requiring that we store the entropy elsewhere. The solution is to apply FORGET in reverse, withdrawing a zero from the battery while depositing our random bit into the heat bath. Since FORGET and SET are inverses of one another, neither operation consumes negentropy; they merely transfer negentropy from bit to battery and vice-versa, respectively.
	
	The real issue is that we often want to treat nonrandom bits \emph{as if} they were random, clearing them with SET instead of FLIP. To see why, consider any long computation that produces a lot of intermediate data. Once this data is no longer needed, we'd like to clear or overwrite it to make room for new data. Since it was deterministically computed, it should in principle be erasable with FLIP, at no thermodynamic cost. On the other hand, if there is no substantially faster way to compute it than to run the source program (i.e., the data has high \textbf{logical depth} as defined by \citet{bennett1988logical}), then a successful implementation of FLIP requires repeating a similar amount of computation \footnote{Proof sketch: suppose there were a fast reversible algorithm to clear the data. Then, reversing this algorithm computes the data equally quickly, implying that its logical depth was actually low.}. Such approaches are studied in the field of \textbf{reversible computing}; for references, see \citet{Vitnyi2005TimeSA} and \citet{morita2017theory}.
	
	To avoid the extra computations, we can instead pretend the junk data is random and apply SET to it. Once placed into the heat bath, the nonrandom bit is randomized, producing new entropy. Thus, what's costly is not the SET operation \emph{per se}, but rather, the decision to give up on the bit's deterministic origins and throw it into a noisy environment. In hindsight, the conclusion that randomizing a nonrandom bit increases entropy seems too obvious; however, it was Landauer who first realized that the entropy of digital data is essentially the same as that of thermodynamics, enabling the bit's ejection into a heat bath.
	
	Our reversible implementation of SET, with its controlled exchange of exactly one bit, is quite idealized. \citet{neyman1966negentropy} considers a more realistic implementation that imposes an energy differential (say, using an electric field) between a transistor's 0 and 1 states. The transistor is then relaxed to equilibrium with an external heat bath. Conservation of energy implies that, at the low-energy state, the heat bath has more energy, hence also a larger volume of phase space configurations. As a result, the equilibrium distribution favors the low-energy state.
	
	In order to bound the probability of error per such operation by $2^{-50}$, the high-energy state must occupy a fraction no bigger than $2^{-50}$ of phase space. Thus, relaxing the transistor from that state expands our phase space volume by a factor of at least $2^{50}$. Even assuming no other imperfections, entropy then increases by not just 1, but 50 bits! In conclusion, if we ever want to dissipate less than 50 bits per operation, we must choose a different sort of implementation; and if we want to dissipate less than 1 bit per operation, the \emph{only} possibility consistent with thermodynamics is reversible computing.
	
	Finally, we present an extension of Landauer's principle. Consider two closed systems $A$ and $B$. By \Cref{thm:resclosed,thm:memclosed}, every term in the decomposition of \Cref{eq:spcadecomp} is non-increasing. Therefore, whenever any term on the right-hand side decreases, the joint negentropy $J_\pi(C_{t,A\cup B})$ must also decrease by at least the same amount. In particular, any decreases in the mutual information $I_\pi(C_{t,A};C_{t,B})$ are irreversible and thermodynamically costly.
	
	For example, suppose Alice and Bob know an inside joke, represented as copies of the same random data in their respective memories. While they are apart, if Bob forgets the joke, their mutual information would decrease. As a result, negentropy would decrease, not necessarily in Alice or Bob individually, but certainly in their union. This limitation is lifted when Alice and Bob are together: indeed, they may jointly apply XOR-assignments to freely copy and uncopy data, mapping $(x,0)$ to $(x,x)$ and vice-versa.
	
	As a consequence of this discussion, we predict that no future technology can eliminate the thermodynamic cost of sensing. While we might hope that reversible computers will someday be able to dispose of junk data arising from internal computations, the external world is much harder to control. To reversibly ``unsense'' something, we would have to revisit precisely the same scene to perform an uncopy operation. Absent such an opportunity, clearing our senses requires dissipation.
	
	\subsection{Induction and Randomness}
	\label{sec:appinduction}
	
	It's important to address the use of a probabilistic Past Hypothesis. No matter which of the three presentations we choose, \Cref{def:spca} posits a probability measure on an entire ensemble of trajectories. This presents a bit of a paradox: the inhabitants of an SPCA can only gather observations from one trajectory in the ensemble. Empirically speaking, it's meaningless to ask whether or not the remaining trajectories are ``real'', as we can never access them. Instead, the ensemble expresses the subjective uncertainty within our mind.
	
	Now, subjective does not mean arbitrary. If we take it for granted that minds are entirely physical, then any inferences contained therein must be a product of the specific trajectory in which the mind finds itself. In other words, the general ensemble somehow arises as a function of one particular member. To determine which ensemble should be inferred, let's put ourselves in the shoes of an agent inhabiting an SPCA, and ask: what can we infer about our own Universe?
	
	\subsubsection{The Oblivious God}
	
	To demonstrate the problem's difficulty, let's generously overstate the agent's prior knowledge and physical capabilities. Recalling that an SPCA is parametrized by $(\mathcal X,\mathcal T,\mathcal S,\mu,p)$, suppose that $\mathcal X,\mathcal T,\mathcal S,p$, with $|\mathcal X|<\infty$, are all known to the agent a priori. This leaves only the task of inferring the actual state $C_t\in\mathcal S^\mathcal X$, whose initial distribution $\mu$ is now unspecified. To be even more generous, we allow the dynamics to be arbitrarily inhomogeneous and nonlocal in the agent's favor. That is, imagine a God-like agent \footnote{While realistic agent models are beyond the scope of this article, we should intuitively convince ourselves that the God-agent can emulate all such agents. To sketch the idea, model an agent by some system $A$ living inside the SPCA. Let's imagine the agent maintains a periodic internal homeostasis, in which for some pair of times $t<u$, with probability one, $C_{t,A}=C_{u,A}$ (so in particular, $|A_t|=|A_u|$). The agent's presence acts as a control on $\mathcal X\setminus A$, manipulating its dynamics in a manner that depends on the specific contents of $C_{t,A}$. Regardless, if $\prob(C_u\mid C_t)$ is doubly stochastic, then so is $\prob(C_{u,\mathcal X\setminus A}\mid C_{t,\mathcal X\setminus A})$. As a result, the agent can be emulated by a disembodied God-agent living in $\mathcal X\setminus A$, with the tracks $\mathcal T$ mended to skip through the ``hole'' made by removing $A$.} who can choose which transition matrix $p(t)\in\mathrm{SM}_{\pi^\mathcal X}(\mathcal S^\mathcal X)$ the Universe applies at each time $t\in\mathbb N$, subject to a fixed stationary measure $\pi\in\meas(\mathcal T,\mathcal S)$ with $\mathbf I(\pi)<\infty$. The theorems of \Cref{sec:spca} still apply because, aside from fixing $\pi$, their proofs make no specific use of spacetime-homogeneity.
	
	To be clear, when we say that the state is unknown to the agent a priori, we mean that the transition matrices $p(t)$ are chosen independently of the configuration $\mathbf C$. This restriction prevents violations of \Cref{sec:spca}'s laws \`a la Maxwell's demon. It also means that an agent hoping to exercise state-dependent control over the environment cannot do so by manipulating $p(t)$ directly; instead, state knowledge must be stored in a physical memory. As a result of interactions with such a memory, the environment can be made to evolve in a manner that depends on the memory's contents.
	
	Let's designate some region $\mathrm{Mem}\subset\mathcal X$, in which the agent should store state information. One might imagine that such a powerful agent would have no trouble filling $\mathrm{Mem}$ with basic information about the state of its surroundings; at a minimum, it should be able to look around and verify its own physical form to be more than a disembodied brain! Formally, can the agent ensure that, at some future time $t$,
	\[I(C_{t,\mathrm{Mem}};\, C_{t,\mathcal X\setminus\mathrm{Mem}})>0?\]
	
	To gain intuition, consider the case $\pi=\sharp$. By \Cref{thm:equivalence1}, the allowed operations are random mixtures of bijections. Unfortunately, a measurement that overwrites the previous contents of $\mathrm{Mem}$, with a copy of external data, would be non-bijective. Therefore, it cannot be implemented without additional assumptions on the contents of the memory and/or environment.
	
	In the general case, we might model our ignorance about the current state $C_t$ with a maximum entropy (hence, zero negentropy) Bayesian prior over all possible states. By \Cref{thm:resgeneral}, the Universe's negentropy will thereafter always remain zero. By \Cref{eq:spcadecomp}, our knowledge about other systems, as measured by the mutual information, cannot exceed the aggregate negentropy; hence, it too is stuck at zero. In other words, if we start with a zero negentropy prior, then no experimental evidence could ever persuade us to improve our posterior belief. This prior is effectively an epistemic \textbf{heat death}. Even if we assume a priori that $\mathrm{Mem}$ is zeroed out, this would only grant as much negentropy as the size of the memory. We could then use XOR-assignments to copy small amounts of data into our memory; however, we would never acquire the considerable levels of free energy that life requires.
	
	This thought experiment is a variant of the famous \textbf{Boltzmann brain} paradox. A Boltzmann brain is a short-lived formation of a person's entire mental state, complete with senses and memories, arising from sheer chance in an unstructured environment such as a gas cloud. The paradox is that, under a heat death prior, it would be a much bigger coincidence to find a large section of structured Universe, than to find a smaller section (e.g., $\mathrm{Mem}$) encoding only your current state of consciousness. Since the latter is deemed more likely, you should not believe that your senses or memories are at all indicative of any reality outside your mind. As we've seen, not even the physics-bending God-agent can escape epistemic heat death.
	
	There is a static variant of this problem, in which we drop the dynamical structure of $p\in\mathrm{SM}_\pi(\mathcal S)$ and simply ask: is it possible to extrapolate knowledge from part of an object (e.g., $\mathrm{Mem}$), to the whole object (e.g., $\mathcal X$)? For suitable choices of ``object'', this is addressed in philosophy as the \textbf{problem of induction}, and in statistical learning theory by \textbf{No Free Lunch} theorems. Unfortunately, without some prior bias, the answer appears to be negative \citep{wolpert1997no}.
	
	\subsubsection{Universal Induction}
	
	If the preceding conclusions seem counterintuitive, it's because everyday inductive reasoning is not based on zero negentropy priors. Instead, our priors contain biases, learned from experiences collected throughout our lives. Unfortunately, this merely regresses the problem to another instance of itself: the lifelong learning of good priors is itself an inductive procedure, subject to the same No Free Lunch theorem, requiring that we carry a biased prior from birth. At the next level of regress, we turn to Darwinian evolution, which is yet another inductive procedure, again subject to the No Free Lunch theorem! Which prior does evolution use?
	
	To illustrate, consider a Darwinian contest among abstract creatures tasked with predicting an unknown function $f:\mathbb N\rightarrow\mathbb Z_m$. On day $n$, the creatures are asked to produce guesses of $f(1),f(2),\ldots,f(n)$. The creatures who correctly guess all $n$ values survive, and repopulate for the next day. As a conceptual aid, we can identify each creature with the guesses $\hat f:\mathbb N\rightarrow\mathbb Z_m$ that it would produce. Note that any given creature's guesses are deterministic.
	
	Suppose the creatures randomly mutate in such a way that, without selective pressure, the values $(\hat f(n))_{n\in\mathbb N}$ would be i.i.d. and uniform on $\mathbb Z_m$. Now, the filter of natural selection amounts to conditioning the population distribution on the constraints $\hat f(i)=f(i)$, for all $i\le n$. The surviving agents would satisfy all $n$ constraints, but be uniformly distributed on all arguments larger than $n$. Even if $f$ has some convenient structure, such as periodicity, it appears that natural selection fails to promote generalization.
	
	Why is this failure not reflected in the real world? According to \citet{johnston2022symmetry}, mutations actually produce highly non-uniformly distributed \textbf{phenotypes} (e.g., functions $\hat f$). This is because mutations do not act directly on the phenotype $\hat f$, but on encodings known as \textbf{genotypes}, which are algorithmically mapped into phenotypes. If $\hat f$ is \emph{simple}, meaning it corresponds to a short genotype, then it's much easier for random mutations to chance upon it.
	
	Thus, evolution is biased toward simplicity \footnote{Since the genotype-phenotype maps considered by \citet{johnston2022symmetry} are not Turing-complete, their notion of simplicity is highly dependent on the map in question. While it seems plausible that the behavior $\hat f$ can be genetically programmed in a Turing-complete language, the encoding of such \emph{innate} behaviors remains a subject of active research \citep{anholt2020evolution}.}. How can we know if this is a good bias? Obviously, we get a Free Lunch if the truth happens to agree with our bias; however, mismatched biases would just as easily blind us to truths. The situation appears no better than guessing!
	
	Fortunately, the simplicity bias, also known as \textbf{Occam's razor}, has an exceptionally strong theoretical justification. If we accept the Church-Turing thesis, then any practical inference procedure must be \emph{computable}. An environment that defeats all computable inference procedures (e.g., a typical sample from a heat death distribution), is one in which inference is impossible. In all other environments, one can only hope to do as well as the computable procedure that's best suited for that environment. The most celebrated discovery of \citet{solomonoff1964formal} is that all computable probability distributions are dominated, up to constant factors, by a \textbf{universal prior}.
	
	Without going into formal definitions, we give a generative description of Solomonoff's universal prior. First, a random computer program (or genotype) is generated by uniformly sampling an infinite sequence of $0$s and $1$s. The program is mapped through a universal computer (or genotype-phenotype map), to get a non-uniformly distributed output (or phenotype). Despite its infinite length, the program can contain instructions to stop reading after a finite prefix; thus, a finite program $p$ is read with probability $2^{-|p|}$, giving more weight to shorter programs. Just as in our discussion of the algorithmic entropy in \Cref{sec:hdiscrete}, the constant factor hidden in the choice of universal computer can be bounded by restricting to computers with small physical implementations.
	
	Solomonoff induction can be applied to any partially observed dataset, essentially treating it as the output of a uniformly random computer program. The dataset should include any prior information that we want to take into account. Performing Bayesian updates on the observed part yields posterior inferences on the unobserved part. Despite its Bayesian character, \citeauthor{rathmanner2011philosophical} \citep{hutter2004universal,rathmanner2011philosophical} survey some frequentist guarantees of Solomonoff induction. Moreover, they explain how the universal prior can serve as an \emph{interpretation} of probability.
	
	The universal prior's major drawback is that it can only be computed in the limit of infinite computation time; as such, it is not a \emph{practical} inference procedure. Instead, it provides a mathematically well-defined ideal against which all inference procedures may be compared. Its role for induction is analogous to that of a brute force search algorithm that plays perfect chess.
	
	After replacing the uniform prior on $C_t$ with the universal prior, Boltzmann brains cease to pose a problem. This is because simpler extrapolations, in which the external Universe $C_{t,\mathcal X\setminus\mathrm{Mem}}$ is consistent with our senses $C_{t,\mathrm{Mem}}$, are weighted higher. Despite the concerns expressed by \citet{carroll2020boltzmann}, it's even possible to handle theories of physics that predict many more Boltzmann brains than real brains, because the universal prior does not weigh the brains equally. We leave a detailed estimation of the relevant posteriors to future work.
	
	Note that Solomonoff's posterior amounts to a weighted sum over all theories consistent with observation. Individual probabilistic theories, such as our SPCAs, \citeauthor{albert2001time}'s Mentaculus, and even quantum theories, can be understood in terms of the two-part codes from \Cref{sec:hdiscrete}. The code's first part is a \textbf{theory}, detailing a ``typical set'' of possible outcomes, often (though not always) arising from a well-behaved probability measure. The second part is an \textbf{index} into this set, specifying one of its elements \footnote{For probabilistic theories, a convenient way to form the index is with Shannon-Fano or arithmetic coding.}. Practical considerations demand that the theory's description be much shorter than the index.
	
	In the case of quantum theories, the index would use Born's rule to optimally encode measurements. An SPCA's snapshot $C_t\in\mathcal S^\mathcal X$ also has a canonical two-part code. The theory part consists of $(\mathcal X,\mathcal T,\mathcal S,\mu,p,t)$, which together determine the probability distribution $\prob_{C_t}$. The index part encodes the snapshot $C_t$ in a manner optimized for this distribution. If the resulting code is approximately incompressible, its weight in Solomonoff's posterior can be shown to be substantial; hence, the theory is considered a good explanation of the observed data. In practice, compressibility is checked using statistical tests \citep[\S2.4]{li2019introduction}. For more on two-part codes, see \citet{wallace2005statistical}.
	
	\subsection{Compressibility as a Resource}
	\label{sec:appcompression}
	
	The universal prior has fractal-like non-uniformities, precluding the existence of a typical set in the sense of \Cref{sec:hdiscrete}. Its Shannon entropy is infinite \citep[Exer 4.3.5]{li2019introduction} and, even for posteriors where it's finite, the Shannon entropy need not be physically meaningful. For example, suppose we have an electric vehicle battery that is either full or empty, and the evidence puts equal odds on the two possibilities. This is modeled by a composite statistical ensemble: an equal mixture of a low-entropy and a high-entropy ensemble. Its Shannon entropy is roughly the average of the two, equivalent to a half-full battery. However, it's common sense that the mystery battery is \emph{not} operationally equivalent to a half-full battery.
	
	The algorithmic entropy, on the other hand, assigns the battery \emph{either} a low entropy \emph{or} a high one, depending on its particular state. Moreover, by definition, the universal prior concentrates on states with low algorithmic entropy. The definitions also imply the following link to data compression: if a string $s$ is losslessly compressed to a small size by a small algorithm, then its algorithmic entropy $K(s)$ is small. The connection between Shannon entropy and algorithmic entropy, from \Cref{sec:hdiscrete}, suggests a dual connection between Shannon negentropy and the \textbf{algorithmic compressibility} $|s|-K(s)$. By analogy, one might expect compressible strings to be exploitable as a resource.
	
	We demonstrate that this is indeed the case. As a result, compressibility is an objective thermodynamic resource that does not depend on a choice of probability distribution (as in Shannon's entropy) or coarse partition (as in Boltzmann's entropy). Compressibility depends on a choice of universal Turing machine but, as explained in \Cref{sec:hdiscrete}, the laws of physics narrow the options to within a margin that's negligible on macroscopic scales.
	
	We return to the SPCA-inspired digital Universe of \Cref{sec:applandauer}, consisting of a grid of binary data on which we are allowed to apply arbitrary bijections. Let $\{0,1\}^*$ denote the set of finite-length binary strings. Lossless compression algorithms can be used to fill a bank of zeros, which serve as an ``information battery''. To see how, start with a computable injective function $f:\{0,1\}^*\rightarrow\{0,1\}^*$, with finite worst-case blowup
	\[c := \sup_{x\in\{0,1\}^*}\left(|f(x)|-|x|\right).\]
	
	Fix the battery capacity $C\gg c$. Assuming $|x| \le C$, a fixed-length binary encoding of $|x|$ takes $1+\lfloor\log C\rfloor$ bits. Therefore, for all such $x$, the concatenation $e(x):=(|x|, x, 0^{C-|x|})$ takes exactly $1+\lfloor\log C\rfloor+C$ bits. By the injectivity of $f$, there exists a bijection $g:\{0,1\}^{1+\lfloor\log C\rfloor+C}\rightarrow\{0,1\}^{1+\lfloor\log C\rfloor+C}$ such that, for all $x$ satisfying $|x|\le C-c$,
	\[g(e(x)) := e(f(x)).\]
	
	In words, $g$ applies the compression algorithm $f$ to the battery, changing its state from $e(x)$ to $e(f(x))$. Since $C$ is fixed, the prefix $|x|$ indicates the battery's \textbf{charge level} $C-|x|$, i.e., the number of known zeros in the suffix of $e(x)$. Even if there are additional zeros at the end of $x$, these cannot be used, because the battery's usage is controlled on its charge indicator. Moreover, since $g$ has undefined behavior when $|x| > C-c$, the battery should be considered dead if the charge ever falls below $c$. Let's see how an agent can recharge and consume this battery.
	
	Suppose the current battery state is $e(y)$. To recharge, the agent ``hunts'' its environment for a source of candidate strings to compress. After finding a fresh string $x$ with $|x|\le |y|$, the agent ``eats'' it by reversibly swapping $x$ with a substring of $y$ in the battery. Then, the agent ``digests'' the string by applying $g$. In the case where $|x|=|y|$, the resulting state transformation is summarized as follows:
	\[e(y)\xrightarrow{\mathrm{eat}}e(x)\xrightarrow{\mathrm{digest}}e(f(x))\]
	
	The (hopefully compressed) output $f(x)$ is treated as waste, to be swapped out in the next application of this procedure. If $f$ successfully compresses $x$, the battery's charge increases; in the worst case, it decreases by up to $c$. Hence, the agent's performance depends on how well $f$ compresses the strings that it finds. The algorithmic entropy corresponds to ideal compression, in the sense that $K(x) < |f(x)| + O(1)$ \footnote{Strictly speaking, since the domain of $f$ is not restricted to a prefix-free language, $K$ here must refer to the plain algorithmic entropy, not the prefix-free version \cite[\S2]{li2019introduction}. ``$O(1)$'' represents an additive constant that depends on $f$ but not on $x$. While one may be tempted to cheat this bound by making the constant very large, that would require a very complex $f$, which must itself be physically encoded somewhere. In that case, it's more instructive to view the compression as acting jointly on the pair: $(f,x)\mapsto (f,f(x))$.}.
	
	The battery's charge can be consumed to perform various manipulations. For example, to clear a variable in some memory, we simply swap it with a prefix of the battery's bank of zeros, simultaneously incrementing the indicator $|x|$ by the length of the swapped strings. To reproduce, the agent first clears a section of its environment in this way. Then, it reversibly XOR-assigns a copy of itself onto this section. More generally, the battery can be used to emulate non-injective (i.e., many-to-one) computations as bijections. This is necessary for error correction, e.g., to repair physical damage to the agent, as multiple possibly-``bad'' states must be mapped to a smaller set of ``good'' states. As we saw in \Cref{sec:applandauer}, an information battery is also helpful for sensing and irreversible computing.
	
	We close with a general comment on the thermodynamics of life. Since the ability to sense, repair, and reproduce is so closely tied to negentropy (i.e., compressibility), life has a natural drive to collect as much negentropy as possible. Ideally, negentropy would be reversibly collected from the environment. However, from a living agent's perspective, the environment's negentropy is an economic \textbf{externality} for which the agent does not suffer fully. When presented with an opportunity to take negentropy, even at the cost of destroying negentropy elsewhere, the agent is driven to take advantage of it; and the higher the potential for gain, the higher the tolerance for waste. Thus, it's quite natural for life to produce a lot of entropy, \emph{not} out of an explicit drive to do so, but as a result of inefficient efforts to acquire more negentropy for itself.
	
	Advanced life can cooperate to internalize, or otherwise control, some of these externalities. To cite a famous example, multi-cellular organisms have evolved mechanisms to resist the evolution of cancer cells, whose short-term success would ultimately destroy the organism \citep{casas2011cancer}. As a larger-scale example, human societies negotiate systems of law and ethics.
	
	\subsection{The Paradoxes of Time}
	\label{sec:appparadox}
	
	Arguably, the air of mystery surrounding the arrow of time is owed to the abundance of emergent phenomena that appear to conflict with the underlying physics. While these \textbf{paradoxes} present no formal contradiction, they nonetheless demonstrate cases where our intuitive conclusions go astray, implying gaps in our understanding. Without making any claims to rigor, we now make an effort to mend these gaps by examining five such paradoxes.
	\begin{enumerate}
		\item \textbf{Boltzmann brains:} in the vast majority of Universe configurations consistent with our current thoughts and senses, we are nothing more than ephemeral thermal fluctuations.
		\item \textbf{Loschmidt's paradox:} the laws of physics have CPT symmetry, and yet we experience asymmetries with respect to time reversal, most notably the second law of thermodynamics.
		\item \textbf{Information acquisition:} the negentropy decreases with time, and yet we seem to learn more as time passes; typically, we only know about an event or discovery \emph{after} it occurs.
		\item \textbf{Darwinian evolution:} life seems to prefer order, collecting negentropy for itself, and yet proceeds in the temporal direction that favors \emph{decreasing} negentropy.
		\item \textbf{Dualism / free will:} we feel free to make the choices we desire, despite being subject to laws of physics that are either deterministic or probabilistically determined.
	\end{enumerate}
	
	Earlier in this article, we resolved the first two paradoxes. The Boltzmann brain paradox's resolution in \Cref{sec:appinduction} justifies the use of probabilities in a Past Hypothesis. A suitable Past Hypothesis then breaks the symmetry between past and future, resolving Loschmidt's paradox. By building on this idea, SPCAs model the emergent thermodynamic and causal asymmetries associated with time, as summarized at the end of \Cref{sec:spca}.
	
	Now, we turn to the information acquisition paradox. To be concrete, ``information'' in this context refers to either of two distinct concepts: (1) knowledge about a distant event in the past or present, as quantified by the \emph{mutual information}; and (2) abstract knowledge obtained by a difficult computation, as quantified by Bennett's \emph{logical depth}. It's no coincidence that these concepts correspond precisely to two of the three mechanisms for ``hiding'' entropy (and negentropy), listed at the end of \Cref{sec:hdiscrete}. They represent a form of negentropy that's almost lost, in a sense.
	
	For example, consider the acquisition of mutual information. Sensational news spreads like wildfire, its content being copied onto numerous listeners' memories, perhaps by XOR-assigning onto blank sections. The negentropy of a blank section provides the raw resource needed to collect memories. Although the copying is theoretically reversible, to uncopy requires careful coordination with the memory's original source. In practice, such opportunities are scarce (cf. the sensing example at the end of \Cref{sec:applandauer}), leaving dissipation as the primary means of clearing a memory for future use. 
	
	The same is true for logical depth: while reversible computing methods can in principle ``uncompute'' a deep string, to do so requires careful coordination as well as a repeated computation. In most settings, the only practical option is to dissipate.
	
	Note that all we've done here is explain why information acquisition requires (and typically consumes) negentropy, \emph{assuming} it operates along the arrow of time. The latter assumption was implicit in the need to use the macroscopically reversible XOR-assignment operation (e.g., using \Cref{thm:equivalence1}). The fact that information acquisition does indeed follow the arrow of time was previously addressed in \Cref{sec:intromemory} for the mutual information, and in \Cref{sec:relatedtime} for the logical depth.
	
	The latter argument about logical depth also helps explain the paradox of Darwinian evolution. Modern genomes are logically deep, as a result of billions of years optimizing for survival. Assuming the Universe has shallow initial conditions and Markovian dynamics, Bennett's slow growth law \citep{bennett1988logical} therefore implies that life must have had a long history.
	
	In addition, evolution's two primary mechanisms seem to depend on the dynamics being Markovian and spacetime-homogeneous: (1) genetic mutations must be sampled according to a fairly consistent distribution, and (2) natural selection must progress toward fairly consistent objectives. As a thought experiment, consider the trajectory of the world's first microorganism, during the time interval $[0,100]$. Let $A_t$ be the event that the organism is alive at time $t$. Then, the probability of survival over the whole interval is
	\begin{equation}
		\label{eq:abiogenesis}
		\prob(A_0,A_1,\ldots,A_{100}) = \prob(A_0)\prod_{t=0}^{99}\prob(A_{t+1}\mid A_t).
	\end{equation}
	
	Technically, it's equally valid to decompose this expression in terms of backward probabilities $\prob(A_{t-1}\mid A_t)$; however, the latter are inhomogeneous and may conspire in strange ways. The forward probabilities $\prob(A_{t+1}\mid A_t)$, being derived from homogeneous dynamics, are much easier to estimate. Now, suppose the organism is built to survive ``forward in time''. To compensate for damage incurred from noisy interactions with the environment, the organism may consume negentropy found in the environment to perform error correction. Thus, the probabilities $\prob(A_{t+1}\mid A_t)$ are large, and overall the trajectory is not too much less likely than the abiogenesis event $A_0$.
	
	On the other hand, suppose the organism is built to survive ``backward in time''. One might imagine performing the same error connection procedure to sustain its existence, backward from $t=100$ to $t=0$. To estimate such a trajectory's probability, we still use \Cref{eq:abiogenesis}. Seen forward in time, the error correction becomes error \emph{production}. That is, the agent becomes self-destructive, while the noisy environment must conspire to undo the destruction. Naturally, $\prob(A_{t+1}\mid A_t)$ becomes very low. The situation is even worse if the organism reproduces, requiring the initial event $A_0$ to account for the simultaneous appearance of multiple copies.
	
	In hindsight, paradoxes 3 and 4 arise from attempts to explain all asymmetries in terms of the second law of thermodynamics. Instead, the SPCA model suggests taking the local homogeneous Markovian transitions as fundamentally responsible, for both the thermodynamic and psychological arrows of time. Framed this way, it's no longer surprising to find that the arrows align.
	
	Finally, let's address the nature of free will. There is an ancient idea in philosophical tradition called \textbf{mind-body dualism}; some versions propose that minds are capable of causally influencing the future, along mechanisms independent from the rest of physics. Although this view has largely fallen out of favor, a pragmatic form of dualism persists in some of the most useful models of agent (e.g., human or robot) behavior. Why do such abstractions have explanatory power in the physical Universe?
	
	For example, \textbf{causal decision theory (CDT)} \citep{gibbard1978counterfactuals} models the relevant set of variables on a causal graph. Vertices corresponding to unintelligent physical processes are modeled in the usual probabilistic manner, conditional on their parents. However, CDT gives special treatment to \textbf{decision nodes}, which are vertices whose realization is \emph{chosen} by intelligent agents. While other vertices have their dynamics endogenized and made immutable, each decision node is treated as exogenous and associated with some agent. To derive the node's value, one must consider the counterfactual result of each possible value, and select one that optimizes the agent's objective.
	
	Decision nodes appear to serve as a useful shortcut to whatever physical processes really underlie an agent's behavior. In order for them to work, it should be the case that Darwinian evolution creates agents that behave according to CDT. This seems plausible: among all possible agents, the one whose behavior follows CDT makes the optimal counterfactual come to fruition, and therefore is naturally selected for. Sufficiently intelligent life can compute CDT, mentally simulating each counterfactual to choose the preferred one. Such an agent would likely appear to us to have free will.
	
	Two remarks are in order. Firstly, structural counterfactuals can also model settings without intelligent agents: in \Cref{sec:introcounterfactual}, a local model used counterfactuals to account for an exogenous influence. Secondly, our argument for CDT here is not sufficiently rigorous to verify in full generality. Future work should investigate the precise conditions under which counterfactuals and free will, as modeled by CDT, arise as emergent features of an SPCA-like Universe.
	
	Indeed, there may exist important counterexamples. \citet{yudkowsky2017functional} consider a class of \emph{Newcomb-like} problems, in which causal decision theorists tend to underperform. In these multi-agent scenarios, each agent's decision depends on its prediction of other agents' decisions. The authors provide some plausible human examples, but their concerns are sharpened in the context of advanced artificial intelligence.
	
	For example, suppose we design a highly intelligent robot. Before letting it roam the world freely, we decide to test its trustworthiness inside a simulation. The robot, having deduced this, must at all times consider the possibility that it might currently be in a simulation. In order to survive, its behavior should appear favorable to the simulation's human operator. Arguably, the simplest way for a causal model to account for such an influence is with an edge directed backward in time: even after the robot is free, its decisions can affect whether the robot attains freedom!
	
	By developing similar thought experiments in more detail, \citeauthor{yudkowsky2017functional} find CDT to be inconsistent in the following sense: given the ability to do so, a causal decision theorist would prefer to rewire its own brain so as to cease to follow CDT. The authors' proposed alternative, \textbf{functional decision theory (FDT)}, is still based on Pearlean causal graphs, but its edges need not always align with the physical arrow of time.
	
	If this seems absurd, one should bear in mind that neither CDT nor FDT are written into the known laws of physics. CDT already depends on counterfactuals which, if taken literally, constitute violations of physics. In our view, decision theories should primarily be judged on their ability to parsimoniously predict the behavior of Darwinian evolution's winners. On the road toward a formal justification of any decision theory, we propose that SPCAs, as well as the thought experiments of \citeauthor{yudkowsky2017functional}, can serve as helpful conceptual tools.
	
	
	\section{Conclusions}
	\label{sec:conclusions}
	
	In order to model the emergence of irreversibility, we introduce the stochastic partitioned cellular automaton (SPCA). Its microscopic description shares many properties with classical dynamical systems: it is deterministic, chaotic, reversible, and measure-preserving. Its trajectory is determined by a choice of initial condition (i.e., Past Hypothesis); this is where the symmetry is broken.

	As a result, we find a macroscopic view that is local, homogeneous in space and time, and Markovian, but only when its dynamics are described along an arrow of time that is directed away from the initial condition. The conditions of being local and Markovian are precisely summarized by Pearl's $d$-separation criterion, relative to a causal graph whose edges are timelike and future-directed. Thus, an SPCA is a Pearlean causal model with a certain homogeneous structure.
	
	The microscopic and macroscopic views inherit features from one another. For example, if the microscopic dynamics preserve the measure $\pi\times\Gamma$, then $\pi$ is stationary for the macroscopic dynamics. In addition, we can take advantage of the microscopic description's reversibility to extrapolate an SPCA to negative times; a duality relation holds between the macroscopic transition probabilities at positive and negative times.
	
	Within the setting of SPCAs, we study the dynamical evolution of entropy, negentropy, and mutual information. Our findings are summarized by the Resource and Memory Laws, which generalize the second law of thermodynamics. In particular, the Memory Law plays a role in explaining records of past events, as well as in extending Landauer's principle.
	
	As a mathematical convenience, we rely heavily on the Shannon theory of information. On the other hand, we find that entropy is more generally and concretely understood in algorithmic terms. Just as the opposite of entropy is negentropy, the opposite of algorithmic entropy is algorithmic compressibility. Thus, underlying all physical notions of free energy, we propose that the fundamental information-theoretic resource is compressibility. To support this hypothesis, we describe a rudimentary agent powered by algorithmic string compression.
	
	The empirical side of SPCAs remains unexplored. In future work, one might design SPCAs with convenient properties, such as computational universality or a sparse causal structure. By simulating these, one might demonstrate aspects of irreversible phenomena, such as memory acquisition or Darwinian evolution. Conjectures derived from simulations may later be tested in the real world or proved in general.
	
	We hope that such investigations will yield further insight on a variety of topics, potentially ranging from transistor design to the philosophy of mind. To give one specific possibility, consider a binary counter programmed to reversibly increment at fixed time intervals. If initialized to zero, it provides an accurate measurement of elapsed time, consuming no more negentropy than is needed to store a binary representation of the number of ticks. If such a counter can be implemented in the real world, it would represent a drastic efficiency gain over the thermodynamic clocks of \citet{pearson2021measuring}. On the other hand, if such a counter turns out to be infeasible, understanding the limitation can lead to better models.
	
	There are many interesting differences between SPCAs and modern field theories, and it remains an open question to what extent they affect the conclusions of this article. Differences worth examining include conservation laws, continuous spacetimes, quantum information, and more general forms of microscopic dynamics. Perhaps analogous laws will be discovered in each of these settings, but with additional nuances.
	
	Finally, our conceptual tools have the potential to impact other fields. Part of the SPCA's modeling power comes from its ability to endogenize information-processing agents, such as human experimenters. We've only briefly outlined the implications of this approach. In future work, endogenous agents can be modeled in full detail; this may lead to additional insights on all the topics we touched, from reversible computing to decision theory.
	
	\begin{acknowledgements}
		
		I am grateful to everyone with whom I've had the chance to talk. Deep inquiries from Jason Li helped to refine some key arguments. Xianda Sun's review of the manuscript pointed out many passages in need of clarification. Thanks also to Danica Sutherland, David Wakeham, Paul Liu, and William Evans for related conversations and advice.
		
	\end{acknowledgements}
	
	\appendix
	
	\section{Proof of Theorem 3}
	\label{sec:appendixproof}
	
	Before proving \Cref{thm:equivalence2}, let's develop a strategy to address its most challenging claim:
	
	\emph{If $p\in\mathrm{SM}_\pi(\mathcal S)$, with $\pi$ strictly positive and finite-sided, then $T$ can be chosen to be bijective and $(\pi\times\Gamma)$-measure-preserving}.
	
	The argument is more straightforward when $\pi=\sharp$. By \Cref{thm:equivalence1}, the dynamics $p\in\mathrm{SM}_\sharp(\mathcal S)=\mathrm{DM}(\mathcal S)$ can be represented by a sequence of i.i.d. random bijections $(F_0,F_1,\ldots)$, with $F_t$ being applied at time step $t$. By absorbing this sequence into the initial state $C_0$, the dynamics can be made deterministic. To be specific, the dynamics interleave two steps: (1) the bijection $(c,f)\mapsto (f(c),f)$, applied to the pair consisting of the macrostate and first bijection; and (2) the shift map, applied to the sequence of bijections to move its next element into place.
	
	More generally, we are interested in $p\in\mathrm{SM}_\pi(\mathcal S)$ with $\pi\in(\frac 1m\mathbb N)^\mathcal S$. In order to obtain random bijections using \Cref{thm:equivalence1}, we need a related Markov chain that is doubly stochastic. Its construction is fairly intuitive: we think of the positive integer $m\cdot\pi(s)$ as the ``size'' of each state $s\in\mathcal S$. We split $s$ into unit-sized pieces, such that any time spent in $s$ would instead be uniformly distributed among its pieces. The result is our desired dynamics $p_\mathrm{dup}\in\mathrm{DM}(\mathcal S_\mathrm{dup})$, on a new state space $\mathcal S_\mathrm{dup}$, consisting of $m\cdot\pi(s)$ duplicates of each $s\in\mathcal S$.
	
	\begin{definition}
		\label{def:dup}
		Fix $m\in\mathbb N$ and $\pi\in(\frac 1m\mathbb N)^\mathcal S$. The $m\pi$\textbf{-duplication} of the Markov chain parameters $(\mathcal S,\mu,p)$, with $p\in\mathrm{SM}_\pi(\mathcal S)$, is the triple  $(\mathcal S_\mathrm{dup},\mu_\mathrm{dup},p_\mathrm{dup})$, defined by
		\begin{align*}
			\mathcal S_\mathrm{dup} &:= \{(s,i):\; s\in\mathcal S,\;i\in\mathbb Z_{m\pi(s)}\},
			\\\mu_\mathrm{dup}((s,i)) &:= \frac{\mu(s)}{m\cdot\pi(s)}\quad\text{for }(s,i)\in\mathcal S_\mathrm{dup},
			\\p_\mathrm{dup}((s,i),(s',i')) &:= \frac{p(s,s')}{m\cdot\pi(s')}\quad\text{for }(s,i),(s',i')\in\mathcal S_\mathrm{dup}.
		\end{align*}
	\end{definition}
	
	It's straightforward to verify that $\Phi_\mathrm{matrix}(\mu, p)$ and $\Phi_\mathrm{matrix}(\mu_\mathrm{dup}, p_\mathrm{dup})$ (with the corresponding state spaces $\mathcal S$ and $\mathcal S_\mathrm{dup}$ implied) are identically distributed, if elements of $\mathcal S_\mathrm{dup}$ are identified with corresponding elements of $\mathcal S$. Furthermore, $p_\mathrm{dup}\in\mathrm{DM}(\mathcal S_\mathrm{dup})$, because
	\begin{align*}
		\sum_{(s,i)\in\mathcal S_\mathrm{dup}} p_\mathrm{dup}((s,i),(s',i'))
		&= \sum_{(s,i)\in\mathcal S_\mathrm{dup}} \frac{p(s,s')}{m\cdot\pi(s')}
		\\&= \sum_{s\in\mathcal S} m\cdot\pi(s)\cdot\frac{p(s,s')}{m\cdot\pi(s')}
		\\&=  \frac{\sum_{s\in\mathcal S}\pi(s)\cdot p(s,s')}{\pi(s')}
		\\&= \frac{\pi(s')}{\pi(s')}
		\\&= 1,\text{ and}
		\\\sum_{(s',i')\in\mathcal S_\mathrm{dup}} p_\mathrm{dup}((s,i),(s',i'))
		&= \sum_{(s',i')\in\mathcal S_\mathrm{dup}} \frac{p(s,s')}{m\cdot\pi(s')}
		= \sum_{s'\in\mathcal S} p(s,s')
		=1.
	\end{align*}
	
	With this construction in hand, we are ready to prove \Cref{thm:equivalence2}.
	
	\begin{proof}
		
		Uniqueness of $p$ follows as in the proof of \Cref{thm:equivalence1}. To prove the first implication, fix $(\mathcal R,\mathcal G,\Gamma,T)\in\mathrm{Dyn}(\mathcal S)$. For $s,s'\in\mathcal S$, define the sets
		\begin{align*}
			A(s,s') &:= \{a\in\mathcal R:\; \exists a'\in\mathcal R,\; T(s,a)=(s',a')\},
			\\\text{and let }p(s,s') &:= \Gamma(A(s,s')).
			\\\text{Then, }\sum_{s'\in\mathcal S}p(s,s') &= \Gamma\left(\bigcup_{s'\in\mathcal S} A(s,s')\right) = \Gamma(\mathcal R) = 1,
		\end{align*}
		
		because the sets $A(s,s')$, with fixed $s$, form a partition of $\mathcal R$. Hence, $p\in\mathrm{SM}(\mathcal S)$. Furthermore, if $T$ is $(\pi\times\Gamma)$-measure-preserving, then
		\begin{align*}
			\sum_{s\in\mathcal S}\pi(s) p(s,s')
			&= \sum_{s\in\mathcal S} (\pi\times\Gamma)(\{s\}\times A(s,s'))
			\\&= (\pi\times\Gamma) \left(\bigcup_{s\in\mathcal S}(\{s\}\times A(s,s'))\right)
			\\&= (\pi\times\Gamma) \left(\{(s,a)\in\mathcal S\times\mathcal R:\; \exists a'\in\mathcal R,\; T(s,a)=(s',a')\}\right)
			\\&= (\pi\times\Gamma) (T^{-1}({\{s'\}\times\mathcal R}))
			\\&= (\pi\times\Gamma) (\{s'\}\times\mathcal R)
			\\&= \pi(s')\Gamma(\mathcal R)
			\\&= \pi(s'),
		\end{align*}
		
		so $p\in\mathrm{SM}_\pi(\mathcal S)$. In addition, if $\Gamma$ is $m$-sided, then so is $p$ by its definition.
		
		Now, fix $\mu\in\pmeas(\mathcal S)$. Let $(\mathbf C,\mathbf R)$ be as in \Cref{def:markovd}, so that $\prob_{\mathbf C}=\Phi_\mathrm{determ}(\mu,\mathcal R,\mathcal G,\Gamma,T)$. By definition, $\prob_{C_0}=\mu$, so \Cref{eq:markovm1} holds. Furthermore, since $R_{t,0} = R_{0,t}$ is independent of $C_{\le t}$,
		\begin{equation*}
			\prob(C_{t+1}=c_{t+1} \mid C_{\le t}=c_{\le t})
			= \prob(R_{0,t} \in A(c_t,c_{t+1}))
			= \Gamma(A(c_t,c_{t+1}))
			= p(c_t,c_{t+1}),
		\end{equation*}
		
		so \Cref{eq:markovm2} holds as well. Therefore, $\Phi_\mathrm{matrix}(\mu, p) = \Phi_\mathrm{determ}(\mu,\mathcal R,\mathcal G,\Gamma,T)$.
		
		To prove the converse, fix $p\in\mathrm{SM}(\mathcal S)$. Define $T:\mathcal S\times\mathcal S^\mathcal S\rightarrow\mathcal S\times\mathcal S^\mathcal S$ by
		\[T(s,f):=(f(s),f).\]
		
		We must verify that $T$ is $(\powerset{\mathcal S}\times\mathcal G)$-measurable, where the implied $\sigma$-algebra $\mathcal G\subset\powerset{\mathcal S^\mathcal S}$ is generated by cylinder sets of the form
		\[B=\{f\in\mathcal S^\mathcal S:\;f(s_1)=s_1',\,\ldots,\,f(s_n)=s_n'\}\in\mathcal G,\]
		
		with $n\in\mathbb N$ and $s_i,s_i'\in\mathcal S$. The product $\sigma$-algebra $\powerset{\mathcal S}\times\mathcal G$ is generated by sets of the form $\{s_0'\}\times B$. Measurability now follows by observing that the preimage
		\[T^{-1}(\{s_0'\}\times B)
		= \bigcup_{s_0\in\mathcal S}\left(\{s_0\}\times\{f\in\mathcal S^\mathcal S:\;f(s_0)=s_0',\,f(s_1)=s_1',\,\ldots,\,f(s_n)=s_n'\}\right)\]
		is itself a countable union of cylinder sets, hence is in $\powerset{\mathcal S}\times\mathcal G$.
		
		By \Cref{thm:equivalence1}, there exists $\Gamma\in\pmeas(\mathcal S^\mathcal S)$ such that $\Phi_\mathrm{matrix}(\cdot, p) = \Phi_\mathrm{random}(\cdot,\Gamma)$. Now, $(\mathcal S^\mathcal S,\mathcal G,\Gamma,T)\in\mathrm{Dyn}(\mathcal S)$ and, since
		\[A(s,s') = \{f\in\mathcal S^\mathcal S:\; \exists f'\in\mathcal S^\mathcal S,\; T(s,f)=(s',f')\} = \{f\in\mathcal S^\mathcal S:\; f(s)=s'\},\]
		
		we have $\Gamma(A(s,s_i)) = p(s,s_i)$. This equality implies, by the same argument used in the forward implication, that $\Phi_\mathrm{matrix}(\cdot, p) = \Phi_\mathrm{determ}(\cdot,\mathcal S^\mathcal S,\mathcal G,\Gamma,T)$.
		
		In the case where $p$ is $m$-sided, the $\Gamma$ that we obtained using \Cref{thm:equivalence1} can be taken to be $m$-sided. That is, $\Gamma$ is the uniform distribution over some multiset $\{f_0,f_1,\ldots,f_{m-1}\}$, which becomes the fair $m$-sided die after encoding each function $f_i$ by its index $i$. In terms of this encoding, the map $T:\mathcal S\times\mathbb Z_m\rightarrow\mathcal S\times\mathbb Z_m$ becomes
		\[T(s,i) := (f_i(s),i).\]
		
		If we also have $p\in\mathrm{DM}(\mathcal S)$, \Cref{thm:equivalence1} allows choosing the functions $f_i$ to be bijections, making $T$ bijective as well.
		
		Finally, we address the difficult case, where $p\in\mathrm{SM}_\pi(\mathcal S)$ for some $\pi\in(\frac 1m\mathbb N)^\mathcal S$. Consider the $m\pi$-duplicated dynamics $p_\mathrm{dup}\in\mathrm{DM}(\mathcal S_\mathrm{dup})$; by \Cref{thm:equivalence1}, its random function presentation $\Gamma$ can be taken to be supported on $\mathrm{Bij}(\mathcal S_\mathrm{dup})$. Now for each $s\in\mathcal S$, split the interval $[0,1)$ into $m\cdot\pi(s)$ equal-sized sub-intervals $I_{s,i}:=[\frac {i}{m\cdot\pi(s)},\frac {i+1}{m\cdot\pi(s)})$, one for each duplicate $(s,i)\in\mathcal S_\mathrm{dup}$ of $s$. For each $(s,i)\in\mathcal S_\mathrm{dup}$ and $f\in\mathrm{Bij}(\mathcal S_\mathrm{dup})$, denote $(s',i'):=f((s,i))$ to define the bijection
		
		\begin{align*}
			&T_{s,i,f}:\{s\}\times I_{s,i}\times\{f\}
			\rightarrow \{s'\}\times I_{s',i'}\times\{f\} \quad\text{by}
			\\&T_{s,i,f}\left(s,\,\frac {i+r}{m\cdot\pi(s)},\,f\right)
			:= \left(s',\; \frac {i'+r}{m\cdot\pi(s')},\; f\right)\;\;\forall r\in\left[0,1\right).
		\end{align*}
		
		The collection of $T_{s,i,f}$'s have disjoint domains and disjoint ranges. By joining them all together, we obtain one $(\pi\times\lambda\times\Gamma)$-measure-preserving bijection
		\[T:\mathcal S\times [0,1)\times\mathrm{Bij}(\mathcal S_\mathrm{dup})\rightarrow\mathcal S\times [0,1)\times\mathrm{Bij}(\mathcal S_\mathrm{dup}).\]
		
		Let $\mathcal B\subset\powerset{[0,1)}$ denote the Borel $\sigma$-algebra generated by the subintervals of $[0,1)$, and let $\mathcal G\subset\powerset{\mathrm{Bij}(S_\mathrm{dup})}$ be generated by the cylinder sets. It remains to show that, for all $\mu\in\pmeas(\mathcal S)$, \[\Phi_\mathrm{matrix}(\mu,\, p) = \Phi_\mathrm{determ}(\mu,\,[0,1)\times\mathrm{Bij}(\mathcal S_\mathrm{dup}),\,\mathcal B\times\mathcal G,\,\lambda\times\Gamma,\,T).\]
		
		As usual, this amounts to verifying the state transition probabilities. For the transition matrix presentation, they are given simply by $p(s,s')$. For the deterministic function presentation, a metaphorical die $(r,f)\in[0,1)\times\mathrm{Bij}(\mathcal S_\mathrm{dup})$ is sampled from $\lambda\times\Gamma$, and plugged into $T$. Then, the state $s$ transitions to $s'$ iff $T(s,r,f) = (s',r',f)$ for some $r'$. The probability of this event is
		\begin{align*}
			&(\lambda\times\Gamma)\left(\left\{(r,f):\; \exists r'\in[0,1),\; T(s,r,f) = (s',r',f)\right\}\right)
			\\&= \sum_{i\in\mathbb Z_{m\pi(s)}} \lambda\left(\Big[\frac{i}{m\pi(s)},\frac{i+1}{m\pi(s)}\Big)\right)\cdot \Gamma\left(\left\{f:\; \exists r'\in[0,1),\; T(s,i/m\pi(s),f) = (s',r',f)\right\}\right)
			\\&= \frac{1}{m\pi(s)}\sum_{i\in\mathbb Z_{m\pi(s)}} \Gamma\left(\left\{f:\; \exists r'\in[0,1),\; T(s,i/m\pi(s),f) = (s',r',f)\right\}\right)
			\\&= \frac{1}{m\pi(s)}\sum_{i\in\mathbb Z_{m\pi(s)}}\sum_{i'\in\mathbb Z_{m\pi(s')}} \Gamma\left(\left\{f:\; T(s,i/m\pi(s),f) = (s',i'/m\pi(s'),f)\right\}\right)
			\\&= \frac{1}{m\pi(s)}\sum_{i\in\mathbb Z_{m\pi(s)}}\sum_{i'\in\mathbb Z_{m\pi(s')}} \Gamma\left(\left\{f:\; f((s,i)) = (s',i')\right\}\right)
			\\&= \frac{1}{m\pi(s)}\sum_{i\in\mathbb Z_{m\pi(s)}}\sum_{i'\in\mathbb Z_{m\pi(s')}} p_\mathrm{dup}((s,i),\,(s',i'))
			\\&= \frac{1}{m\pi(s)} \cdot m\pi(s) \cdot m\pi(s') \cdot \frac{p(s,s')}{m\pi(s')}
			\\&= p(s,s').
		\end{align*}
	\end{proof}
	
	\section{Breaking the Symmetry Between Space and Time}
	\label{sec:appendixspacetime}
	
	An SPCA's deterministic presentation provides microscopic dynamics that are symmetric between past and future, but asymmetric between space and time. Throughout this article, we've taken it for granted that influences travel along the timelike edges of $\mathrm{G}(\mathcal X,\mathcal T)$, and sought only to orient these edges forward or backward by imposing an appropriate Past Hypothesis. There was never any question of a ``sideways'' arrow of time because, by design, influences were not allowed to travel along spacelike paths.
	
	On the other hand, in the theory of general relativity, Einstein's field equations treat spacetime as a four-dimensional Lorentzian manifold. While the roles of space and time are not \emph{completely} symmetric in this theory, the distinction between the two is more subtle. It's natural to wonder why causality, in the Pearlean sense, is timelike.
	
	In this appendix, we propose that the initial condition serves to break the symmetry, not only between past and future, but also between space and time. To demonstrate this, we construct a new symbolic dynamics which, unlike an SPCA's, treats all the axes symmetrically. The time axis is only identified by the initial condition. We will find that the resulting model is Markov, relative to a causal graph whose edges are directed along this axis.
	
	For concreteness, fix the \textbf{dimension} $d\in\mathbb N$ of our discrete spacetime $\mathbb Z^d$. For $i\in\{1,\ldots,d\}$, let $e_i\in\mathbb Z^d$ denote the $i$'th basis vector, i.e., $(e_i)_j := \delta_{ij}$. Let $\mathbf 0$ denote the zero vector. Let
	\[\mathcal N_1 := \{e_1,-e_1,e_2,-e_2,\ldots,e_d,-e_d\},\]
	
	consist of all unit-length vectors in $\mathbb Z^d$, and $\mathcal N := \mathcal N_1\cup\{\mathbf 0\}$. Given a set $\mathcal S$, an \textbf{$\mathcal S$-relation} is any subset $P\subset \mathcal S^\mathcal N$. We say $P$ is \textbf{deterministic} if, for all $n\in\mathcal N_1$ and $g:\mathcal N\setminus \{n\}\rightarrow\mathcal S$, the following equivalent conditions hold:
	\begin{itemize}
		\item There exists a unique $f\in P$ whose restriction to $\mathcal N\setminus \{n\}$ equals $g$.
		\item There exists a unique $s\in\mathcal S$ such that the function $f := g_{n\leftarrow s}$ is in $P$.
	\end{itemize}
	
	For example,
	\begin{equation}
		\label{eq:relmod}
		P_{\mathrm{mod}\,m} := \left\{f\in(\mathbb Z_m)^\mathcal N:\; \sum_{n\in\mathcal N}f(n) \equiv 0\text{ (mod }m)\right\}.
	\end{equation}
	
	is a deterministic $\mathbb Z_m$-relation, because each $f\in P_{\text{mod }m}$ is uniquely determined by specifying $f$ at all but one point of its domain $\mathcal N$. We denote the set of all deterministic $\mathcal S$-relations by $\mathrm{Rel}(\mathcal S)\subset\powerset{\mathcal S^\mathcal N}$. As the name suggests, a deterministic relation can take the role of a system's dynamics, to evolve it uniquely from some initial condition.
	
	Our system is a kind of second-order cellular automaton. Begin by fixing a time axis $i\in\{1,\ldots,d\}$; in other words, for arbitrary spacetime coordinates $x\in\mathbb Z^d$, $x_i$ takes the role of time. We set initial conditions that specify the state at not just one, but two adjacent times, say $x_i=0$ and $x_i=1$. That is, an initial state is specified at each point of the \textbf{initial set}
	\begin{align}
		\label{eq:initialset}
		I:=\{x\in\mathbb Z^d:\; x_i\in\{0,\,1\}\}.
	\end{align}
	
	Given an initial condition $U:I\rightarrow\mathcal S$ and a deterministic relation $P\in\mathrm{Rel}(\mathcal S)$, there exists a unique configuration $\mathbf C:\mathbb Z^d\rightarrow\mathcal S$ satisfying
	\begin{align}
		\label{eq:sym1}
		C_x &= U_x, &&\forall x\in I,
		\\\label{eq:sym2}
		(n\mapsto C_{x+n})&\in P &&\forall x\in\mathbb Z^d.
	\end{align}
	
	To see this, note that \Cref{eq:sym1} determines the state at times $0$ and $1$. Since $P$ is deterministic, \Cref{eq:sym2} uniquely determines $C_{x+e_i}$ as a function of the state at all other neighbors of $x$. Therefore, given the states at times $t-1$ and $t$, applying \Cref{eq:sym2} at all $x\in\mathbb Z^d$ with $x_i=t$ uniquely determines the states at time $t+1$. By induction on $t$, we conclude that for all $x\in\mathbb Z^d$ with $x_i\ge 1$, $C_x$ is determined and \Cref{eq:sym2} is satisfied. The case $x_i \le 0$ is symmetric, since $C_{x-e_i}$ is also uniquely determined as a function of all the other neighbors.
	
	At this point, we make two remarks. Firstly, the only place where $i$ entered \Cref{eq:sym1,eq:sym2} was in the definition of the initial set $I$. Thus, it is the initial condition, not the dynamics, that gave time its distinguished role. Secondly, we can allow $U$ to be randomly distributed. For every realization of $U$, \Cref{eq:sym1,eq:sym2} uniquely determine a realization of $\mathbf C$; thus, the random variable $\mathbf C$ is uniquely determined.
	
	Mirroring our work on SPCAs, let's now enhance the state with microscopic variables $\mathbf R$: at every spacetime coordinate $x\in\mathbb Z^d$, let the complete state at $x$ be given by $(C_x,R_x)\in\mathcal S\times(\mathbb Z_m)^{\mathbb Z^d}$. The initial microscopic variables $R_I=(R_{x,y})_{(x,y)\in I\times\mathbb Z^d}$ are uniform and i.i.d., while the macroscopic variables $C_I$ are drawn from an arbitrary distribution $\mu$.
	
	\begin{definition}
		For fixed $\mathcal S$ and $d,i,m\in\mathbb N$ with $i\le d$, let $I$ be given by \Cref{eq:initialset}. Define the map
		\[\Phi_\mathrm{rel\text-determ}:\pmeas(\mathcal S^I)\times\mathrm{Rel}(\mathcal S\times\mathbb Z_m) \rightarrow \pmeas(\mathcal S^{\mathbb Z^d})\]
		
		as follows. Consider any \textbf{initial condition} $\mu\in\pmeas(\mathcal S^I)$ and \textbf{macroscopic dynamics} $T\in\mathrm{Rel}(\mathcal S\times\mathbb Z_m)$. Let the random variables $(\mathbf C,\mathbf R)=(C_x,R_x)_{x\in\mathbb Z^d}$ be distributed such that
		\begin{align}
			\label{eq:symicro1}
			\prob_{(C_I,\,R_I)} &= \mu\times\left(\frac 1m\sharp\right)^{I\times\mathbb Z^d},
			\\\label{eq:symicro2}
			(n\mapsto R_{x+n,\,y-n})&\in P_{\mathrm{mod}\,m} &&\forall x,y\in\mathbb Z^d,y\ne\mathbf 0,
			\\\label{eq:symicro3}
			(n\mapsto (C_{x+n},\,R_{x+n,\,-n}))&\in T &&\forall x\in\mathbb Z^d,
		\end{align}
		
		with $P_{\mathrm{mod}\,m}\in\mathrm{Rel}(\mathbb Z_m)$ defined by \Cref{eq:relmod}. Then, $\Phi_\mathrm{rel\text-determ}(\mu,\,T):=\prob_{\mathbf C}$.
	\end{definition}
	
	\Cref{eq:symicro2,eq:symicro3} can be combined into a single deterministic relation constraining $(n\mapsto (C_{x+n},R_{x+n}))$, for every $x\in\mathbb Z^d$. Thus, \Cref{eq:symicro1,eq:symicro2,eq:symicro3} are a special case of \Cref{eq:sym1,eq:sym2}, from which it follows that the random variables $(\mathbf C,\mathbf R)$ are uniquely determined.
	
	Note that $\mathbf R$ has two $\mathbb Z^d$-valued subscripts: the first is a spacetime coordinate, while the second is a $d$-dimensional generalization of the SPCA shift index. $P_{\mathrm{mod}\,m}$ takes a role analogous to the shift map, putting fresh ``dice'' into place as needed. Just as in \Cref{sec:markovdet}, indices closer to $\mathbf 0$ should be considered more significant, with $T$ acting only on the most significant indices.
	
	Alternatively, we can do away with the shifting by substituting $R'_{x,\,y+x} := R_{x,\,y}$. \Cref{eq:symicro1,eq:symicro2,eq:symicro3} then reduce to the equivalent form:
	\begin{align}
		\label{eq:symicro1p}
		\prob_{(C_I,\,R'_I)} &= \mu\times\left(\frac 1m\sharp\right)^{I\times\mathbb Z^d},
		\\\label{eq:symicro2p}
		(n\mapsto R'_{x+n,\,y})&\in P_{\mathrm{mod}\,m} &&\forall x,y\in\mathbb Z^d,x\ne y,
		\\\label{eq:symicro3p}
		(n\mapsto (C_{x+n},\,R'_{x+n,\,x}))&\in T &&\forall x\in\mathbb Z^d.
	\end{align}
	
	The macroscopic component $\mathbf C$ has an arrow of time, which we will express in terms of a sparse causal graph $G$ satisfying the following:
	\begin{itemize}
		\item The vertex set is $\{O\}\cup\mathbb Z^d$, where we identify $O$ with $C_I$, and each $x\in\mathbb Z^d$ with $C_x$.
		\item With these identifications, $\mathbf C$ is Markov relative to $\mathrm G$.
		\item On each side of $I$, $\mathrm G$ and its conditional probabilities are homogeneous in spacetime.
	\end{itemize}
	
	By symmetry, we can focus on the side with positive times: that is, restrict the vertices and spacetime indices $x\in\mathbb Z^d$ to $x_i\ge 0$, and restrict \Cref{eq:symicro2,eq:symicro3} to $x_i\ge 1$. For $t,u\in\mathbb Z\cup\{-\infty,\,\infty\}$, let
	\begin{align*}
		I(t,\,u)&:=\{x\in\mathbb Z^d:\; t\le x_i\le u\},
	\end{align*}
	
	so that $I=I(0,\,1)$. We now prove that, for all $t,u\in\mathbb Z^+$ with $t<u$,
	\begin{equation}
		\label{eq:symiid}
		\prob_{\left(C_{I(0,\,u)},\,R'_{I(t,\,t+1),\,I(u,\,\infty)}\right)}
		= \prob_{C_{I(0,\,u)}} \times
		\left(\frac 1m\sharp\right)^{I(t,\,t+1)\times I(u,\,\infty)}.
	\end{equation}
	
	In other words, the random variables $R'_{x,\,y}$, with $x_i\in\{t,\,t+1\}$ and $y_i\ge u$, are uniform on $\mathbb Z_m$, i.i.d., and jointly independent of $C_{I(0,\,t)}$.
	
	Independence of $R'_{I(t,\,t+1),\,I(u,\,\infty)}$ from $C_{I(0,\,u)}$ follows from the fact that they are functions of disjoint sets of independent variables: specifically, the former is a function of $R'_{I,\,I(u,\,\infty)}$, whereas the latter is a function of $(C_I,\,R'_{I,\,I(0,\,u-1)})$. It only remains to show that
	\begin{equation}
		\label{eq:symiid2}
		\prob_{R'_{I(t,\,t+1),\,I(u,\,\infty)}}
		= \left(\frac 1m\sharp\right)^{I(t,\,t+1)\times I(u,\,\infty)}.
	\end{equation}
	
	Fix $u\ge 1$, and proceed by induction on $t$. The base case $t=0$ is an immediate consequence of \Cref{eq:symicro1p}. For the inductive case, suppose \Cref{eq:symiid2} holds for $t-1$. For all $(x,y)\in I(t,t)\times I(u,\infty)$, we have $x_i = t < u \le y_i$, hence $x\ne y$. Therefore, \Cref{eq:symicro2p} applies at $(x,y)$. Substituting into \Cref{eq:relmod}, this becomes
	\[\sum_{n\in\mathcal N}R'_{x+n,\,y} \equiv 0\text{ (mod }m).\]
	
	Solving for the new term at time coordinate $t+1$,
	\[R'_{x+e_i,\,y} \equiv -R'_{x-e_i,\,y} - \sum_{n\in\mathcal N\setminus\{e_i,-e_i\}}R'_{x+n,\,y}\text{ (mod }m).\]
	
	Since every term in the sum has time coordinate $(x+n)_i=x_i+n_i=x_i=t$, these terms are included in $R'_{I(t,t)\times I(u,\infty)}$. Conditioning on the latter, the new term $R'_{x+e_i,\,y}$ at time $t+1$ is a bijective function of the corresponding term $R'_{x-e_i,\,y}$ at time $t-1$. By the inductive hypothesis, all the terms at time $t-1$ are uniform, i.i.d., and independent of the terms at time $t$; hence, the same is true of all the terms at time $t+1$. Therefore, \Cref{eq:symiid2} holds for $t$. This completes the induction, and the proof of \Cref{eq:symiid}.
	
	To see why this matters, recall that \Cref{eq:symicro3p} lets us uniquely compute the macrostate $C_{x+e_i}$ as a function of $(C_{x+n},R'_{x+n,\,x})_{n\in\mathcal N\setminus\{e_i\}}$, one time step at a time. At time step $t$, the microscopic variables in question have subscripts
	\[\{(x+n,\,x):\; x\in I(t,\,t),\; n\in\mathcal N\setminus\{e_i\}\}
	\subset I(t-1,\,t) \times I(t,\,t).\]
	
	By \Cref{eq:symiid}, these variables are uniform on $\mathbb Z_m$, i.i.d., and independent of all prior macrostates. Therefore, $C_{x+e_i}$ has homogeneous transition probabilities in terms of its Markovian parents $(C_{x+n})_{n\in\mathcal N\setminus\{e_i\}}$. Explicitly, the required causal graph $\mathrm G$ has the initial edges $\{(O,\,x):\; x\in I(0,1)\}$, along with the additional edges
	\[\{(x+n,\,x+e_i):\; x\in I(1,\,\infty),\;n\in\mathcal N\setminus\{e_i\}\}.\]
	
	To include the negative times, simply add a symmetric set of edges:
	\[\{(x-n,\,x-e_i):\; x\in I(-\infty,\,0),\;n\in\mathcal N\setminus\{e_i\}\}.\]
	
	The edges of $\mathrm G$ characterize the arrow of time: they point away from the initial set, forward along the time axis $i$ at positive times, and backward at negative times.

	\bibliography{time}
	
\end{document}